\newtheorem{theorem}{Theorem}
\newtheorem{lemma}{Lemma}
\newtheorem{assumption}{Assumption}
\newtheorem{corollary}{Corollary}
\theoremstyle{definition}
\newtheorem{remark}{Remark}
\def\blfootnote{\gdef\@thefnmark{}\@footnotetext}
\renewcommand{\le}{\leqslant}
\renewcommand{\ge}{\geqslant}
\renewcommand{\leq}{\leqslant}
\renewcommand{\geq}{\geqslant}
\renewcommand{\emptyset}{\varnothing}
\newcommand{\ca}{\mathcal{A}}
\newcommand{\ck}{\mathcal{K}}
\newcommand{\tmod}{\ \mathsf{mod}\ }
\newcommand{\real}{\mathbb{R}}
\newcommand{\ints}{\mathbb{Z}}
\newcommand{\natu}{\mathbb{N}}
\newcommand{\bsa}{\boldsymbol{a}}
\newcommand{\bsb}{\boldsymbol{b}}
\newcommand{\bsk}{\boldsymbol{k}}
\newcommand{\bsx}{\boldsymbol{x}}
\newcommand{\bszero}{\boldsymbol{0}}
\newcommand{\bsone}{\boldsymbol{1}}
\newcommand{\bsell}{\boldsymbol{\ell}}
\newcommand{\bsalpha}{\boldsymbol{\alpha}}
\newcommand{\rd}{\,\mathrm{d}}
\newcommand{\e}{\mathbb{E}}
\newcommand{\var}{\mathrm{Var}}
\newcommand{\giv}{\!\mid\!}
\newcommand{\tran}{\mathsf{T}}
\newcommand{\walk}{\mathrm{wal}_k}
\newcommand{\walbsk}{\mathrm{wal}_{\bsk}}
\newcommand{\walkappa}[1]{{\mathrm{wal}_{#1}}}
\newcommand{\supp}{\boldsymbol{s}}
\newcommand{\tu}{\Tilde{t}_u}
\newcommand{\tj}{\Tilde{t}_j}
\newcommand{\tildeK}{\Tilde{\mathcal{K}}}
\begin{document}
\title{Dimension-independent convergence rates of randomized nets using median-of-means}
\author[1,2]{Zexin Pan}
  \date{}
\affil[1]{Johann Radon Institute for Computational and Applied Mathematics,
  \"OAW, Altenbergerstrasse~69, 4040~Linz, Austria.}
\affil[2]{Institute of Fundamental and Transdisciplinary Research, Zhejiang University, 866 Yuhangtang Road, Xihu District, Hangzhou, Zhejiang Province, 310058, China.}
\renewcommand\Affilfont{\footnotesize}
\blfootnote{\noindent Email addresses: \url{zep002@zju.edu.cn}}
\maketitle

%    Abstract is required.
\begin{abstract}
Recent advances in quasi-Monte Carlo integration demonstrate that the median of linearly scrambled digital net estimators achieves near-optimal convergence rates for high-dimensional integrals without requiring a priori knowledge of the integrand's smoothness. Building on this framework, we prove that the median estimator attains dimension-independent convergence, a property known as strong tractability in complexity theory, under tractability conditions characterized by low effective dimensionality. Using a probabilistic, integrand-specific error criterion, our analysis establishes both faster and dimension-independent convergence under weaker assumptions than previously possible in the worst-case setting.
\end{abstract}

\section{Introduction}

The quasi-Monte Carlo (QMC) method is a powerful alternative to conventional Monte Carlo (MC) techniques for high-dimensional integration. While both methods approximate integrals by averaging $n$ function evaluations, QMC replaces MC’s pseudorandom sampling with low-discrepancy point sets engineered to uniformly explore the sampling space. This structured approach allows QMC to avoid the exponential cost growth with dimension of classical quadrature rules while achieving faster error convergence than MC. These advantages have made QMC particularly valuable in fields like computational finance \cite{joy1996quasi,l2009quasi} and physics \cite{kuo:nuye:2016}, where high-dimensional integrals are pervasive.

Building on these strengths, we focus on randomized quasi-Monte Carlo (RQMC) methods based on linearly scrambled base-2 digital nets. Digital nets are a widely studied QMC construction that achieves low-discrepancy sampling through deterministic point sets \cite{dick:pill:2010}. The random linear scrambling \cite{mato:1998:2}, a computationally efficient alternative to Owen’s scrambling \cite{rtms}, preserves the structure of digital nets while introducing controlled randomness. This randomization enhances the root mean squared error convergence rate to nearly $O(n^{-3/2})$ under smoothness conditions and enables statistical inference. However, traditional approaches that estimate integrals by averaging independent QMC replicates face a critical limitation: outliers inherent to linearly scrambled digital nets can prevent these methods from attaining faster convergence rates, even when the integrand satisfies stronger smoothness assumptions \cite{superpolyone}.

To address this limitation, we adopt the median trick, a methodological innovation that has inspired many recent advances in high-dimensional integration \cite{chen2025randomintegrationalgorithmhighdimensional, goda2024simpleuniversalalgorithmhighdimensional,goda:lecu:2022,Goda2024, ye2025medianqmcmethodunbounded}. The median of linearly scrambled digital net estimators was first analyzed in \cite{superpolyone}, which demonstrated a convergence rate of $O(n^{-c\log(n)})$ for one-dimensional analytic functions on $[0,1]$, where $c<3\log(2)/\pi^2\approx 0.21$. Subsequent work in \cite{superpolymulti} extended this result to higher dimensions. For finitely differentiable integrands, \cite{pan2024automatic} further showed that the median estimator achieves near-optimal asymptotic convergence rates across diverse function spaces, requiring neither a priori smoothness parameters nor pre-optimized digital nets.

A critical gap in these prior studies is the role of dimensionality. We focus on the regime where the dimension grows unboundedly while the function norm remains bounded—a common scenario in path integral approximations \cite{kuo:sloa:wasi:wozn:2010}. Our analysis demonstrates that the median estimator attains dimension-independent convergence, a property termed strong tractability in information-based complexity \cite{nova:wozn:2008}, under assumptions on the decay of the relative variation $\gamma_u$ of each ANOVA component (formally defined in Section~\ref{sec:anova}). This notion of relative variation aligns with the effective dimension framework introduced in \cite{cafmowen}, which quantifies how high-dimensional integrands can concentrate their variation in low-order interactions. The connection to weighted function spaces \cite{sloa:wozn:1998} is discussed in Remark~\ref{rmk:weighted}, further bridging our tractability conditions to classical QMC theory.

Before presenting our results, we compare our contributions to the tractability analysis of median digital nets in \cite{Goda2024}. A key distinction lies in the error criterion: their analysis employs the worst-case integration error over the unit ball in a weighted Sobolev space, whereas we demonstrate that median estimators achieve near-optimal convergence rates on a fixed integrand with high probability. Our probabilistic approach is justified by the exponential decay in the likelihood of large errors as the number of replicates increases, aligning with the $(\varepsilon,\delta)$-approximation framework studied in \cite{kunsch2019solvable, kunsch2019optimal}.

Within this framework, we prove, informally stated, the following result:
\begin{theorem}
    Given $\eta>0$, $k\in \natu$, $r\in \natu$ and $f\in C^{(k,\dots,k)}([0,1]^s)$, it holds with probability $1-2^{-r}$ that the median estimator $\hat{\mu}^{(r)}_{\infty}$ from Subsection~\ref{subsec:mse} satisfies
    $$\left|\hat{\mu}^{(r)}_{\infty}-\int_{[0,1]^s}f(\bsx)\rd\bsx\right|\leq C_{\eta,k,f} n^{-k-1/2+\eta}.$$
   Here, $C^{(k,\dots,k)}([0,1]^s)$ denotes functions with continuous dominating mixed partial derivatives of order $k$ (see Subsection~\ref{subsec:smoothness}).  The constant $C_{\eta,k,f}$ is independent of $s$ under the random linear scrambling if the relative variation $\gamma_u$ satisfies the tractability conditions in Theorem~\ref{thm:tractability}.
\end{theorem}

These nearly $O(n^{-k-1/2})$ integrand-specific error rates complement the nearly $O(n^{-k})$ worst-case rates established in \cite{Goda2024}. Furthermore, Theorem~\ref{thm:tractability} requires strictly weaker assumptions than those assumed in \cite{Goda2024}  for the random linear scrambling, demonstrating that dimension-independent convergence is more easily achieved under our error criterion.

As a final remark, we note that while several algorithms match lower complexity bounds for integration in specific spaces—such as higher-order digital nets \cite{goda2016explicit,goda2017optimal,goda2018optimal}, Frolov lattice rules \cite{frolov1976upper,krieg2017universal,ullrich2016role} and Smolyak algorithms \cite{gnewuch2020explicit, sickel2006smolyak, smol:1963}—our method offers distinct practical advantages. It requires no smoothness parameter specification, yet automatically matches theoretical lower bounds up to an arbitrarily small $\eta > 0$ in the convergence rate, making it both efficient and immune to parameter misspecification. Combined with its straightforward implementation and the dimension-independent convergence established in this paper, the median estimator presents a highly practical and robust alternative for applied problems. Furthermore, recent work has successfully extended median RQMC to $L^2$-approximation \cite{pan2025universal,pan2025l_2}, demonstrating its broader utility beyond numerical integration.

The remainder of this paper is structured as follows. Section~\ref{sec:back} reviews essential background on digital nets and establishes notation. Subsection~\ref{subsec:mse} in particular details how the median trick enables probabilistic error guarantees and accelerates convergence rates. Section~\ref{sec:anova} introduces ANOVA decomposition and formally defines the relative variation $\gamma_u$, a key measure of effective dimensionality. Section~\ref{sec:main} provides a foundational analysis of the median estimator’s error bounds and tractability under the completely random design, a simplified randomization scheme that isolates core theoretical insights. Section~\ref{sec:general} generalizes these results to a broader class of randomization schemes, with the random linear scrambling serving as a key example. Section~\ref{sec:exper} empirically validates our theoretical findings through simulations on high-dimensional integrands. Finally, Section~\ref{sec:disc} concludes the paper with a discussion of future research directions.

\section{Background and notation}\label{sec:back}

Let $\natu=\{1,2,3,\dots\}$ denote the natural numbers and $\natu_0=\natu\cup\{0\}$. For $s$-dimensional indices, we define $\natu^s_* = \natu_0^s\setminus\{\bszero\}$ to exclude the zero vector. We define $\ints_{\le\ell}=\{0,1,\dots,\ell\}$ for $\ell\in \natu_0$ and $\ints_{<\ell}=\{0,1,\dots,\ell-1\}$ for $\ell\in \natu$. The dimension of the integration domain is denoted by $s$, and we write $1{:}s=\{1,2,\dots,s\}$ for $s\in \natu$. For a vector $\bsx\in\real^s$ and a subset $u\subseteq 1{:}s$, $\bsx_u$  denotes the subvector of $\bsx$ indexed by $u$, while $\bsx_{u^c}$ denotes the subvector indexed by $1{:}s\setminus u$. The indicator function $\bsone\{\mathcal{A}\}$ equals $1$ if event $\mathcal{A}$ occurs and $0$ otherwise. For $\bsell=(\ell_1,\dots,\ell_s)\in\natu^s_0$, $\Vert \bsell\Vert_1=\sum_{j=1}^s \ell_j$. The cardinality of a set $K$ is $|K|$. For positive sequences $a_n$ and $b_n$, $a_n=O(b_n)$ denotes asymptotic dominance: $a_n\le Cb_n$ for some constant $C<\infty$ and all sufficiently large $n\in\natu$. 

We assume the integrand $f:[0,1]^s\to\mathbb{R}$ is continuous, denoted as $f\in C([0,1]^s)$, with $\Vert f\Vert_\infty=\max_{\bsx\in [0,1]^s}|f(\bsx)|$. The  $L^p$-norm of $f$ is $\Vert f\Vert_{L^p}=\big(\int_{[0,1]^s}|f(\bsx)|^p\rd\bsx\big)^{1/p}$, and $L^p([0,1]^s)$ denotes the space of functions with finite $L^p$-norm.

 Quasi-Monte Carlo (QMC) methods approximate
\begin{equation*}
    \mu=\int_{[0,1]^s}f(\bsx)\rd\bsx
\quad
\text{by}
\quad
\hat\mu=\frac1n\sum_{i=0}^{n-1}f(\bsx_i)
\end{equation*}
for points $\bsx_i\in[0,1]^s$. In this paper, the points $\bsx_i$ are constructed by base-2 digital nets defined in the next subsection.

\subsection{Digital net construction and randomization}\label{subsec:nets}

For $m\in \natu$  and $i\in \ints_{<2^m}$, we express $i$ in its binary form $i=\sum_{\ell=1}^{m}i_\ell 2^{\ell-1}$, encoded as the column vector $\vec{i}= (i_1,\dots,i_m)^\tran\in\{0,1\}^m$. Analogously, for $a\in[0,1)$, we approximate its infinite binary expansion $a=\sum_{\ell=1}^\infty a_\ell 2^{-\ell}$ by truncating to $E\in \natu$ digits, yielding $\vec{a}=(a_1,\dots,a_E)^\tran\in\{0,1\}^E$. For numbers with dual representations (such as dyadic rationals), we select the finite expansion ending in zeros.

A base-2 digital net in $[0,1]^s$ is determined by  $s$ matrices $C_j\in\{0,1\}^{E\times m}$. The deterministic points $\bsx_i=(x_{i1},\dots,x_{is})$ are constructed via:
\begin{align*}
\vec{x}_{ij} = C_j\vec{i} \ \tmod 2 \text{ for } i\in \ints_{<2^m}, j\in1{:}s,
\end{align*}
where $\vec{x}_{ij}\in \{0,1\}^E$ maps to $x_{ij}\in [0,1)$ by appending zeros to the truncated $E$-digit representation. Classical constructions typically assume $E\leq m$.

To introduce randomness, we modify the construction as:
\begin{align}\label{eqn:xequalMCiplusD}
\vec{x}_{ij} = C_j\vec{i} + \vec{D}_j\ \tmod 2,
\end{align}
where $C_j\in \{0,1\}^{E\times m}$ and $\vec{D}_j\in\{0,1\}^E$ are random with precision $E\ge m$. The vector $\vec{D}_j$ is called the \textbf{digital shift} and comprises independent Bernoulli$(0.5)$ entries. A common way to randomize $C_j$ is the \textbf{random linear scrambling} \cite{mato:1998:2}:
$$C_j=M_j\mathcal{C}_j\ \tmod 2,$$
where $M_j\in \{0,1\}^{E\times m}$ is a random lower-triangular matrix (diagonal entries fixed at $1$, subdiagonal entries independently drawn from Bernoulli$(0.5)$), and $\mathcal{C}_j \in \{0,1\}^{m\times m}$ is a predetermined generating matrix optimized to improve the discrepancy of resulting points $\bsx_i$ (see \cite[Chapter 4.4]{dick:pill:2010} for details).

While effective, the reliance of random linear scrambling on carefully designed generating matrices complicates theoretical analysis. We address this in Section~\ref{sec:general}. For initial insights, Section~\ref{sec:main} adopts the \textbf{complete random design} \cite{pan2024automatic}, where all entries of $C_j$ are drawn independently from Bernoulli$(0.5)$. This approach retains the asymptotic convergence rate of random linear scrambling while circumventing the need for pre-optimized generator matrices.

Let $\bsx_i[E]$ denote points generated by equation~\eqref{eqn:xequalMCiplusD} with $E$-bit precision. The QMC estimator becomes:
\begin{equation}\label{eqn:muEdef}
   \hat\mu_E = \frac1n\sum_{i=0}^{n-1}f(\bsx_i)\quad\text{for $\bsx_i=\bsx_i[E]$}. 
\end{equation}
We conveniently assume $E=\infty$ and focus our theoretical analysis on  $\hat{\mu}_\infty$. In practice, $E$ is constrained by computational precision.  Lemma 1 of \cite{superpolymulti} shows $|\hat\mu_E-\hat\mu_\infty|\leq \omega_f(\sqrt{s}2^{-E})$, where $\omega_f$ is the modulus of continuity of $f$. Thus, $\hat\mu_E$ closely approximates $\hat\mu_\infty$ when $ \omega_f(\sqrt{s}2^{-E})\ll |\hat{\mu}_E-\mu|$.

\subsection{Walsh functions and error analysis for base-2 digital nets}

Walsh functions form the natural orthonormal basis for analyzing base-2 digital nets. For $k\in\natu_0$ and $x\in[0,1)$, the univariate
Walsh function $\walk(x)$ is defined as
$$
\walk(x) = (-1)^{\vec{k}^\tran\vec{x}},
$$
where $\vec{k}\in \{0,1\}^\infty$ and $\vec{x}\in \{0,1\}^\infty$ are the binary expansions of $k$ and $x$, respectively. Since $\vec{k}$ has finitely many nonzero digits, finite-precision truncations of $\vec{x}$ suffice for computation.

For $s$-dimensional functions, the multivariate Walsh function $\walbsk:[0,1)^s\to\{-1,1\}$ is the tensor product of univariate Walsh functions:
$$
\walbsk(\bsx) =\prod_{j=1}^s\mathrm{wal}_{k_j}(x_j)
=(-1)^{\sum_{j=1}^s\vec{k}_j^\tran \vec{x}_j},
$$
where $\bsk=(k_1,\dots,k_s)\in \natu^s_0$. These functions constitute a complete orthonormal basis for $L^2([0,1]^s)$ \cite{dick:pill:2010}, enabling the Walsh series decomposition:
\begin{align}\label{eqn:Walshdecomposition}
f(\bsx) = \sum_{\bsk\in\natu_0^s}\hat f(\bsk) \walbsk(\bsx),\quad\text{where}\quad
\hat f(\bsk) = \int_{[0,1]^s}f(\bsx)\walbsk(\bsx)\rd\bsx.
\end{align}
The equality holds in the $L^2$-sense. Using this decomposition, \cite[Theorem 1]{pan2024automatic} establishes a key error structure for QMC estimators:
\begin{lemma}\label{lem:decomp}
For $f\in C([0,1]^s)$, the error of $\hat{\mu}_\infty$ defined by equation~\eqref{eqn:muEdef} satisfies
\begin{equation}\label{eqn:errordecomposition}
    \hat{\mu}_{\infty}-\mu=\sum_{\bsk\in \natu_*^s}Z(\bsk)S(\bsk)\hat{f}(\bsk),
\end{equation}
where
$$Z(\bsk)=\bsone\Big\{\sum_{j=1}^s \vec{k}_j^\tran  C_j=\bszero \tmod 2\Big\} \quad\text{and}\quad S(\bsk) = (-1)^{\sum_{j=1}^s\vec{k}_j^\tran \vec{D}_j}.$$
\end{lemma}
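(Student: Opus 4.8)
The plan is to substitute the Walsh decomposition~\eqref{eqn:Walshdecomposition} into the estimator~\eqref{eqn:muEdef} and exchange the finite average over the $n=2^m$ net points with the infinite sum over frequencies $\bsk$, so that everything collapses onto a single identity: the empirical mean of each Walsh function $\walbsk$ over the randomized points equals exactly $Z(\bsk)S(\bsk)$. Granting this, and observing that $\mu=\hat f(\bszero)$ is precisely the $\bsk=\bszero$ term (for which $Z(\bszero)=S(\bszero)=1$), the formula~\eqref{eqn:errordecomposition} falls out after subtraction.

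The core identity is elementary $\mathbb{F}_2$ algebra. Fix $\bsk\in\natu_0^s$ and a realization of the matrices $C_j$ and shifts $\vec{D}_j$. Since $(-1)^{(\cdot)}$ depends only on the parity of its exponent, equation~\eqref{eqn:xequalMCiplusD} gives
\begin{align*}
\walbsk(\bsx_i)
=(-1)^{\sum_{j=1}^s\vec{k}_j^\tran\vec{x}_{ij}}
=(-1)^{\sum_{j=1}^s\vec{k}_j^\tran(C_j\vec{i}+\vec{D}_j)}
=S(\bsk)\,(-1)^{\vec{c}^\tran\vec{i}},
\qquad
\vec{c}^\tran:=\sum_{j=1}^s\vec{k}_j^\tran C_j\ \tmod 2\in\{0,1\}^{1\times m}.
\end{align*}
Averaging over $i\in\ints_{2^m}$, i.e.\ over all $\vec{i}\in\{0,1\}^m$, and factoring the sum coordinatewise,
\begin{align*}
\frac1n\sum_{i=0}^{n-1}\walbsk(\bsx_i)
=S(\bsk)\prod_{\ell=1}^m\frac{1+(-1)^{c_\ell}}{2}
=S(\bsk)\,\bsone\{\vec{c}=\bszero\}
=Z(\bsk)S(\bsk).
\end{align*}

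The only delicate point is legitimizing the term-by-term averaging, since $f$ is merely continuous and~\eqref{eqn:Walshdecomposition} converges only in the $L^2$-sense; I would avoid any Carleson-type pointwise convergence result and argue by density instead. Let $f_N$ be the truncation of~\eqref{eqn:Walshdecomposition} to frequencies with $k_j<2^N$ for all $j$; this is the orthogonal projection of $f$ onto the dyadic step functions of mesh $2^{-N}$, so $\Vert f-f_N\Vert_\infty\le\omega_f(\sqrt{s}\,2^{-N})\to 0$. For the finite Walsh polynomial $f_N$ the interchange is trivial, and combining it with the core identity (the $\bsk=\bszero$ term cancels against $\int f_N=\hat f(\bszero)$) yields
\begin{align*}
\frac1n\sum_{i=0}^{n-1}f_N(\bsx_i)-\int_{[0,1]^s}f_N(\bsx)\rd\bsx
=\sum_{\substack{\bsk\in\natu_*^s\\ k_j<2^N\ \forall j}}Z(\bsk)S(\bsk)\hat f(\bsk),
\end{align*}
whose left-hand side is exactly the $N$-th ``square'' partial sum of the right-hand side of~\eqref{eqn:errordecomposition}. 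Since $f\mapsto\frac1n\sum_{i=0}^{n-1}f(\bsx_i)-\int f$ is a linear functional on $C([0,1]^s)$ of norm at most $2$ and $f_N\to f$ uniformly, the left-hand side converges to $\hat\mu_\infty-\mu$; hence the series in~\eqref{eqn:errordecomposition} converges and equals $\hat\mu_\infty-\mu$, as claimed. (Alternatively one may first pass through finite precision $E$ and invoke $|\hat\mu_E-\hat\mu_\infty|\le\omega_f(\sqrt{s}\,2^{-E})$ from Lemma~1 of~\cite{superpolymulti}.) I expect this convergence/interchange step to be the main obstacle; the identity $\frac1n\sum_i\walbsk(\bsx_i)=Z(\bsk)S(\bsk)$ is the substantive core and is fully explicit.
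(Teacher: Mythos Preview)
The paper does not give its own proof of this lemma; it simply cites \cite[Theorem~1]{pan2024automatic} and states the result. Your argument is correct and is essentially the standard derivation one would find there: the $\mathbb{F}_2$ computation yielding $\frac1n\sum_i\walbsk(\bsx_i)=Z(\bsk)S(\bsk)$ is the substantive content, and your density argument via the dyadic truncations $f_N$ (which are the conditional expectations onto mesh-$2^{-N}$ cubes, hence uniformly close to $f$) cleanly justifies the interchange without appealing to pointwise Walsh convergence.
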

Here, $Z(\bsk)$ is an indicator for failure to integrate the Walsh mode $\walbsk$ exactly, and $S(\bsk)$ encodes the random sign induced by the digital shifts $\vec{D}_j$.

\subsection{Probabilistic error guarantee of the median estimator}\label{subsec:mse}
Given $r\in \natu$, we consider $2r-1$ independently generated replicates of 
$\hat{\mu}_\infty$ ordered from the smallest to the largest:
$$\hat{\mu}^{(1)}_{\infty}\leq \dots \leq \hat{\mu}^{(r)}_{\infty}\leq \dots \leq \hat{\mu}^{(2r-1)}_{\infty}.$$
A key property of the sample median $\hat{\mu}^{(r)}_{\infty}$ is the following probabilistic error bound:

\begin{lemma}\label{lem:medianMSE}
    Let $K\subseteq \natu^s_*$ and $\mathcal{A}$ be the event $\{Z(\bsk)=1 \text{ for some } \bsk\in K\}$. If $\Pr(\ca)\leq \delta<1/8$, then
    \begin{equation*}
    \Pr(|\hat{\mu}^{(r)}_{\infty}-\mu|^2>\varepsilon_K/\delta)\leq (8\delta)^r \quad \text{for} \quad \varepsilon_K=\sum_{\bsk\in\natu^s_*\setminus K} \Pr(Z(\bsk)=1) |\hat{f}(\bsk)|^2.
\end{equation*}
\end{lemma}
\begin{proof}
     First note that the random signs $S(\bsk)$ in equation~\eqref{eqn:errordecomposition} are  zero-mean and pairwise-independent for distinct $\bsk\in \natu_*^s$ (see \cite[Lemma 4]{superpolyone} for the one-dimensional case). Since $\mathcal{A}$ is independent of $S(\bsk)$, we have $\e(\hat{\mu}_\infty\giv \ca^c)=\mu$ and
     $$\var(\hat{\mu}_\infty\mid \ca^c)=\sum_{\bsk\in\natu^s_*\setminus K} \Pr(Z(\bsk)=1\mid \ca^c) |\hat{f}(\bsk)|^2\leq \frac{\varepsilon_K}{\Pr(\mathcal{A}^c)}.$$
     Applying Chebyshev’s inequality yields
     \begin{align*}
     \Pr(|\hat{\mu}_{\infty}-\mu|^2>\varepsilon_K/\delta)\leq & \Pr(\mathcal{A})+\Pr(\mathcal{A}^c)\Pr(|\hat{\mu}_{\infty}-\mu|^2>\varepsilon_K/\delta\mid \mathcal{A}^c)\leq 2\delta.
     \end{align*}
     Finally, $|\hat{\mu}^{(r)}_{\infty}-\mu|> \varepsilon_K/\delta$ only if the event $|\hat{\mu}_{\infty}-\mu|^2>\varepsilon_K/\delta$ occurs for at least $r$ out of the $2r-1$ independent replicates. A union bound then implies
     $$\Pr(|\hat{\mu}^{(r)}_{\infty}-\mu|^2>\varepsilon_K/\delta)\leq {2r-1\choose r}(2\delta)^{r}\leq 2^{2r-1}(2\delta)^{r}\leq (8\delta)^r.\qedhere$$
\end{proof}

This lemma justifies the use of the sample median $ \hat{\mu}^{(r)}_{\infty}$ as an estimator for $\mu$. For
 $f\in C([0,1]^s)$, the mean squared error satisfies
$$\e|\hat{\mu}^{(r)}_{\infty}-\mu|^2\leq \varepsilon_K/\delta+(8\delta)^r\Vert f\Vert^2_\infty.$$
Because the second term decays exponentially in $r$, it can be made negligible relative to the first term by choosing $r$ sufficiently large.

The main challenge is therefore to select a set $K$ so that $\varepsilon_K$ converges at the desired rate. We address this by constructing $K$ adaptively  based on the decay rate of the Walsh coefficients $\hat{f}(\bsk)$, which encodes the integrand’s smoothness and effective dimensionality. Notably, the median estimator $\hat{\mu}^{(r)}_{\infty}$  does not require explicit knowledge of $K$, making it applicable without prior information.

\subsection{Walsh coefficients and function smoothness}\label{subsec:smoothness}
To analyze the decay of Walsh coefficients, we introduce the following notation. For $k = \sum_{\ell=1}^\infty a_\ell 2^{\ell-1}$, let $\kappa = \{\ell\in \natu\mid a_\ell=1\}$ denote the positions of nonzero bits in the binary expansion of $k$. We use $k$ and $\kappa$ interchangeably because each uniquely determines the other. In this framework, $|\kappa|$ equals the number of nonzero bits of $k$, and the Walsh function simplifies to:
$$\walkappa{\kappa}(x)=\prod_{\ell\in \kappa}(-1)^{\vec{x}(\ell)},$$
where $\vec{x}(\ell)$ denotes the $\ell$-th digit of $x$  in its binary expansion.

For a finite subset $\kappa\subseteq \natu$, we define
\begin{itemize}
    \item $\lceil\kappa\rceil_{q}$: The  $q$-th largest element of $\kappa$ (or $0$ if $|\kappa|<q$).
    \item $\lceil\kappa\rceil_{1{:}q}$: The set of the $q$ largest elements of $\kappa$ (or $\emptyset$ if $q=0$).
\end{itemize}

The following lemma connects the Walsh coefficients $\hat{f}(\bsk)$ to the smoothness of $f$. For $\bsalpha=(\alpha_1,\dots,\alpha_s)\in \natu^s_0$, we define the mixed partial derivative:
$$f^{(\bsalpha)}=f^{(\alpha_1,\dots,\alpha_s)}=\frac{\partial^{\Vert\bsalpha\Vert_1} f}{\partial x_1^{\alpha_1}\cdots\partial x_s^{\alpha_s}},$$
with the convention $f^{(\bszero)}=f$.

\begin{lemma}\label{lem:exactfk}
Let $\bsalpha\in \natu^s_0$ and $f^{(\bsalpha)}\in C([0,1]^s)$. If $ |\kappa_j|\geq \alpha_j$ for all $j\in 1{:}s$, then
\begin{equation}\label{eqn:exactfk}
\hat f(\bsk)=(-1)^{\Vert\bsalpha\Vert_1}\int_{[0,1]^s}f^{(\bsalpha)}(\bsx)\prod_{j=1}^s \walkappa{\kappa_j\setminus \lceil\kappa_j\rceil_{1{:}\alpha_j}}(x_j)W_{\lceil\kappa_j\rceil_{1{:}\alpha_j}}(x_j)\rd \bsx,
\end{equation} 
where $W_\kappa :[0,1]\to\mathbb{R}$ for $\kappa\subseteq \natu$ is defined recursively by $W_\emptyset(x)=1$ and 
\begin{equation*}
  W_{\kappa}(x)=\int_{[0,1]}(-1)^{\vec{x}(\lfloor\kappa\rfloor)}W_{\kappa\setminus \lfloor\kappa\rfloor}(x)\rd x
\end{equation*}
with $\lfloor\kappa\rfloor$ denoting the smallest element of $\kappa$. Furthermore, for nonempty $\kappa$, $W_\kappa(x)$ is continuous, nonnegative, periodic with period $2^{-\lfloor\kappa\rfloor+1}$ and
\begin{equation}\label{eqn:Wint}
\int_{[0,1]}W_\kappa(x)\rd x=\prod_{\ell\in\kappa}2^{-\ell-1}\quad \text{and} \quad \max_{x\in [0,1]} W_\kappa(x)= 2\prod_{\ell\in\kappa} 2^{-\ell-1}.
\end{equation}
\end{lemma}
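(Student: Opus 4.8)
The plan is to derive \eqref{eqn:exactfk} by integrating by parts $\alpha_j$ times in each coordinate $x_j$, using the functions $W_\kappa$ as periodic antiderivatives so that every boundary term cancels; this first requires the ``furthermore'' properties of $W_\kappa$, so I would begin there. Reading the recursion as $W_\kappa(x)=\int_0^x(-1)^{\vec{t}(\lfloor\kappa\rfloor)}W_{\kappa\setminus\lfloor\kappa\rfloor}(t)\rd t$, I argue by induction on $|\kappa|$. The base case $\kappa=\{\ell\}$ is explicit: $(-1)^{\vec{x}(\ell)}$ is the mean-zero $\pm1$ square wave of period $2^{-\ell+1}$, so $W_{\{\ell\}}$ is the triangular wave rising linearly from $0$ to $2^{-\ell}$ and back, hence continuous, nonnegative, $2^{-\ell+1}$-periodic, with maximum $2^{-\ell}=2\cdot2^{-\ell-1}$ and integral $2^{-\ell-1}$ over $[0,1]$. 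For the inductive step put $\ell_0=\lfloor\kappa\rfloor$ and $\kappa'=\kappa\setminus\{\ell_0\}$; since $\lfloor\kappa'\rfloor>\ell_0$, the period $2^{-\lfloor\kappa'\rfloor+1}$ of $W_{\kappa'}$ divides $2^{-\ell_0}$, so $\int_0^{2^{-\ell_0}}W_{\kappa'}=\int_{2^{-\ell_0}}^{2^{-\ell_0+1}}W_{\kappa'}=2^{-\ell_0}\int_0^1 W_{\kappa'}$ and consequently $\int_0^{2^{-\ell_0+1}}(-1)^{\vec{t}(\ell_0)}W_{\kappa'}(t)\rd t=0$. This yields $2^{-\ell_0+1}$-periodicity of $W_\kappa$, in particular $W_\kappa(0)=W_\kappa(1)=0$; nonnegativity and the location of the maximum at $2^{-\ell_0}$ follow because the derivative of $W_\kappa$ is $\pm W_{\kappa'}$; and the two identities in \eqref{eqn:Wint} follow from Fubini applied to $\int_0^{2^{-\ell_0+1}}W_\kappa$ together with the shift $t\mapsto t+2^{-\ell_0}$ and the induction hypothesis.

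Next I would isolate the antiderivative identity driving the integration by parts. Fixing a coordinate and dropping subscripts, for $0\le k<|\kappa|$ set $\lambda_k=\lceil\kappa\rceil_{1{:}k}$, $\nu_k=\kappa\setminus\lambda_k$, and $g_k(x)=\walkappa{\nu_k}(x)W_{\lambda_k}(x)$, so $g_0=\walkappa{\kappa}$. The claim is that the antiderivative of $g_k$ vanishing at $0$ equals $g_{k+1}$. With $\ell^*=\lceil\kappa\rceil_{k+1}=\max\nu_k$, factor $\walkappa{\nu_k}=\walkappa{\nu_{k+1}}\cdot(-1)^{\vec{x}(\ell^*)}$ and note that $\walkappa{\nu_{k+1}}$ is constant on every dyadic interval of length $2^{-\ell^*+1}$ since all elements of $\nu_{k+1}$ are strictly below $\ell^*$. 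On each such interval the antiderivative of $(-1)^{\vec{x}(\ell^*)}W_{\lambda_k}$ is $W_{\lambda_{k+1}}$ by the defining recursion, and $W_{\lambda_{k+1}}$ vanishes at the interval's endpoints because its period $2^{-\ell^*+1}$ equals the interval length and $W_{\lambda_{k+1}}(0)=0$; splicing across intervals produces the continuous function $\walkappa{\nu_{k+1}}W_{\lambda_{k+1}}=g_{k+1}$ as the antiderivative of $g_k$, vanishing at both $x=0$ and $x=1$ (the latter since $1$ is a multiple of $2^{-\ell^*+1}$).

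The identity \eqref{eqn:exactfk} then follows by bookkeeping. Starting from $\hat f(\bsk)=\int_{[0,1]^s}f(\bsx)\prod_{j=1}^s\walkappa{\kappa_j}(x_j)\rd\bsx$, use Fubini to handle one coordinate at a time and integrate by parts $\alpha_j$ times in $x_j$; at each step the boundary terms vanish by the previous paragraph, one derivative is transferred onto $f$, and the current largest remaining index of $\kappa_j$ moves from the Walsh factor into the $W$-factor, the hypothesis $|\kappa_j|\ge\alpha_j$ ensuring enough indices remain. After all $\Vert\bsalpha\Vert_1$ steps one reaches \eqref{eqn:exactfk}, with the sign $(-1)^{\Vert\bsalpha\Vert_1}$ coming from the $(-1)$ produced at each step. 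The continuity of $f^{(\bsalpha)}$ — understood to include continuity of the intermediate mixed partials, which is what is actually used — legitimizes the Fubini swaps and the repeated integration by parts.

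I expect the crux to be the antiderivative identity of the second paragraph: one must check carefully that at every stage the factor multiplying the fastest square wave $(-1)^{\vec{x}(\ell^*)}$ is constant exactly on intervals whose length equals the period of the newly formed $W$-factor, and that this $W$-factor is continuous and vanishes at the corresponding dyadic points, so that no boundary contribution survives any integration by parts. Once that periodicity-and-support bookkeeping is nailed down, the induction for the $W_\kappa$ properties and the final iterated integration by parts are routine.
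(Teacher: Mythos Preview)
Your proposal is correct and self-contained. The paper, by contrast, does not prove this lemma at all: it simply cites \cite[Theorem~2.5 and Section~3]{SUZUKI20161} for both \eqref{eqn:exactfk} and the properties of $W_\kappa$. Your argument---induction on $|\kappa|$ for the $W_\kappa$ properties, then the antiderivative identity $g_k\mapsto g_{k+1}$ driven by the fact that the newly peeled index $\ell^*=\lceil\kappa\rceil_{k+1}$ is always $\lfloor\lambda_{k+1}\rfloor$, then iterated integration by parts---is precisely the standard route and is essentially what the cited reference does. Your reading of the recursion as $W_\kappa(x)=\int_0^x(-1)^{\vec{t}(\lfloor\kappa\rfloor)}W_{\kappa\setminus\lfloor\kappa\rfloor}(t)\rd t$ is the intended one (the display in the statement has a typo in the integration variable), and your caveat about needing the intermediate mixed partials, not just $f^{(\bsalpha)}$, to be continuous is well taken; this is implicit in the paper's standing hypothesis $f\in C^{(\alpha,\dots,\alpha)}([0,1]^s)$.
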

\begin{proof}
Equation~\eqref{eqn:exactfk} comes from \cite[Theorem 2.5]{SUZUKI20161}, while properties of $W_\kappa(x)$ are proven in \cite[Section 3]{SUZUKI20161}.
\end{proof}

For $\alpha\in \natu_0$, we say $f\in C^{(\alpha,\dots,\alpha)}([0,1]^s)$ if the mixed partial derivative $f^{(\bsalpha)}$ exists and is continuous on $[0,1]^s$ for all $\bsalpha\in \ints^s_{\le \alpha}$. In this work, we assume $f\in C^{(\alpha,\dots,\alpha)}([0,1]^s)$ in order to directly apply Lemma~\ref{lem:exactfk}. This regularity condition can be relaxed to the existence of weak derivatives $f^{(\bsalpha)}\in L^1([0,1]^s)$ satisfying the integral identity in equation~\eqref{eqn:exactfk}, as discussed in \cite[Remark 1]{pan2024automatic}.

\section{ANOVA decomposition and effective dimensionality}\label{sec:anova}

To characterize the effective dimensionality of integrands $f\in L^2([0,1]^s)$, we employ the ANOVA decomposition \cite{meandim}:
$$f=\sum_{u\subseteq 1{:}s} f_u,$$
where each component $f_u$ depends only on $\bsx_u$ and satisfies
$$\int_{0}^1 f_u(\bsx)\rd x_j=0 \quad\forall j\in u.$$
This decomposition is unique in the $L^2$-sense, with $f_\emptyset=\mu$. For $f\in C([0,1]^s)$, we enforce $f_u\in C([0,1]^s)$ via the recursive definition:
\begin{equation}\label{eqn:fudef}
    f_u(\bsx)=\int_{[0,1]^{s-|u|}}f(\bsx)\rd \bsx_{u^c}-\sum_{\substack{v\subseteq u\\v\neq u}}f_v(\bsx).
\end{equation}
This construction preserves smoothness: $f_u\in C^{(\alpha,\dots,\alpha)}([0,1]^s)$ for all $u\subseteq 1{:}s$ if $f\in C^{(\alpha,\dots,\alpha)}([0,1]^s)$. Here, $f_u$ captures the variation due to interactions among variables $\bsx_u$.  By an abuse of notation, we write $f_u(\bsx)=f_u(\bsx_u)$, where the latter $f_u:[0,1]^{|u|}\to \mathbb{R}$ is a $|u|$-dimensional function.

Let $\supp(\bsk)=\{j\in 1{:}s\mid k_j\neq 0\}$ denote the support of $\bsk=(k_1,\dots,k_s)$. From the Walsh decomposition~\eqref{eqn:Walshdecomposition}, we know that
\begin{equation}\label{eqn:fuWalsh}
    f_u(\bsx)=\sum_{\substack{\bsk\in \natu^s_0\\\supp(\bsk)=u}}\hat f(\bsk) \walbsk(\bsx),
\end{equation}
because $\walbsk(\bsx)$ depends only on $\{x_j\mid j\in \supp(\bsk)\}$ and
$$\int_{0}^1 \walbsk(\bsx) \rd x_j=\Big(\int_{0}^1 \walkappa{\kappa_j}(x_j)\rd x_j\Big)\prod_{j'\in \supp(\bsk)\setminus\{j\}} \walkappa{\kappa_{j'}}(x_{j'}) =0$$
if $j\in \supp(\bsk)$. Thus, $\hat f(\bsk)=\hat f_u(\bsk)$ if $\supp(\bsk)=u$. 

To quantify the variation of ANOVA components  $f_u$, we adopt the fractional Vitali variation with $\mathfrak{p}=2$ from \cite{dick:2008}, which generalizes the usual Sobolev norms. For a function $f:[0,1]^{d}\to \mathbb{R}$ and a subcube $J=\prod_{j=1}^d [a_j,b_j)$, we define the alternating sum over $J$:
\begin{equation*}
\Delta(f,J)=\sum_{v\subseteq 1{:}d}(-1)^{d-|v|}f\Big((\bsa_v,\bsb_{v^c})\Big),
\end{equation*}
where $(\bsa_v,\bsb_{v^c})$ denotes the vertex of $J$ with $x_j=a_j$ if $j\in v$ and $x_j=b_j$ otherwise. For $\lambda \in (0,1]$, the fractional Vitali variation of order $\lambda$ is:
\begin{equation}\label{eqn:variationdef}
    V^{(d)}_\lambda(f)=\sup_{\mathcal{P}}\Bigl(\sum_{J\in \mathcal{P}}\mathrm{Vol}(J)\Bigl|\frac{\Delta(f,J)}{\mathrm{Vol}(J)^\lambda}\Bigr|^2\Bigr)^{1/2}
\end{equation}
where the supremum is over all partitions $\mathcal{P}$ of $[0,1)^d$ into subcubes and $\mathrm{Vol}(J)$ is the volume of subcube $J$. When $d=0$ and $f\in \mathbb{R}$, $V^{(0)}_\lambda(f)=|f|$ by convention. 

For $f$ with weak mixed derivative $f^{(1,\dots,1)}$, we have the inequality
\begin{equation}\label{eqn:V1fL2}
 V^{(d)}_1(f)\leq \Vert f^{(1,\dots,1)}\Vert_{L^2},  
\end{equation}
with equality attained when $f\in C^{(1,\dots,1)}([0,1]^s)$ \cite[equation (3.4)]{dick:2008}.
More generally, \cite{LIU2025101935} shows $V^{(d)}_\lambda(f)\leq \Vert f^{(1,\dots,1)}\Vert_{L^p}$ for $p=2/(3-2\lambda),\lambda\in [1/2,1]$. Discontinuous $f$ can also have finite $V^{(d)}_\lambda(f)$ (see \cite[Section 4.1]{pan2024automatic} for an example).

Next, for $v\subseteq 1{:}d$, we define the partial integral operator $I_v:L^1([0,1]^d)\to L^1([0,1]^{|v|})$ as:
\begin{equation*}
    I_v(f)(\bsx_v)=\int_{[0,1]^{d-|v|}}f(\bsx_{v^c};\bsx_v)\rd\bsx_{v^c},
\end{equation*}
where $f(\bsx_{v^c};\bsx_v)$ denotes $f$ with fixed $\bsx_v$. Let  $D_{v^c}$ be the class of functions:
\begin{equation}\label{eqn:Du}
    D_{v^c}=\Big\{\rho_{v^c}\Big\vert \rho_{v^c}(\bsx)=\prod_{j\in v^c}\rho_j(x_j), \rho_j\in C([0,1]), \rho_j\geq 0, \Vert \rho_j\Vert_\infty\leq 1\Big\}.
\end{equation}
The partial fractional Vitali variation of order $\lambda$ with respect to $v\subseteq 1{:}d$ is:
\begin{equation}\label{eqn:Vvbound}
   V^{v}_\lambda(f)=\sup_{\rho_{v^c}\in D_{v^c}} V^{(|v|)}_\lambda \Big(I_v(f\rho_{v^c})\Big). 
\end{equation}
By convention, $V^{1{:}d}_\lambda(f)=V^{(d)}_\lambda(f)$. From \cite[Section 3.2]{pan2024automatic}, 
\begin{equation*}
    \Bigl(V^{v}_\lambda(f)\Bigl)^2\leq \int_{[0,1]^{d-|v|}}\Bigl(V^{(|v|)}_\lambda\Bigl( f(\text{ }\cdot\text{ };\bsx_{v^c})\Bigr)\Bigr)^2\rd\bsx_{v^c},
\end{equation*}
which, together with equation~\eqref{eqn:V1fL2}, implies
\begin{equation}\label{eqn:V1vfL2}
  V^{v}_1(f)\leq \Vert f^{(\bsone_{v},\bszero_{v^c})}\Vert_{L^2}  
\end{equation}
if the weak derivative $f^{(\bsone_{v},\bszero_{v^c})}$ exists.
Here $(\bsone_{v},\bszero_{v^c})$ denotes the vector whose $j$-th entry equals $1$ if $j\in v$ and equals $0$ otherwise.

The following lemma connects the Walsh coefficients $\hat f(\bsk)=\hat f_u(\bsk)$ (where $\supp(\bsk)=u$) to the fractional Vitali variation of $f_u$:
\begin{lemma}\label{lem:fkbound}
Let $\alpha\in \natu_0$ and $\lambda \in (0,1]$. Suppose $V^{v}_\lambda (f^{(\bsalpha_u)}_u)<\infty$ for nonempty $u\subseteq 1{:}s$, $\bsalpha_u=(\alpha_j,j\in u)\in \ints_{\le \alpha}^{|u|}$ and a subset $v\subseteq \{j\mid \alpha_j=\alpha\} $. Then for $\bsk_u=(k_j,j\in u )\in \natu^{|u|}$ satisfying $|\kappa_j|=\alpha+1$ for $j\in v$ and $|\kappa_j|=\alpha_j$ for $j\in u\setminus v$,
\begin{equation*}
    \sum_{\bsk'\in B(\bsk_u,v)}|\hat{f}_u(\bsk')|^2
\leq 4^{|u|}\Big(V^{v}_\lambda (f^{(\bsalpha_u)}_u)\Big)^2 \prod_{j\in u}4^{(1-\lambda) \lceil\kappa_j\rceil_{\alpha+1}}\prod_{\ell'\in \kappa_j}4^{-\ell'-1},
\end{equation*}
where 
\begin{equation}\label{eqn:Bku}
B(\bsk_u,v)=\{\bsk'\in \natu^s_*\mid \supp(\bsk')=u,\ \lceil\kappa'_j\rceil_{1:(\alpha+1)}=\kappa_j \ \forall j\in v,\ \kappa'_j= \kappa_j \ \forall j \in u\setminus v\}.    
\end{equation}
\end{lemma}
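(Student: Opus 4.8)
The plan is to unfold each Walsh coefficient $\hat f_u(\bsk')$, $\bsk'\in B(\bsk_u,v)$, via Lemma~\ref{lem:exactfk}, to recognise the sum over $B(\bsk_u,v)$ as a Parseval sum over ``free low-order bits,'' and then to control the resulting local averages of $f^{(\bsalpha_u)}_u$ by the partial fractional Vitali variation through a sliding-box estimate. First, since $|\kappa'_j|=\alpha_j$ for $j\in u\setminus v$ and $|\kappa'_j|\ge\alpha+1>\alpha_j$ for $j\in v$, Lemma~\ref{lem:exactfk} applies to the $|u|$-dimensional function $f_u$ with multi-index $\bsalpha_u$. For $j\in u\setminus v$ one has $\kappa'_j=\kappa_j$ with $|\kappa_j|=\alpha_j$, so the $j$-th factor in~\eqref{eqn:exactfk} is simply $W_{\kappa_j}(x_j)$. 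For $j\in v$, writing $m_j:=\lceil\kappa_j\rceil_{\alpha+1}$ and $\tilde\kappa_j:=\lceil\kappa_j\rceil_{1:\alpha}$, the defining constraint $\lceil\kappa'_j\rceil_{1:(\alpha+1)}=\kappa_j$ of $B(\bsk_u,v)$ is equivalent to $\kappa'_j=\tilde\kappa_j\sqcup\{m_j\}\sqcup\tau_j$ with $\tau_j\subseteq\{1,\dots,m_j-1\}$ arbitrary, and the $j$-th factor becomes $\walkappa{\{m_j\}\cup\tau_j}(x_j)\,W_{\tilde\kappa_j}(x_j)$. Normalising $W_{\kappa_j}=(\max W_{\kappa_j})\rho_j$ with $\rho_j\in C([0,1])$, $\rho_j\ge0$, $\|\rho_j\|_\infty\le1$, and integrating out $\bsx_{u\setminus v}$ replaces $f^{(\bsalpha_u)}_u$ by $G:=I_v\big(f^{(\bsalpha_u)}_u\rho_{v^c}\big)$ with $\rho_{v^c}=\prod_{j\in u\setminus v}\rho_j\in D_{v^c}$, at the cost of the explicit constant $\prod_{j\in u\setminus v}\big(2\prod_{\ell\in\kappa_j}2^{-\ell-1}\big)$; hence $\sum_{\bsk'\in B(\bsk_u,v)}|\hat f_u(\bsk')|^2$ equals that constant squared times $\sum_{\boldsymbol{\tau}}\big|\int_{[0,1]^{|v|}}G(\bsx_v)\prod_{j\in v}\walkappa{\{m_j\}\cup\tau_j}(x_j)W_{\tilde\kappa_j}(x_j)\,\rd\bsx_v\big|^2$, the sum being over $\boldsymbol{\tau}=(\tau_j)_{j\in v}$.

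Next, the functions $\{\prod_{j\in v}\walkappa{\tau_j}:\tau_j\subseteq\{1,\dots,m_j-1\}\}$ form an orthonormal basis of the space of functions on $[0,1]^{|v|}$ that are constant on the dyadic grid with cells $A=\prod_{j\in v}A_j$, $|A_j|=2^{-m_j+1}$; so by Parseval the inner sum equals $\sum_A\mathrm{Vol}(A)^{-1}\big|\int_A G(\bsx_v)\prod_{j\in v}\walkappa{\{m_j\}}(x_j)W_{\tilde\kappa_j}(x_j)\,\rd\bsx_v\big|^2$. On each $A_j$, $\walkappa{\{m_j\}}$ is $+1$ on the left half and $-1$ on the right half, while $W_{\tilde\kappa_j}$ takes equal values on the two halves (it is periodic with period dividing $2^{-m_j}$, since every element of $\tilde\kappa_j$ exceeds $m_j$); this cancellation rewrites $\int_A G\prod_{j\in v}\walkappa{\{m_j\}}W_{\tilde\kappa_j}\,\rd\bsx_v$ as $\pm\int_{\prod_{j\in v}[0,2^{-m_j})}\Delta\big(G,J(\bsy)\big)\prod_{j\in v}W_{\tilde\kappa_j}(y_j)\,\rd\bsy$, where $J(\bsy)=\prod_{j\in v}[y_j,y_j+2^{-m_j})$ is the translate by $\bsy$ of the fixed box of side lengths $(2^{-m_j})_{j\in v}$. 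Applying Cauchy--Schwarz against $\prod_{j\in v}W_{\tilde\kappa_j}(y_j)$ (whose integral over any length-$2^{-m_j}$ interval equals $2^{-m_j}\prod_{\ell\in\tilde\kappa_j}2^{-\ell-1}$, by periodicity and~\eqref{eqn:Wint}) and then bounding $W_{\tilde\kappa_j}\le\max W_{\tilde\kappa_j}\le 2\prod_{\ell\in\tilde\kappa_j}2^{-\ell-1}$ reduces everything to the sliding-box integral $\mathcal{I}:=\int_{[0,1)^{|v|}}|\Delta(G,J(\bsy))|^2\,\bsone\{J(\bsy)\subseteq[0,1)^{|v|}\}\,\rd\bsy$.

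Finally, I would write $\mathcal{I}$ as the integral over shifts $\bsz\in\prod_{j\in v}[0,2^{-m_j})$ of $\sum_i|\Delta(G,J_i)|^2$, where $\{J_i\}$ are the disjoint translates of the fixed box on the $\bsz$-shifted grid that lie inside $[0,1)^{|v|}$; extending $\{J_i\}$ to a partition of $[0,1)^{|v|}$ into subcubes and discarding the nonnegative contributions of the remaining cells in~\eqref{eqn:variationdef} (all $J_i$ having the common volume $\prod_{j\in v}2^{-m_j}$) gives $\sum_i|\Delta(G,J_i)|^2\le\big(\prod_{j\in v}2^{-m_j}\big)^{2\lambda-1}\big(V^{(|v|)}_\lambda(G)\big)^2$ uniformly in $\bsz$, whence $\mathcal{I}\le\big(\prod_{j\in v}2^{-m_j}\big)^{2\lambda}\big(V^{(|v|)}_\lambda(G)\big)^2$. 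Collecting the explicit constants from the previous steps and using $V^{(|v|)}_\lambda(G)\le V^{v}_\lambda(f^{(\bsalpha_u)}_u)$ (immediate from~\eqref{eqn:Vvbound} since $\rho_{v^c}\in D_{v^c}$) yields the bound in the statement, in fact with $4^{|u\setminus v|}$ in place of $4^{|u|}$, so there is some slack. The main obstacle is the second step: one must choose exactly the right decomposition of each $\kappa'_j$ (leading bits absorbed into $W_{\tilde\kappa_j}$, a single pivot bit $m_j$, free trailing bits $\tau_j$) so that the Walsh sum is a genuine Parseval identity and so that the pivot bit produces a \emph{first-order} mixed difference $\Delta(G,\cdot)$ of the already-integrated function $G$ rather than a second-order one; this is precisely where the periodicity of $W_{\tilde\kappa_j}$ on scale $2^{-m_j}$ is used.
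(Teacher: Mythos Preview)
Your argument is correct and follows the same route as the paper's proof: apply Lemma~\ref{lem:exactfk}, factor out the $j\in u\setminus v$ contribution through the partial integral operator $I_v$, and bound the remaining Walsh sum over the free low-order bits by the fractional Vitali variation of the resulting $|v|$-dimensional function. The paper delegates that last step to \cite[Theorem~3]{pan2024automatic} (its equation~\eqref{eqn:fkg}); you supply a self-contained Parseval / pivot-bit / Cauchy--Schwarz / sliding-box argument, which is precisely the content of that citation. One small correction: a careful constant count in your argument gives the stated $4^{|u|}$, not $4^{|u\setminus v|}$---rewriting $4^{-\lambda m_j}\prod_{\ell\in\tilde\kappa_j}4^{-\ell-1}$ as $4\cdot 4^{(1-\lambda)m_j}\prod_{\ell\in\kappa_j}4^{-\ell-1}$ for each $j\in v$ contributes a factor $4^{|v|}$ that combines with the $4^{|u\setminus v|}$ from the outer normalisation to give exactly $4^{|u|}$.
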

\begin{proof}
See Appendix~\ref{app3}.
\end{proof}

Motivated by Lemma~\ref{lem:fkbound}, we define for $\alpha\in \natu_0, \lambda \in (0, 1]$ and nonempty $u\subseteq 1{:}s$:
\begin{equation}\label{eqn:funorm}
 \Vert f\Vert_{u,\alpha,\lambda}=\sup_{v\subseteq u}\sup_{\substack{\bsalpha_u\in \ints_{\le \alpha}^{|u|}\\ \alpha_j=\alpha \ \forall j\in v,\\ \alpha_j>0 \ \forall j\in u\setminus v  }} V^{v}_\lambda (f^{(\bsalpha_u)}_u),   
\end{equation}
where $f_u$ is the ANOVA component of $f$ defined in equation~\eqref{eqn:fudef}. Notice that we add the restriction $\alpha_j>0$ for $j\in u\setminus v$ because $k_j\in\natu$ implies $\alpha_j=|\kappa_j|>0$ in the statement of Lemma~\ref{lem:fkbound}. In particular, $ \Vert f\Vert_{u,0,\lambda}=V^{u}_\lambda (f_u)$. When $\lambda=1$ and the weak mixed derivative $f^{(\alpha+1,\dots,\alpha+1)}$ exists, equation~\eqref{eqn:V1vfL2} implies the bound
$$\Vert f\Vert_{u,\alpha,1}\leq \sup_{\bsalpha_u\in \{1:(\alpha+1)\}^{|u|}} \Vert f_u^{(\bsalpha_u)}\Vert_{L^2}.$$

 We further set $\Vert f\Vert_{u,\alpha,\lambda}=0$ if $u=\emptyset$, and define the global variation norm:
$$\Vert f\Vert_{\alpha,\lambda}=\sup_{u\subseteq 1{:}s}\Vert f\Vert_{u,\alpha,\lambda}.$$
By Lemma~\ref{lem:fkbound}, $\Vert f\Vert_{\alpha,\lambda}=0$ implies $\hat{f}(\bsk)=0$ for all $\bsk\neq \bszero$, forcing $f$ to be a constant. For non-constant $f$, the \textbf{relative variation} of ANOVA component $f_u$ is:
\begin{equation}\label{eqn:gammau}
    \gamma_u=\gamma_u(f,\alpha,\lambda)=\frac{\Vert f\Vert_{u,\alpha,\lambda}}{\Vert f\Vert_{\alpha,\lambda}},
\end{equation}
which satisfies $\sup_{u\subseteq 1{:}s}\gamma_u=1$ and $\gamma_\emptyset=0$. The relative variation $\gamma_u$ depends on $f,\alpha$ and $\lambda$, but we suppress this dependence in the notation as these variables are fixed throughout our theoretical analysis. Section~\ref{sec:exper} later exemplifies how $\gamma_u$ changes under different choices of $\alpha$.

The decay of $\gamma_u$ encodes effective dimensionality, with $\gamma_u=1$ for all nonempty $u\subseteq 1{:}s$ representing the worst dimensionality. As shown in Theorem~\ref{thm:tractability},  dimension-independent convergence rates emerge under suitable decay conditions on $\gamma_u$.

\begin{remark}\label{rmk:weighted}
    For positive weights $\{\Tilde{\gamma}_u,u\subseteq 1{:}s\}$, we can define the weighted norm:
$$\Vert f\Vert_{\alpha,\lambda,\Tilde{\gamma}_u}=\sup_{u\subseteq 1{:}s}\Tilde{\gamma}_u^{-1}\Vert f\Vert_{u,\alpha,\lambda}.$$
All subsequent results hold equivalently with $\Vert f\Vert_{\alpha,\lambda,\Tilde{\gamma}_u}$ replacing $\Vert f\Vert_{\alpha,\lambda}$ and $\Tilde{\gamma}_u$ replacing $\gamma_u$. However, since $\Vert f\Vert_{\alpha,\lambda}$ and $\gamma_u$ are intrinsic to $f$, while $\Vert f\Vert_{\alpha,\lambda,\Tilde{\gamma}_u}$ requires external weights $\Tilde{\gamma}_u$ unused by our algorithm, we adopt the unweighted framework for simplicity. 

Notably, $\Vert f\Vert_{\alpha,\lambda}$ equals the infimum of $\Vert f\Vert_{\alpha,\lambda,\Tilde{\gamma}_u}$ over all weights $\{\Tilde{\gamma}_u,u\subseteq 1{:}s\}$ with $\sup_{u\subseteq 1{:}s} \Tilde{\gamma}_u=1$. Thus, the median RQMC method implicitly selects optimal weights tailored to $f$, avoiding explicit weight specification.
\end{remark}

\section{Tractability under the completely random design}\label{sec:main}

In this section, we show the median RQMC estimator $\hat{\mu}^{(r)}_\infty$ under the completely random design achieves near $O(2^{-(\alpha+\lambda+1/2)m})$ convergence rates given $\Vert f\Vert_{\alpha,\lambda}<\infty$, with no dimension-dependence if further $\gamma_u$ satisfies equation~\eqref{eqn:summable}.

First,  we define
 $\Vert\kappa\Vert=\sum_{\ell\in \kappa}\ell$ for $\kappa\subseteq\natu$, with $\Vert\emptyset\Vert=0$.
Motivated by Lemma~\ref{lem:fkbound}, we choose $K$ in Lemma~\ref{lem:medianMSE} for given $\alpha\in \natu_0$ and $\lambda\in (0,1]$ to be:
\begin{equation}\label{eqn:setK}
   K=\bigcup_{\substack{u\subseteq 1{:}s\\ u\neq\emptyset}} K_u(T_{u,m}) 
\end{equation}
with thresholds $T_{u,m}$ and
$$K_u(T)=\bigcup_{v\subseteq u}\bigcup_{\bsk_u\in \mathcal{K}_{u,v}(T) }B(\bsk_u,v)$$
for $B(\bsk_u,v)$ defined as in equation~\eqref{eqn:Bku} and
\begin{align*}
    \mathcal{K}_{u,v}(T)=\Big\{(k_j,j\in u )\in \natu^{|u|}\Big\vert \begin{matrix}
    |\kappa_j|=\alpha+1 \ \forall  j\in v,\ |\kappa_j|\leq \alpha \ \forall j\in u\setminus v,\\(\lambda-1)\sum_{j\in u} \lceil\kappa_j\rceil_{\alpha+1}+\sum_{j\in u}\Vert\kappa_j\Vert  \leq T
    \end{matrix} 
     \Big\}.
\end{align*}
We note that $K_u(T)$ implicitly depends on $\alpha$ and $\lambda$. 

The next lemma quantifies how $T$ controls the size of $|\hat{f}_u(\bsk)|^2$ for $\bsk\notin K_u(T)$.

\begin{lemma}\label{lem:fukbound}
Suppose  $\Vert f\Vert_{u,\alpha,\lambda}<\infty$ for $\alpha\in \natu_0,\lambda\in (0,1]$ and nonempty $u\subseteq 1{:}s$. Then for $T\in \mathbb{R}$ and $\theta\in (0,1)$,
    $$\sum_{\bsk\in\natu_*^{s}\setminus K_u(T)} |\hat{f}_u(\bsk)|^2\leq \frac{\Vert f\Vert^2_{u,\alpha,\lambda} }{4^{\theta T}}   C_{\alpha,\lambda,\theta}^{|u|},$$
    where
    \begin{equation}\label{eqn:Cconstant}
    C_{\alpha,\lambda,\theta}=\frac{4^{-\alpha}}{\alpha!(4^{(\alpha+\lambda)(1-\theta)}-1)(4^{1-\theta}-1)^{\alpha}}+\sum_{\alpha'=0}^\alpha \frac{4^{-\alpha'+1}}{(\alpha')!(4^{1-\theta}-1)^{\alpha'}}. 
\end{equation}
\end{lemma}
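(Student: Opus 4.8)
Since, by equation~\eqref{eqn:fuWalsh}, $\hat f_u$ is supported on indices with $\supp(\bsk)=u$, the sum to be estimated runs only over $\bsk\in\natu_*^s$ with $\supp(\bsk)=u$ and $\bsk\notin K_u(T)$. The plan is to organize these indices into the blocks $B(\bsk_u,v)$ of equation~\eqref{eqn:Bku}, bound each block with Lemma~\ref{lem:fkbound}, and sum the resulting geometric-type series, extracting the factor $4^{-\theta T}$ at the very end.

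The first step is combinatorial bookkeeping. For $\bsk$ with $\supp(\bsk)=u$, put $v=\{j\in u:|\kappa_j|\ge\alpha+1\}$ and let $\bsk_u$ be the index whose $j$-th bit set is $\lceil\kappa_j\rceil_{1{:}(\alpha+1)}$ for $j\in v$ and $\kappa_j$ for $j\in u\setminus v$; then $\bsk\in B(\bsk_u,v)$, and one checks this is the unique block containing $\bsk$. Thus $\{\bsk\in\natu_*^s:\supp(\bsk)=u\}$ is a disjoint union of the blocks $B(\bsk_u,v)$ over $v\subseteq u$ and representatives $\bsk_u$ with $|\kappa_j|=\alpha+1$ on $v$ and $1\le|\kappa_j|\le\alpha$ on $u\setminus v$. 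Crucially, membership of $\bsk$ in $K_u(T)$ depends only on its block: it holds precisely when $\bsk_u\in\mathcal{K}_{u,v}(T)$, i.e.\ precisely when $W(\bsk_u,v)\le T$, where $W(\bsk_u,v):=\sum_{j\in u}\Vert\kappa_j\Vert-(1-\lambda)\sum_{j\in u}\lceil\kappa_j\rceil_{\alpha+1}$. So the indices we must sum over are exactly the blocks with $W(\bsk_u,v)>T$.

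Next I would apply Lemma~\ref{lem:fkbound} to each such block. Take $\bsalpha_u$ with $\alpha_j=\alpha$ for $j\in v$ and $\alpha_j=|\kappa_j|$ for $j\in u\setminus v$; because $k_j\neq0$ forces $\alpha_j\ge1$ and $|\kappa_j|\le\alpha$ on $u\setminus v$, this $\bsalpha_u$ is among those appearing in equation~\eqref{eqn:funorm}, so $V^v_\lambda(f_u^{(\bsalpha_u)})\le\Vert f\Vert_{u,\alpha,\lambda}<\infty$, and the cardinality hypotheses of Lemma~\ref{lem:fkbound} ($|\kappa_j|=\alpha_j+1$ on $v$, $|\kappa_j|=\alpha_j$ on $u\setminus v$) hold by construction. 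The lemma then gives $\sum_{\bsk'\in B(\bsk_u,v)}|\hat f_u(\bsk')|^2\le 4^{|u|}\Vert f\Vert_{u,\alpha,\lambda}^2\prod_{j\in u}4^{(1-\lambda)\lceil\kappa_j\rceil_{\alpha+1}}\prod_{\ell'\in\kappa_j}4^{-\ell'-1}$, and since $\lceil\kappa_j\rceil_{\alpha+1}=0$ on $u\setminus v$, a direct simplification rewrites the product factor as $4^{-W(\bsk_u,v)-\sum_{j\in u}|\kappa_j|}$.

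Finally I would sum over the surviving blocks. For $\theta\in(0,1)$, $W>T$ gives $4^{-W}<4^{-\theta T}4^{-(1-\theta)W}$, so the tail is at most $\Vert f\Vert_{u,\alpha,\lambda}^2\,4^{|u|}4^{-\theta T}$ times $\sum_{v,\bsk_u}4^{-(1-\theta)W(\bsk_u,v)-\sum_{j\in u}|\kappa_j|}$, the constraint $W>T$ now dropped. Since $W(\bsk_u,v)$ and $\sum_{j}|\kappa_j|$ are sums of per-coordinate quantities and the choices (whether $j\in v$, and which $\kappa_j$) are independent across $j\in u$, this sum factors as $(A+B)^{|u|}$, where $A$ collects the contribution of coordinates placed in $v$ (bit sets of size $\alpha+1$) and $B$ that of coordinates outside $v$ (bit sets of size $1$ through $\alpha$). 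Both $A$ and $B$ are elementary sums over increasing integer tuples; with $x=4^{-(1-\theta)}$, the estimate $\sum_{|\kappa|=\alpha'}x^{\Vert\kappa\Vert}\le(\alpha'!)^{-1}(x/(1-x))^{\alpha'}$, and $x/(1-x)=(4^{1-\theta}-1)^{-1}$, one obtains $4(A+B)\le C_{\alpha,\lambda,\theta}$ as in equation~\eqref{eqn:Cconstant} (the $\alpha'=0$ summand there being harmless slack). As $4^{|u|}(A+B)^{|u|}=(4(A+B))^{|u|}$, the claimed bound follows. The main obstacle is organizational rather than analytic: making the block decomposition of $\{\supp(\bsk)=u\}$ precise, verifying that $K_u(T)$ is a union of whole blocks, and checking that Lemma~\ref{lem:fkbound} genuinely applies with $\bsalpha_u$ chosen as above; once $\theta\in(0,1)$ is fixed, the remaining geometric estimates are routine.
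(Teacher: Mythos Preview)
Your proposal is correct and follows essentially the same approach as the paper's proof: partition $\{\bsk:\supp(\bsk)=u\}$ into blocks $B(\bsk_u,v)$, apply Lemma~\ref{lem:fkbound} to each block (with $\bsalpha_u$ chosen so that equation~\eqref{eqn:funorm} bounds the variation), use $W>T\Rightarrow 4^{-W}\le 4^{-\theta T}4^{-(1-\theta)W}$ to extract the factor $4^{-\theta T}$, factor the remaining sum over coordinates, and bound $\sum_{|\kappa|=\alpha'}\rho_\theta^{\Vert\kappa\Vert}$ by $(\alpha'!)^{-1}(\rho_\theta^{-1}-1)^{-\alpha'}$ via the generating-function inequality. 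The paper's version compresses the block decomposition into a single inequality over $\bsk_u\in\natu^u_\alpha$ with an indicator $\bsone\{\bsk_u\notin\bigcup_v\mathcal{K}_{u,v}(T)\}$, but the content is the same; your remark that the $\alpha'=0$ summand in $C_{\alpha,\lambda,\theta}$ is slack is also how the paper's bound arises.
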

\begin{proof}
See Appendix~\ref{app4}.
\end{proof}

While larger $T$ implies a tighter bound on $|\hat{f}_u(\bsk)|^2$ with $\bsk\notin K_u(T)$, it also increases the probability that $Z(\bsk)=1$ for some $\bsk\in K_u(T)$. The next lemma provides a bound on the cardinality of $K_u(T)$.
\begin{lemma}\label{lem:KuTbound}
    For $\alpha\in \natu_0,\lambda\in (0,1], T\in\mathbb{R}$ and nonempty $u\subseteq 1{:}s$,
    $$|K_u(T)|\leq 2^{T/(\alpha+\lambda)} \Big(A_{\alpha,\lambda}\max\big(T/(\alpha+\lambda),1\big) +B_{\alpha,\lambda}\Big)^{|u|},$$
    where
\begin{equation}\label{eqn:ABconstant}
    A_{\alpha,\lambda}=\frac{1}{\alpha!(2^{1/(\alpha+\lambda)}-1)^{\alpha}} \quad \text{and}\quad B_{\alpha,\lambda}=\sum_{\alpha'=1}^\alpha \frac{1}{(\alpha')!(2^{1/(\alpha+\lambda)}-1)^{\alpha'}},
\end{equation}
with the sum from $1$ to $\alpha$ interpreted as $0$ when $\alpha=0$.
\end{lemma}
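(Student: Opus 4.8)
The plan is to bound $|K_u(T)|$ by the subadditivity of cardinality over the union defining it,
$$|K_u(T)|\le\sum_{v\subseteq u}\ \sum_{\bsk_u\in\mathcal{K}_{u,v}(T)}|B(\bsk_u,v)|,$$
and then to estimate the right-hand side by a weighted counting argument in the spirit of the generating-function computation in the proof of Lemma~\ref{lem:fukbound}. First I would evaluate $|B(\bsk_u,v)|$ exactly: by equation~\eqref{eqn:Bku}, a point $\bsk'\in B(\bsk_u,v)$ has $\kappa'_j=\emptyset$ for $j\notin u$, $\kappa'_j=\kappa_j$ for $j\in u\setminus v$, and for each $j\in v$ its $\alpha+1$ largest digits form the set $\kappa_j$ while all its other digits lie below $\lfloor\kappa_j\rfloor$; hence $\kappa'_j=\kappa_j\cup S$ with $S\subseteq\{1,\dots,\lfloor\kappa_j\rfloor-1\}$ arbitrary, so $|B(\bsk_u,v)|=\prod_{j\in v}2^{\lfloor\kappa_j\rfloor-1}$.

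Next I would rewrite the cost constraint defining $\mathcal{K}_{u,v}(T)$ in a convenient form. For $j\in v$ (where $|\kappa_j|=\alpha+1$) reparametrize $\kappa_j$ by the pair $(\ell_0,\tilde\kappa)$ with $\ell_0=\lfloor\kappa_j\rfloor=\lceil\kappa_j\rceil_{\alpha+1}\ge1$ and $\tilde\kappa=\{\ell-\ell_0:\ell\in\kappa_j,\ \ell>\ell_0\}\in\mathcal{S}_\alpha(\natu)$; this is a bijection with $\Vert\kappa_j\Vert=(\alpha+1)\ell_0+\Vert\tilde\kappa\Vert$, so $(\lambda-1)\lceil\kappa_j\rceil_{\alpha+1}+\Vert\kappa_j\Vert=(\alpha+\lambda)\ell_0+\Vert\tilde\kappa\Vert$. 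For $j\in u\setminus v$ one has $\lceil\kappa_j\rceil_{\alpha+1}=0$, so the contribution is just $\Vert\kappa_j\Vert$. Writing $E$ for the total cost summed over $j\in u$, membership in $\mathcal{K}_{u,v}(T)$ is exactly the inequality $E\le T$.

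The key step is the following weighting. Put $\rho=2^{-1/(\alpha+\lambda)}$, so $2\rho^{\alpha+\lambda}=1$; since $\rho<1$, the inequality $E\le T$ forces both $\rho^{E-T}\ge1$ and, coordinatewise, $(\alpha+\lambda)\ell_0^{(j)}\le E\le T$ for every $j\in v$. Bounding $\bsone\{E\le T\}\le\rho^{E-T}$ and using the exact cancellation $2^{\ell_0-1}\rho^{(\alpha+\lambda)\ell_0}=\tfrac12$ eliminates the $\ell_0$-dependence from the weight, so that after factorizing over the coordinates
$$\sum_{\bsk_u\in\mathcal{K}_{u,v}(T)}\prod_{j\in v}2^{\lfloor\kappa_j\rfloor-1}\ \le\ \rho^{-T}\Bigl(\tfrac12\lfloor T/(\alpha+\lambda)\rfloor\!\!\sum_{\tilde\kappa\in\mathcal{S}_\alpha(\natu)}\!\!\rho^{\Vert\tilde\kappa\Vert}\Bigr)^{|v|}\Bigl(\sum_{\alpha'=1}^{\alpha}\ \sum_{\kappa\in\mathcal{S}_{\alpha'}(\natu)}\!\!\rho^{\Vert\kappa\Vert}\Bigr)^{|u\setminus v|},$$
where the truncation $\ell_0^{(j)}\le T/(\alpha+\lambda)$ must be kept (otherwise the $\ell_0$-sum diverges, as $2^{\lfloor\kappa_j\rfloor}$ exactly offsets $\rho^{(\alpha+\lambda)\lfloor\kappa_j\rfloor}$) --- this is precisely what introduces the linear-in-$T$ factor, and recognizing this cancellation is the main obstacle of the proof; the rest is bookkeeping.

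It then remains to invoke equation~\eqref{eqn:generatingfun} with $t=\rho$ (so that $t^{-1}-1=2^{1/(\alpha+\lambda)}-1$): the sum over $\mathcal{S}_\alpha(\natu)$ is at most $A_{\alpha,\lambda}$ and the double sum over the $\mathcal{S}_{\alpha'}(\natu)$ is at most $B_{\alpha,\lambda}$, while the $\ell_0$-sum has at most $\max(T/(\alpha+\lambda),1)$ terms (with the empty sum read as $0$). Discarding the harmless factors $\tfrac12$, summing over $v\subseteq u$ by $\sum_{v\subseteq u}a^{|v|}b^{|u\setminus v|}=(a+b)^{|u|}$, and substituting $\rho^{-T}=2^{T/(\alpha+\lambda)}$ yields the claimed bound. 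The degenerate cases are benign: when $\alpha=0$ the index set $u\setminus v$ must be empty and $B_{0,\lambda}=0$, consistently; and when $\mathcal{K}_{u,v}(T)=\emptyset$ for every $v$ (e.g.\ for $T$ very small) one has $K_u(T)=\emptyset$ and the inequality holds trivially.
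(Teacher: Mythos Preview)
Your argument is correct. Both approaches begin with the same reparametrization of $\kappa_j$ for $j\in v$ into $(\ell_0^{(j)},\tilde\kappa_j)$ and invoke the generating-function bound of equation~\eqref{eqn:generatingfun}, but they diverge in how the coupled constraint $E\le T$ is handled. The paper first isolates the vector $\bsell_v=(\ell_0^{(j)},j\in v)$, introduces the auxiliary sets $\tildeK_u(N,\bsalpha_u)$, and bounds $\sum_{\Vert\bsell_v\Vert_1\le M}2^{\Vert\bsell_v\Vert_1}$ by a telescoping identity yielding the binomial factor $\binom{\overline{N}}{|v|-1}2^{\overline{N}+1}$ before passing to generating functions. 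You instead weight by $\rho^{E-T}$ with $\rho=2^{-1/(\alpha+\lambda)}$ while retaining the coordinatewise consequence $\ell_0^{(j)}\le T/(\alpha+\lambda)$; the choice $2\rho^{\alpha+\lambda}=1$ makes each $\ell_0$-summand constant, so the whole sum factorizes over coordinates immediately and the binomial combinatorics never appears. Your route is shorter and even gains an extra factor $2^{-|v|}$ that you discard. What the paper's route buys is reusable machinery: the sets $\tildeK_u(N,\bsalpha_u)$ and the $\bsell_v$-sum estimate are exactly the ingredients recycled in the proof of Lemma~\ref{lem:KuTbound2}, where the additional constraint $\sum_j\Vert\kappa_j\Vert_{(d)}>T'$ breaks the clean coordinatewise factorization and your simpler trick no longer applies directly.
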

\begin{proof}
See Appendix~\ref{app1}.
\end{proof}

\begin{corollary}\label{cor:eventA}
    For $m\in \natu$ and $\gamma_u$ defined by equation~\eqref{eqn:gammau}, let
    \begin{equation}\label{eqn:Gamma}
     \Gamma_{\alpha,\lambda,\gamma_u}(m)=1+\sum_{\substack{u\subseteq 1{:}s\\ u\neq\emptyset}}\gamma^{\frac{1}{\alpha+\lambda+1/2}}_u (A_{\alpha,\lambda}m+B_{\alpha,\lambda})^{|u|}.   
    \end{equation}
    Then for any $\delta\in (0,1)$, 
    $$\Pr\Big(Z(\bsk)=1 \text{ for some } \bsk\in \bigcup_{\substack{u\subseteq 1{:}s\\ u\neq\emptyset}} K_u(T_{u,m})\Big)\leq \delta$$
    under the completely random design, where
    \begin{equation}\label{eqn:Tum}
        T_{u,m}=(\alpha+\lambda)m-(\alpha+\lambda)\Big(\log_2\big(\Gamma_{\alpha,\lambda,\gamma_u}(m)\big)-\log_2(\delta)-\frac{\log_2(\gamma_u)}{\alpha+\lambda+1/2}\Big).
    \end{equation}
\end{corollary}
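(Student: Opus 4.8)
The plan is a union bound over the Walsh modes in $\bigcup_{u}K_u(T_{u,m})$, combined with the cardinality estimate of Lemma~\ref{lem:KuTbound} and the explicit choice of $T_{u,m}$. The one place where the completely random design enters is the per-mode failure probability: for any fixed $\bsk\in\natu_*^s$ there is a coordinate $j_0$ with $k_{j_0}\ne 0$ and hence a digit position $\ell_0$ at which the binary expansion of $k_{j_0}$ has a $1$; under the completely random design the $\ell_0$-th row of $C_{j_0}$ is an independent uniform vector in $\{0,1\}^m$, so conditioning on all other entries of $C_1,\dots,C_s$ leaves $\sum_{j=1}^s\vec{k}_j^\tran C_j \tmod 2$ uniform on $\{0,1\}^m$ (it equals a fixed vector XOR that row). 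Therefore $\Pr(Z(\bsk)=1)=2^{-m}$ for every $\bsk\in\natu_*^s$.

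With this, a union bound gives
$$\Pr\Big(Z(\bsk)=1\text{ for some }\bsk\in\bigcup_{u\ne\emptyset}K_u(T_{u,m})\Big)\le\sum_{u\ne\emptyset}|K_u(T_{u,m})|\,2^{-m},$$
where the possible overlap of the sets $K_u(T_{u,m})$ only loosens the bound; it remains to show the right-hand side is at most $\delta$. To that end I would feed $T=T_{u,m}$ into Lemma~\ref{lem:KuTbound}. Definition~\eqref{eqn:Tum} is chosen precisely so that the prefactor collapses: a direct computation gives $2^{T_{u,m}/(\alpha+\lambda)}=2^m\,\delta\,\gamma_u^{1/(\alpha+\lambda+1/2)}/\Gamma_{\alpha,\lambda,\gamma_u}(m)$. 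Next I need to control the factor $\max(T_{u,m}/(\alpha+\lambda),1)$ from Lemma~\ref{lem:KuTbound}: since $\Gamma_{\alpha,\lambda,\gamma_u}(m)\ge1$, $0<\delta<1$, and $\gamma_u\le\sup_{u}\gamma_u=1$, the bracketed correction term in~\eqref{eqn:Tum} is nonnegative, hence $T_{u,m}/(\alpha+\lambda)\le m$, and since $m\ge1$ we get $\max(T_{u,m}/(\alpha+\lambda),1)\le m$. (In the degenerate case $\gamma_u=0$ for a nonempty $u$ one has $T_{u,m}=-\infty$ and $K_u(T_{u,m})=\emptyset$, consistent with the weight $\gamma_u^{1/(\alpha+\lambda+1/2)}=0$, so such terms contribute nothing.) Substituting, $|K_u(T_{u,m})|\,2^{-m}\le\delta\,\gamma_u^{1/(\alpha+\lambda+1/2)}(A_{\alpha,\lambda}m+B_{\alpha,\lambda})^{|u|}/\Gamma_{\alpha,\lambda,\gamma_u}(m)$.

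Finally, summing over nonempty $u\subseteq1{:}s$ and recognizing the definition~\eqref{eqn:Gamma} of $\Gamma_{\alpha,\lambda,\gamma_u}(m)$, the bound telescopes: $\sum_{u\ne\emptyset}|K_u(T_{u,m})|\,2^{-m}\le\delta\,(\Gamma_{\alpha,\lambda,\gamma_u}(m)-1)/\Gamma_{\alpha,\lambda,\gamma_u}(m)\le\delta$, which with the union bound finishes the proof. I do not expect a genuine obstacle here: the argument is essentially bookkeeping, and the two points that actually require care are (i) the exact value $2^{-m}$ of the per-mode probability — this is where the completely random design is indispensable, and the more general schemes of Section~\ref{sec:general} will need a separate treatment of this step — and (ii) checking that the correction term in $T_{u,m}$ is nonnegative so that the $\max(\cdot,1)$ in Lemma~\ref{lem:KuTbound} is absorbed into $m$; everything else is algebra designed into the definition of $T_{u,m}$.
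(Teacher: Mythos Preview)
Your proposal is correct and follows essentially the same route as the paper: verify $T_{u,m}/(\alpha+\lambda)\le m$ from $\Gamma_{\alpha,\lambda,\gamma_u}(m)\ge1$, $\delta<1$, $\gamma_u\le1$, apply Lemma~\ref{lem:KuTbound} to bound $\sum_u|K_u(T_{u,m})|\le\delta\,2^m$, and finish with $\Pr(Z(\bsk)=1)=2^{-m}$ plus a union bound. Your extra remarks on why $\Pr(Z(\bsk)=1)=2^{-m}$ under the completely random design and on the degenerate case $\gamma_u=0$ are fine additions; note in passing that the sets $K_u(T_{u,m})$ are in fact disjoint (each $K_u$ contains only $\bsk$ with $\supp(\bsk)=u$), so the overlap caveat is unnecessary.
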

\begin{proof}
    Because $\Gamma_{\alpha,\lambda,\gamma_u}(m)\geq 1, \delta<1$ and $\sup_{u\subseteq 1{:}s}\gamma_u=1$, we have $T_{u,m}/(\alpha+\lambda)\leq m$ for all nonempty $u\subseteq 1{:}s$. Lemma~\ref{lem:KuTbound} then implies
    \begin{align*}
        \sum_{\substack{u\subseteq 1{:}s\\u\neq \emptyset} } |K_u(T_{u,m})| \leq &  \sum_{\substack{u\subseteq 1{:}s\\u\neq \emptyset} }2^{T_{u,m}/(\alpha+\lambda)}(A_{\alpha,\lambda}m+B_{\alpha,\lambda})^{|u|} \\
        = & \frac{\delta  2^m }{\Gamma_{\alpha,\lambda,\gamma_u}(m)}\sum_{\substack{u\subseteq 1{:}s\\u\neq \emptyset} }\gamma^{\frac{1}{\alpha+\lambda+1/2}}_u(A_{\alpha,\lambda}m+B_{\alpha,\lambda})^{|u|}\\
        \leq & \delta 2^m.
    \end{align*}
    Under the completely random design, $\Pr(Z(\bsk)=1)=2^{-m}$ for any $\bsk\neq \bszero$, so our conclusion follows by taking a union bound.
\end{proof}

Now we are ready to prove the following error bounds on $\hat{\mu}^{(r)}_{\infty}$:

\begin{theorem}\label{thm:CRD}
Suppose $\Vert f\Vert_{\alpha,\lambda}<\infty$ for $\alpha\in \natu_0$ and $\lambda\in (0,1]$. Then for $m\in \natu, r\in \natu$ and $\theta\in (0,1)$, the completely random design ensures
\begin{equation}\label{eqn:errorbound}
    \Pr\Big(|\hat{\mu}^{(r)}_{\infty}-\mu|^2>\Gamma_{\alpha,\lambda,\gamma_u,\theta}\Vert f\Vert^2_{\alpha,\lambda}\frac{(\Gamma_{\alpha,\lambda,\gamma_u}(m))^{2\theta(\alpha+\lambda)}}{2^{(2\theta(\alpha+\lambda)+1)m}}\Big)\leq  2^{-r},
\end{equation}
where $\gamma_u$ is given by equation~\eqref{eqn:gammau},  $\Gamma_{\alpha,\lambda,\gamma_u}(m)$ is given by equation~\eqref{eqn:Gamma} and 
$$\Gamma_{\alpha,\lambda,\gamma_u,\theta}=16^{2\theta(\alpha+\lambda)+1}\sum_{\substack{u\subseteq 1{:}s\\ u\neq\emptyset}} \gamma^{\frac{1+2(1-\theta)(\alpha+\lambda)}{\alpha+\lambda+1/2}}_u  C_{\alpha,\lambda,\theta}^{|u|},$$
with $C_{\alpha,\lambda,\theta}$ given by equation~\eqref{eqn:Cconstant}.
\end{theorem}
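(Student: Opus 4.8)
The plan is to chain together the ingredients assembled in this section. Lemma~\ref{lem:medianMSE} reduces the whole statement to bounding the single quantity $\delta^{-1}\Pr(\ca^c)\var(\hat\mu_\infty\giv\ca^c)$ for an appropriate exclusion event $\ca$; inequality~\eqref{eqn:conditionalvar} rewrites that quantity as a tail sum of squared Walsh coefficients over $\natu^s_*\setminus K$; Lemma~\ref{lem:fukbound} controls this tail component by component through the thresholds $T_{u,m}$; and Corollary~\ref{cor:eventA} guarantees that, with these thresholds, $\Pr(\ca)$ stays below the prescribed level $\delta$. The only real choice is $\delta$, and I would take $\delta=1/16$: it is the largest value consistent with the hypothesis $\delta<1/8$ of Lemma~\ref{lem:medianMSE} for which $(8\delta)^r=2^{-r}$, matching the probability in~\eqref{eqn:errorbound}.

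Concretely, I would put $\ca=\{Z(\bsk)=1\text{ for some }\bsk\in K\}$ with $K=\bigcup_{u\neq\emptyset}K_u(T_{u,m})$ and $T_{u,m}$ given by~\eqref{eqn:Tum} at $\delta=1/16$. Corollary~\ref{cor:eventA} yields $\Pr(\ca)\le\delta<1/8$, and $\e(\hat\mu_\infty\giv\ca^c)=\mu$ holds because $\ca$ is a function of the generating matrices $C_j$ alone, so conditioning on $\ca^c$ leaves the digital shifts $\vec D_j$ independent and Bernoulli$(0.5)$ and keeps every sign $S(\bsk)$, $\bsk\neq\bszero$, zero-mean (this is the reasoning underlying~\eqref{eqn:conditionalvar}; cf.\ \cite[Subsection 2.4]{pan2024automatic}). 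Thus Lemma~\ref{lem:medianMSE} applies. Combining~\eqref{eqn:conditionalvar} with $\Pr(Z(\bsk)=1)=2^{-m}$ for every $\bsk\neq\bszero$ under the completely random design, and splitting the sum according to $u=\supp(\bsk)$ — using $\hat f(\bsk)=\hat f_u(\bsk)$ when $\supp(\bsk)=u$ and $\hat f_u(\bsk)=0$ otherwise, together with $K_u(T_{u,m})\subseteq K$ — each group falls under Lemma~\ref{lem:fukbound} (valid since $\Vert f\Vert_{u,\alpha,\lambda}\le\Vert f\Vert_{\alpha,\lambda}<\infty$), giving
\[
\Pr(\ca^c)\var(\hat\mu_\infty\giv\ca^c)\le 2^{-m}\sum_{\substack{u\subseteq 1{:}s\\u\neq\emptyset}}4^{-\theta T_{u,m}}\,\Vert f\Vert^2_{u,\alpha,\lambda}\,C_{\alpha,\lambda,\theta}^{|u|}.
\]

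To finish, I would substitute $\Vert f\Vert^2_{u,\alpha,\lambda}=\gamma_u^2\Vert f\Vert^2_{\alpha,\lambda}$ from~\eqref{eqn:gammau} and expand $4^{-\theta T_{u,m}}=2^{-2\theta T_{u,m}}$ using~\eqref{eqn:Tum}: the $(\alpha+\lambda)m$ piece of $T_{u,m}$ produces $2^{-2\theta(\alpha+\lambda)m}$, which combines with the $2^{-m}$ prefactor into $2^{-(2\theta(\alpha+\lambda)+1)m}$; the $\log_2\Gamma_{\alpha,\lambda,\gamma_u}(m)$ piece produces $\Gamma_{\alpha,\lambda,\gamma_u}(m)^{2\theta(\alpha+\lambda)}$; the $\log_2\delta$ piece produces $\delta^{-2\theta(\alpha+\lambda)}$; and the $\log_2\gamma_u$ piece turns the exponent of $\gamma_u$ from $2$ into $2-\tfrac{2\theta(\alpha+\lambda)}{\alpha+\lambda+1/2}=\tfrac{1+2(1-\theta)(\alpha+\lambda)}{\alpha+\lambda+1/2}$. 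Multiplying by the $\delta^{-1}$ from Lemma~\ref{lem:medianMSE} leaves the scalar $\delta^{-(2\theta(\alpha+\lambda)+1)}=16^{2\theta(\alpha+\lambda)+1}$, so the bound on $\delta^{-1}\Pr(\ca^c)\var(\hat\mu_\infty\giv\ca^c)$ is exactly $\Gamma_{\alpha,\lambda,\gamma_u,\theta}\Vert f\Vert^2_{\alpha,\lambda}\Gamma_{\alpha,\lambda,\gamma_u}(m)^{2\theta(\alpha+\lambda)}/2^{(2\theta(\alpha+\lambda)+1)m}$, and $(8\delta)^r=2^{-r}$ completes~\eqref{eqn:errorbound}. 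I do not anticipate a genuine obstacle, since everything is a linkage of results already established; the step requiring the most care is this last exponent-tracking — verifying that the power of $\gamma_u$ and the constant $16^{2\theta(\alpha+\lambda)+1}$ emerge precisely as in the statement, which hinges on the exact form of $T_{u,m}$ in~\eqref{eqn:Tum} and the choice $\delta=1/16$.
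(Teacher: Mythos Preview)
Your proposal is correct and follows essentially the same route as the paper's proof: both set $\delta=1/16$, invoke Corollary~\ref{cor:eventA} to bound $\Pr(\ca)$, combine Lemma~\ref{lem:medianMSE} with inequality~\eqref{eqn:conditionalvar} and $\Pr(Z(\bsk)=1)=2^{-m}$, split by $u=\supp(\bsk)$ and apply Lemma~\ref{lem:fukbound}, and then substitute $T_{u,m}$ from~\eqref{eqn:Tum} together with $\Vert f\Vert^2_{u,\alpha,\lambda}=\gamma_u^2\Vert f\Vert^2_{\alpha,\lambda}$. Your exponent tracking in the final step is in fact more explicit than the paper's, which simply says the conclusion ``follows by plugging in''; your computation of the $\gamma_u$-exponent and the constant $16^{2\theta(\alpha+\lambda)+1}$ is accurate.
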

\begin{proof}
Let $\delta=1/16$ in Lemma~\ref{lem:medianMSE} and Corollary~\ref{cor:eventA}. By Lemma~\ref{lem:medianMSE} and $\Pr(Z(\bsk)=1)=2^{-m}$ for $\bsk\neq\bszero$, it suffices to show 
    $$2^{-m+4}\sum_{\bsk\in\natu^s_*\setminus K} |\hat{f}(\bsk)|^2\leq \Gamma_{\alpha,\lambda,\gamma_u,\theta}\Vert f\Vert^2_{\alpha,\lambda}\frac{\Gamma_{\alpha,\lambda,\gamma_u}(m)^{2\theta(\alpha+\lambda)}}{2^{(2\theta(\alpha+\lambda)+1)m}}$$
    for $K$ defined by equation~\eqref{eqn:setK} with $T_{u,m}$ given by equation~\eqref{eqn:Tum}.  From equation~\eqref{eqn:fuWalsh} and Lemma~\ref{lem:fukbound},
\begin{align}\label{eqn:ftofu}
   \sum_{\bsk\in\natu^s_*\setminus K} |\hat{f}(\bsk)|^2
   =\sum_{\substack{u\subseteq 1{:}s\\ u\neq\emptyset}}\sum_{\bsk\in\natu_*^{s}\setminus K_u(T_{u,m})}|\hat{f}_u(\bsk)|^2
   \leq   \sum_{\substack{u\subseteq 1{:}s\\ u\neq\emptyset}} \frac{\Vert f\Vert^2_{u,\alpha,\lambda} }{4^{\theta T_{u,m}}}   C_{\alpha,\lambda,\theta}^{|u|}.
\end{align}
Our conclusion follows by plugging in $\Vert f\Vert^2_{u,\alpha,\lambda} =\gamma^2_u\Vert f\Vert^2_{\alpha,\lambda}$ and equation~\eqref{eqn:Tum}.
\end{proof}

\begin{remark}\label{rmk:asymp}
Given a failure probability tolerance $\bar{\delta}>0$, we can set $r=\lceil\log_2(\bar{\delta}^{-1})\rceil$ to ensure $2^{-r}\leq \bar{\delta}$. For this choice of $r$,
    the median estimator $\hat{\mu}^{(r)}_{\infty}$ asymptotically achieves a convergence rate $O(N^{-\alpha-\lambda-1/2+\eta})$ for any $\eta>0$, where $N=r2^m$ is the total number of function evaluations.
    
    To justify this, observe that $\Gamma_{\alpha,\lambda,\gamma_u}(m)^{2(\alpha+\lambda)}=O(2^{\eta m })$ asymptotically  for any $\eta>0$ because $\Gamma_{\alpha,\lambda,\gamma_u}(m)$ is a polynomial of $m$. Setting $\theta=1-\eta/(2(\alpha+\lambda))$, we derive for all $f$ with $\Vert f\Vert_{\alpha,\lambda}<\infty$ that
    \begin{equation}\label{eqn:erroboundasymp}
    \Pr\Big(|\hat{\mu}^{(r)}_{\infty}-\mu|^2> C_\eta \Vert f\Vert^2_{\alpha,\lambda} 2^{-(2\alpha+2\lambda+1-2\eta)m}\Big)\leq 2^{-r}\leq \bar{\delta},    
    \end{equation}
    where $C_\eta$ is a constant depending on $\alpha,\lambda,\gamma_u$ and $\eta$. This convergence rate is nearly optimal, as it aligns with the lower bound established in \cite{bakh:1959} for one-dimensional $f$ with $\lambda$-Hölder continuous $f^{(\alpha)}$, a subclass of functions with finite $\Vert f\Vert_{\alpha,\lambda}$.
\end{remark}

\begin{remark}\label{rmk:tractability}
    In settings where $s\to\infty$ while $\Vert f\Vert_{\alpha,\lambda}\leq 1$, dimension-independent error bounds hold if there exist weights $\{\gamma_j,j\in \natu\}$ satisfying
    \begin{equation}\label{eqn:summable}
     \gamma_u\leq \prod_{j\in u}\gamma_j\quad \forall u\subseteq \natu \quad \text{ and } \quad C_\infty=\sum_{j=1}^\infty\gamma^{\frac{1}{\alpha+\lambda+1/2}}_j<\infty.  
    \end{equation}
    To see this, we first bound
    \begin{align*}
    \frac{\Gamma_{\alpha,\lambda,\gamma_u,\theta}}{16^{2\theta(\alpha+\lambda)+1}}\leq \sum_{\substack{u\subseteq 1{:}s\\ u\neq\emptyset}} \gamma^{\frac{1}{\alpha+\lambda+1/2}}_u  C_{\alpha,\lambda,\theta}^{|u|}   
    \leq \prod_{j=1}^s (1+ \gamma^{\frac{1}{\alpha+\lambda+1/2}}_j  C_{\alpha,\lambda,\theta})
    \leq \exp(C_{\infty}C_{\alpha,\lambda,\theta}),
    \end{align*}
    where we have used $1+x\leq \exp(x)$. Similarly, 
    \begin{align*}
        \Gamma_{\alpha,\lambda,\gamma_u}(m)=&1+\sum_{\substack{u\subseteq 1{:}s\\ u\neq\emptyset}}\gamma^{\frac{1}{\alpha+\lambda+1/2}}_u (A_{\alpha,\lambda}m +B_{\alpha,\lambda})^{|u|}\\
        \leq &\prod_{j=1}^s \Big(1+\gamma^{\frac{1}{\alpha+\lambda+1/2}}_j(A_{\alpha,\lambda}m +B_{\alpha,\lambda})\Big),
    \end{align*}
    which can be further bounded by $C'_{\eta} 2^{\eta m}$ for any $\eta>0$ with a constant $C'_{\eta}$ independent of $s$ \cite[Lemma 3]{HICKERNELL2003286}. Putting the above bounds into equation~\eqref{eqn:errorbound}, we conclude that equation~\eqref{eqn:erroboundasymp} holds with a constant $C_\eta$ independent of $s$. Notably, equation~\eqref{eqn:summable} resembles the tractability conditions assumed by \cite[Corollary 3.13]{Goda2024} for polynomial lattices to achieve near $O(2^{-(\alpha+\lambda+1/2)m})$ convergence rates.
\end{remark}

\section{Tractability under more general randomization}\label{sec:general}

While the completely random design offers theoretical advantages in near-optimal convergence rates and tractability, the random linear scrambling often outperforms it in practical settings due to tighter error bounds on low-dimensional ANOVA components. For example, \cite{joe:kuo:2008} designs the generating matrices $\mathcal{C}_j$ to explicitly optimize the discrepancy of all two-dimensional projections of the resulting digital nets. In this section, we extend the analysis of Section~\ref{sec:main} to a broader class of randomization schemes, with the random linear scrambling as a key example.

To state our assumption, we first define, for $d\in \natu$ and finite subset $\kappa\subseteq \natu$: 
$$\Vert\kappa\Vert_{(d)}=\begin{cases}
    \Vert \kappa\Vert, &\text{ if } |\kappa|\leq d \\
    \Vert \lceil\kappa\rceil_{1{:}d}\Vert, &\text{ if } |\kappa|>d
\end{cases}.$$

\begin{assumption}\label{assump:tu}
Under the randomization scheme for $C_j\in \{0,1\}^{E\times m}$, $j\in 1{:}s$, there exists $d\in \natu$ and a set of nonnegative integers $\tu$ for each nonempty $u\subseteq 1{:}s$ such that for $\bsk\in\natu^s_*$ with $\supp(\bsk)=u$, 
    $$\Pr(Z(\bsk)=1)\leq2^{-m+\tu} \bsone\Big\{\sum_{j\in u} \Vert\kappa_j\Vert_{(d)} > d(m-\tu)\Big\}.$$
\end{assumption}

From \cite[Lemma 2.17]{Goda2024}, the random linear scrambling satisfies the above assumption with parameters $d=1$ and $\Tilde{t}_u=t_u+|u|-1$, where $t_u$ denotes the projection-dependent $t$-value defined in \cite[Section 2]{Goda2024}. We frame this assumption to align with broader applications in higher-order digital nets \cite{dick:2011}, where the parameter $d$ (the order of digital nets) may exceed $1$. This formulation ensures compatibility with both classical $d=1$ constructions and advanced higher-order constructions.

We first consider the $d\geq \alpha+\lambda$ case. The analysis is easy thanks to the good properties of our randomization.

\begin{theorem}\label{thm:larged}
    Suppose $\Vert f\Vert_{\alpha,\lambda}<\infty$ for $\alpha\in \natu_0$ and $\lambda\in (0,1]$. Then for $m\in \natu, r\in \natu$ and $\theta\in (0,1)$, a randomization scheme satisfying Assumption~\ref{assump:tu} with $d\geq \alpha+\lambda$ ensures
\begin{equation*}
    \Pr\Big(|\hat{\mu}^{(r)}_{\infty}-\mu|^2>\Gamma_{\alpha,\lambda,\gamma_u,\tu,\theta}\Vert f\Vert^2_{\alpha,\lambda}2^{-(2\theta(\alpha+\lambda)+1)m}\Big)\leq  2^{-r},
\end{equation*}
where $\gamma_u$ is given by equation~\eqref{eqn:gammau},  $\tu$ is given in Assumption~\ref{assump:tu} and
$$\Gamma_{\alpha,\lambda,\gamma_u,\tu,\theta}=16\sum_{\substack{u\subseteq 1{:}s\\ u\neq\emptyset}} \gamma^2_uC_{\alpha,\lambda,\theta}^{|u|} 2^{\tu(2\theta(\alpha+\lambda)+1)}$$
with $C_{\alpha,\lambda,\theta}$ given by equation~\eqref{eqn:Cconstant}.
\end{theorem}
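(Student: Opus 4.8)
The plan is to recycle the argument behind Theorem~\ref{thm:CRD}, with one essential simplification: because Assumption~\ref{assump:tu} already kills every Walsh mode with small $\sum_{j\in u}\Vert\kappa_j\Vert_{(d)}$, there is no need for a nontrivial exclusion event, and the choice $\ca=\emptyset$ suffices. Concretely, I would apply Lemma~\ref{lem:medianMSE} with $\delta=1/16$ and $\ca=\emptyset$ (so $\Pr(\ca)=0<1/8$, and $\e(\hat\mu_\infty\giv\ca^c)=\e(\hat\mu_\infty)=\mu$ since $\e(S(\bsk))=0$ for $\bsk\neq\bszero$), which reduces the theorem to an upper bound on $16\var(\hat\mu_\infty)$. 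Bounding $\var(\hat\mu_\infty)$ via equation~\eqref{eqn:conditionalvar} with $K=\emptyset$, using $\hat f(\bsk)=\hat f_u(\bsk)$ for $u=\supp(\bsk)$ from equation~\eqref{eqn:fuWalsh}, and inserting Assumption~\ref{assump:tu} gives
$$\var(\hat\mu_\infty)\leq\sum_{\substack{u\subseteq 1{:}s\\ u\neq\emptyset}}2^{-m+\tu}\!\!\sum_{\substack{\bsk\in\natu^s_*,\ \supp(\bsk)=u\\ \sum_{j\in u}\Vert\kappa_j\Vert_{(d)}>d(m-\tu)}}\!\!|\hat f_u(\bsk)|^2 .$$

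The crux is then a combinatorial inclusion: when $d\geq\alpha+\lambda$ — equivalently $d\geq\alpha+1$, since $d\in\natu$ and $\lambda\in(0,1]$ —
$$\Big\{\bsk\in\natu^s_*\ \Big|\ \supp(\bsk)=u,\ \textstyle\sum_{j\in u}\Vert\kappa_j\Vert_{(d)}>d(m-\tu)\Big\}\ \subseteq\ \natu^s_*\setminus K_u\!\big((\alpha+\lambda)(m-\tu)\big).$$
To prove it I would fix such a $\bsk$ with $\supp(\bsk)=u$, write $\bsk\in B(\bsk_u,v)$ for the unique pair $(\bsk_u,v)$ (so $v=\{j\in u:|\kappa_j|\geq\alpha+1\}$, and on $v$ the set $\kappa_j$ is replaced by $\lceil\kappa_j\rceil_{1{:}(\alpha+1)}$), and set $\ell_j=\lceil\kappa_j\rceil_{\alpha+1}$. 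Two elementary digit estimates close the gap: $\Vert\lceil\kappa_j\rceil_{1{:}(\alpha+1)}\Vert\geq(\alpha+1)\ell_j$, since the $\alpha+1$ largest digits of $\kappa_j$ are all $\geq\ell_j$; and $\Vert\kappa_j\Vert_{(d)}\leq\Vert\lceil\kappa_j\rceil_{1{:}(\alpha+1)}\Vert+(d-\alpha-1)\ell_j$, since the $(d)$-truncation retains at most $d-\alpha-1$ additional digits, each strictly below $\ell_j$. Summing over $j\in u$ and eliminating $\sum_{j}\ell_j$ between the two estimates yields $\sum_{j\in u}\Vert\kappa_j\Vert_{(d)}\leq\frac{d}{\alpha+1}\sum_{j\in u}\Vert\lceil\kappa_j\rceil_{1{:}(\alpha+1)}\Vert$, so the hypothesis forces $\sum_{j\in u}\Vert\lceil\kappa_j\rceil_{1{:}(\alpha+1)}\Vert>(\alpha+1)(m-\tu)$, whence (using $1-\lambda\geq0$)
$$(\lambda-1)\sum_{j\in u}\ell_j+\sum_{j\in u}\Vert\lceil\kappa_j\rceil_{1{:}(\alpha+1)}\Vert\ \geq\ \tfrac{\alpha+\lambda}{\alpha+1}\sum_{j\in u}\Vert\lceil\kappa_j\rceil_{1{:}(\alpha+1)}\Vert\ >\ (\alpha+\lambda)(m-\tu).$$
This is exactly the statement that $\bsk_u\notin\ck_{u,v}\big((\alpha+\lambda)(m-\tu)\big)$, and uniqueness of the pair $(\bsk_u,v)$ then gives $\bsk\notin K_u\big((\alpha+\lambda)(m-\tu)\big)$.

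With the inclusion established, I would apply Lemma~\ref{lem:fukbound} with $T=(\alpha+\lambda)(m-\tu)$ to bound each inner sum above by $\Vert f\Vert^2_{u,\alpha,\lambda}\,C_{\alpha,\lambda,\theta}^{|u|}\big/4^{\theta(\alpha+\lambda)(m-\tu)}$, substitute $\Vert f\Vert^2_{u,\alpha,\lambda}=\gamma_u^2\Vert f\Vert^2_{\alpha,\lambda}$, and collect exponents of $2$: the prefactor $2^{-m+\tu}\big/4^{\theta(\alpha+\lambda)(m-\tu)}$ equals $2^{-(2\theta(\alpha+\lambda)+1)m}\cdot 2^{\tu(2\theta(\alpha+\lambda)+1)}$, so that summing over $u$ and multiplying by $16$ reproduces $\Gamma_{\alpha,\lambda,\gamma_u,\tu,\theta}\Vert f\Vert^2_{\alpha,\lambda}\,2^{-(2\theta(\alpha+\lambda)+1)m}$, which is the claimed bound. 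I expect the only real obstacle to be the combinatorial inclusion — pinning down the two digit inequalities and noticing that $d\geq\alpha+1$ is precisely the threshold at which $\sum_{j\in u}\Vert\kappa_j\Vert_{(d)}$ is controlled by $\sum_{j\in u}\Vert\lceil\kappa_j\rceil_{1{:}(\alpha+1)}\Vert$ up to the harmless factor $d/(\alpha+1)$; the rest is bookkeeping already carried out for Theorem~\ref{thm:CRD}.
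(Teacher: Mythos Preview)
Your proposal is correct and follows essentially the same route as the paper. The paper phrases the key combinatorial step in the contrapositive direction---taking $\bsk'\in K_u\big((\alpha+\lambda)(m-\tu)\big)$ and showing $\sum_{j\in u}\Vert\kappa'_j\Vert_{(d)}\le d(m-\tu)$, hence $\Pr(Z(\bsk')=1)=0$---and compresses your two digit inequalities into the single estimate $\Vert\kappa'_j\Vert_{(d)}\le \frac{d}{\alpha+\lambda}\big((\lambda-1)\lceil\kappa_j\rceil_{\alpha+1}+\Vert\kappa_j\Vert\big)$ for $j\in v$; but these are the same inclusion and the same arithmetic, and the remaining steps (Lemma~\ref{lem:medianMSE} with $\delta=1/16$, equation~\eqref{eqn:conditionalvar}, Lemma~\ref{lem:fukbound} with $T=(\alpha+\lambda)(m-\tu)$, then $\Vert f\Vert^2_{u,\alpha,\lambda}=\gamma_u^2\Vert f\Vert^2_{\alpha,\lambda}$) match exactly.
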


\begin{proof}
Let $K$ be defined as in equation~\eqref{eqn:setK} with $T_{u,m}=(\alpha+\lambda)(m-\tu)$. For any $\bsk'=(k'_1,\dots,k'_s)\in B(\bsk_u,v)$ with $\bsk_u=(k_j,j\in u)\in \mathcal{K}_{u,v}(T_{u,m})$, $\Vert \kappa'_j\Vert_{(d)}=\Vert \kappa_j\Vert$ for $j\in u\setminus v$ because $|\kappa'_j|=|\kappa_j|\leq d$ and 
$$\Vert \kappa'_j\Vert_{(d)}\leq\frac{d}{\alpha+\lambda} \Big((\lambda-1)\lceil\kappa_j\rceil_{\alpha+1}+\Vert\kappa_j\Vert\Big)$$
for $j\in v$ because $\lceil\kappa'_j\rceil_{1{:}(\alpha+1)}=\kappa_j$. Hence,
$$\sum_{j\in u}\Vert \kappa'_j\Vert_{(d)}\leq \frac{d}{\alpha+\lambda}\Big((\lambda-1)\sum_{j\in u} \lceil\kappa_j\rceil_{\alpha+1}+\sum_{j\in u}\Vert\kappa_j\Vert\Big)\leq \frac{d}{\alpha+\lambda}T_{u,m}= d(m-\tu). $$
It follows from Assumption~\ref{assump:tu} that $\Pr(Z(\bsk)=1)=0$ for all $\bsk\in K$ and we can apply Lemma~\ref{lem:medianMSE} with $\delta=1/16$ to get
\begin{equation*}
    \Pr\Big(|\hat{\mu}^{(r)}_{\infty}-\mu|^2>16\sum_{\bsk\in\natu^s_*\setminus K} \Pr(Z(\bsk)=1) |\hat{f}(\bsk)|^2\Big)\leq 2^{-r}.
\end{equation*}
Assumption~\ref{assump:tu} also implies $\Pr(Z(\bsk)=1)\leq 2^{-m+\tu}$ if $\supp(\bsk)=u$.  An argument similar to equation~\eqref{eqn:ftofu} gives
\begin{align*}
    \sum_{\bsk\in\natu^s_*\setminus K} \Pr(Z(\bsk)=1)|\hat{f}(\bsk)|^2
   \leq &\sum_{\substack{u\subseteq 1{:}s\\ u\neq\emptyset}}2^{-m+\tu}\sum_{\bsk\in\natu_*^{s}\setminus K_u(T_{u,m})}|\hat{f}_u(\bsk)|^2 \\
   \leq &   2^{-m}\sum_{\substack{u\subseteq 1{:}s\\ u\neq\emptyset}} 2^{\tu}\frac{\Vert f\Vert^2_{u,\alpha,\lambda} }{4^{\theta T_{u,m}}}   C_{\alpha,\lambda,\theta}^{|u|}
\end{align*}
Our conclusion follows after plugging in $\Vert f\Vert^2_{u,\alpha,\lambda} =\gamma^2_u\Vert f\Vert^2_{\alpha,\lambda}$ and $T_{u,m}=(\alpha+\lambda)(m-\tu)$.
\end{proof}

\begin{remark}
Analogous to Corollary~\ref{cor:eventA}, the bound in Theorem~\ref{thm:larged} can be refined by selecting thresholds $T_{u,m}\geq (\alpha+\lambda)(m-\tu)$ while ensuring
$$\sum_{\substack{u\subseteq 1{:}s\\ u\neq\emptyset}}2^{-m+\tu} \Big(|K_u(T_{u,m})| - |K_u\big((\alpha+\lambda)(m-\tu)\big)|\Big)\leq \delta.$$
However, this refinement offers limited practical improvement: Lemma~\ref{lem:KuTbound} suggests $|K_u(T)|$ grows exponentially in $T$, so the margin $T_{u,m}- (\alpha+\lambda)(m-\tu)$ is at best $O(\alpha+\lambda)$. Consequently, the error bound improves by only a factor of $2^{O(\alpha+\lambda)}$, leaving the asymptotic convergence rate unchanged.
\end{remark}

Next, we consider the $d<\alpha+\lambda$ case. We first establish the counterpart of Lemma~\ref{lem:KuTbound}.
\begin{lemma}\label{lem:KuTbound2}
For $d\in \natu, T'\geq 0$ and nonempty $u\subseteq 1{:}s$, let
$$K'_u(T')=\Big\{\bsk\in \natu^s_*\mid \supp(\bsk)=u,\sum_{j\in u}\Vert\kappa_j\Vert_{(d)}>T'\Big\}.$$
Then for $\alpha\in \natu_0,\lambda\in (0,1]$ satisfying $\alpha+\lambda>d$, $\theta'\in (0,1)$ and $T \in [T', (\alpha+\lambda)T'/d]$,
\begin{align*}
  &|K_u(T)\cap K'_u(T')|\\
  \leq &2^{(T-T')/\beta} \Big(A_{\alpha,\lambda}\max\big((T-T')/\beta,1\big) +B_{\alpha,\lambda}\Big)^{|u|} + 
    2^{(T-\theta'T')/\beta} D^{|u|}_{\alpha,\lambda,d,\theta'},
\end{align*}
    where $\beta=\alpha+\lambda-d$, $A_{\alpha,\lambda},B_{\alpha,\lambda}$ are given by equation~\eqref{eqn:ABconstant}, and 
    \begin{equation}\label{eqn:constantD}
        D_{\alpha,\lambda,d,\theta'}=\frac{1}{d!(2^{(1-\theta')/\beta}-1)^{d}}\sum_{\alpha'=1}^{\alpha-d}  \frac{1}{(\alpha')!(2^{1/\beta}-1)^{\alpha'}}+\sum_{\alpha'=1}^{d}\frac{1}{(\alpha')!(2^{(1-\theta')/\beta}-1)^{\alpha'}},
    \end{equation}
    with the sum from $1$ to $\alpha-d$ interpreted as $0$ when $\alpha=d$.
\end{lemma}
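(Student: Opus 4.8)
The plan is to follow the proof of Lemma~\ref{lem:KuTbound} while carrying along the extra truncated-norm constraint. The first observation is that on each block $B(\bsk_u,v)$ the value of $\sum_{j\in u}\Vert\kappa'_j\Vert_{(d)}$ is constant and equals $\sum_{j\in u}\Vert\kappa_j\Vert_{(d)}$: for $j\in u\setminus v$ one has $\kappa'_j=\kappa_j$, while for $j\in v$ the extra bits of $\kappa'_j$ all lie below $\lceil\kappa_j\rceil_{\alpha+1}$, so they never enter the top $d$ indices (note $\alpha+\lambda>d$ forces $\alpha\ge d$, hence $|\kappa_j|=\alpha+1>d$ for $j\in v$). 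Consequently each block $B(\bsk_u,v)$ is either contained in $K'_u(T')$ or disjoint from it, and, exactly as in equation~\eqref{eqn:KuT} but with one extra indicator,
$$|K_u(T)\cap K'_u(T')|=\sum_{v\subseteq u}\sum_{\substack{\bsk_u\in\mathcal{K}_{u,v}(T)\\ \sum_{j\in u}\Vert\kappa_j\Vert_{(d)}>T'}}\prod_{j\in v}2^{\lceil\kappa_j\rceil_{\alpha+1}-1}.$$

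Next I would apply the change of variables of Lemma~\ref{lem:KuTbound} (for $j\in v$ set $\ell_j=\lceil\kappa_j\rceil_{\alpha+1}$ and shift $\lceil\kappa_j\rceil_{1{:}\alpha}$ down by $\ell_j$) and, in addition, split each of these shifted $\alpha$-sets, as well as each $\kappa_j$ with $j\in u\setminus v$ and $|\kappa_j|>d$, into its top $d$ elements and its remaining ``bottom'' elements. Writing $L=\sum_{j\in v}\ell_j$, letting $N^{\mathrm{top}}$ denote the total weight of the top parts and $\Delta\ge0$ the total weight of the bottom parts (the truncation loss), the $K_u(T)$-constraint becomes $(\alpha+\lambda)L+N^{\mathrm{top}}+\Delta\le T$ and the $K'_u(T')$-constraint becomes $dL+N^{\mathrm{top}}>T'$. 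Subtracting yields the key inequality $\beta L+\Delta<T-T'$ with $\beta=\alpha+\lambda-d$; in particular $L<(T-T')/\beta$, so $\prod_{j\in v}2^{\ell_j-1}<2^{(T-T')/\beta}$, and $\Delta$ is similarly small. The hypothesis $T\le(\alpha+\lambda)T'/d$ enters here: it gives $(T-T')/\beta\le T'/d$, hence $dL<T'$, so $N^{\mathrm{top}}$ must carry a positive share of $T'$.

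Finally I would bound the resulting sum with the generating-function machinery of Lemma~\ref{lem:KuTbound}, now balancing the two budgets above, and the two terms of the conclusion reflect two ways of disposing of the truncated-norm constraint. Handling $dL+N^{\mathrm{top}}>T'$ directly and summing $N^{\mathrm{top}}$ against $(\alpha+\lambda)L+N^{\mathrm{top}}\le T-\Delta$, while bounding the $\bsell_v$-sum via equation~\eqref{eqn:sumoverl} using $\beta L<T-T'$, produces the term $2^{(T-T')/\beta}\bigl(A_{\alpha,\lambda}\max((T-T')/\beta,1)+B_{\alpha,\lambda}\bigr)^{|u|}$, whose constants inherit the $(\alpha+\lambda)$-rate of Lemma~\ref{lem:KuTbound}. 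Handling the same constraint by an exponential-moment (Markov) step at rate $\theta'/\beta$, paired with $(\alpha+\lambda)L+N^{\mathrm{top}}+\Delta\le T$ at rate $1/\beta$, produces the term $2^{(T-\theta'T')/\beta}D_{\alpha,\lambda,d,\theta'}^{|u|}$: per coordinate the top-$d$ elements then acquire the combined weight $2^{-(1-\theta')/\beta}$ (giving the $(2^{(1-\theta')/\beta}-1)$ factors), the bottom elements the weight $2^{-1/\beta}$ (giving the $(2^{1/\beta}-1)$ factors), and letting $|\kappa_j|$ run over $\{1,\dots,d\}$ or over $\{d+1,\dots,\alpha\}$ yields the two summation ranges in $D_{\alpha,\lambda,d,\theta'}$. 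In both regimes the coupling between a coordinate's top-$d$ and bottom parts (the former must exceed the latter) is discarded by over-counting, which is legitimate since all summands are nonnegative. I expect the main obstacle to be exactly this bookkeeping: deciding precisely where to split the two regimes and checking that, after relaxing the ordering constraint, the geometric and generating-function sums collapse to the stated constants $A_{\alpha,\lambda}$, $B_{\alpha,\lambda}$, $D_{\alpha,\lambda,d,\theta'}$ and prefactors rather than to those quantities inflated by a spurious $C^{|u|}$ factor.
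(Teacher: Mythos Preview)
Your sketch captures the paper's argument: the block-constancy of $\sum_{j\in u}\Vert\kappa'_j\Vert_{(d)}$ on each $B(\bsk_u,v)$, the shift $\kappa_j^+\mapsto\kappa_j^+-\ell_j$, the top-$d$/bottom split, and the exponential-moment step with parameter $\theta'$ for the $D$-term all appear in the paper, and your per-coordinate reading of $D_{\alpha,\lambda,d,\theta'}$ matches it exactly.

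Two points deserve sharpening. First, the two terms in the bound are not two competing relaxations of the same sum; in the paper they come from the structural split $v=\emptyset$ versus $v\neq\emptyset$. Your description of the $D$-term, with $|\kappa_j|$ ranging over $\{1,\dots,d\}$ and $\{d+1,\dots,\alpha\}$, is precisely the $v=\emptyset$ computation (where $L=0$ automatically and the constraints reduce to $T'<N'\le N\le T$). Second, for the $v\neq\emptyset$ part the paper does \emph{not} feed $\beta L<T-T'$ directly into \eqref{eqn:sumoverl}. Instead it keeps the native constraint $L\le (T-N)/(\alpha+\lambda)$, observes that the two range conditions on $\bsell_v$ force the inner sum to vanish unless
\[
N>N_*:=\frac{(\alpha+\lambda)T'-dT}{\alpha+\lambda-d},
\]
and then uses $(T-N)/(\alpha+\lambda)<(T-N_*)/(\alpha+\lambda)=(T-T')/\beta$ to arrive at the exponent $2^{(T-T')/\beta}$. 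If you literally plug $L<(T-T')/\beta$ into \eqref{eqn:sumoverl} and sum over $N$ separately, the $2^L$ from the block sizes and the $2^{-L}$ coming from the $N$-Markov step interfere and you do not land on the stated prefactor. With these two adjustments your plan coincides with the paper's proof.
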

\begin{proof}
   See Appendix~\ref{app2}.
\end{proof}

\begin{corollary}\label{cor:eventA2}
    For $m\in \natu$ and $\gamma_u$ defined by equation~\eqref{eqn:gammau}, let
    \begin{equation}\label{eqn:Gamma2}
     \Gamma_{\alpha,\lambda,d,\theta',\gamma_u}(m)=1+\sum_{\substack{u\subseteq 1{:}s\\ u\neq\emptyset}}\gamma^{\frac{1}{\alpha+\lambda+1/2}}_u \Big ((A_{\alpha,\lambda}m+B_{\alpha,\lambda})^{|u|}+2^{\frac{(1-\theta')d}{\alpha+\lambda-d}m}D^{|u|}_{\alpha,\lambda,d,\theta'}\Big).   
    \end{equation}
    Then for any $\delta\in (0,1)$, 
    $$\Pr\Big(Z(\bsk)=1 \text{ for some } \bsk\in \bigcup_{\substack{u\subseteq 1{:}s\\ u\neq\emptyset}} K_u(T_{u,m})\Big)\leq \delta$$
    under a randomization scheme  satisfying Assumption~\ref{assump:tu} with $d< \alpha+\lambda$, where
    \begin{equation}\label{eqn:Tum2}
        T_{u,m}=(\alpha+\lambda)(m-\tu)-(\alpha+\lambda-d)\Big(\log_2\big(\Gamma_{\alpha,\lambda,d,\theta',\gamma_u}(m)\big)-\log_2(\delta)-\frac{\log_2(\gamma_u)}{\alpha+\lambda+1/2}\Big).
    \end{equation}
\end{corollary}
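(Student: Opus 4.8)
The plan is to follow the template of Corollary~\ref{cor:eventA}, replacing the trivial identity $\Pr(Z(\bsk)=1)=2^{-m}$ by the refined estimate of Assumption~\ref{assump:tu} and using Lemma~\ref{lem:KuTbound2} in place of Lemma~\ref{lem:KuTbound}. Since every $\bsk\in K_u(T_{u,m})$ has $\supp(\bsk)=u$, a union bound together with Assumption~\ref{assump:tu} reduces the claim to showing
\begin{equation*}
\sum_{\substack{u\subseteq 1{:}s\\u\neq\emptyset}}2^{-m+\tu}\,\bigl|K_u(T_{u,m})\cap K'_u\bigl(d(m-\tu)\bigr)\bigr|\leq \delta ,
\end{equation*}
where $K'_u$ is as in Lemma~\ref{lem:KuTbound2}, because the indicator $\bsone\{\sum_{j\in u}\Vert\kappa_j\Vert_{(d)}>d(m-\tu)\}$ is precisely $\bsone\{\bsk\in K'_u(d(m-\tu))\}$.

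The key preliminary observation is a containment: every $\bsk\in K_u(T_{u,m})$ satisfies $\sum_{j\in u}\Vert\kappa_j\Vert_{(d)}\leq T_{u,m}$. To see this, write $\bsk\in B(\bsk_u,v)$ for some $\bsk_u=(k_j,j\in u)\in\ck_{u,v}(T_{u,m})$ with bit-supports $\kappa_j$. Since $d<\alpha+\lambda\leq\alpha+1$ and $d\in\natu$ force $d\leq\alpha$, the top $d$ bits of each coordinate of $\bsk$ agree with those of $\bsk_u$ by the constraints in equation~\eqref{eqn:Bku}, so the value of $\Vert\kappa_j\Vert_{(d)}$ is unchanged when passing from $\bsk_u$ to $\bsk$; and for the generator one has $\Vert\kappa_j\Vert_{(d)}\leq(\lambda-1)\lceil\kappa_j\rceil_{\alpha+1}+\Vert\kappa_j\Vert$ in both cases $|\kappa_j|\leq\alpha$ (where $\lceil\kappa_j\rceil_{\alpha+1}=0$ and $\Vert\kappa_j\Vert_{(d)}\leq\Vert\kappa_j\Vert$) and $|\kappa_j|=\alpha+1$ (where the bits of $\kappa_j$ past position $d$ contribute at least $\lceil\kappa_j\rceil_{\alpha+1}\geq(1-\lambda)\lceil\kappa_j\rceil_{\alpha+1}$ since $\lambda\in(0,1]$). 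Summing over $j\in u$ yields exactly the inequality defining $\ck_{u,v}(T_{u,m})$. Consequently, whenever $T_{u,m}\leq d(m-\tu)$ the set $K_u(T_{u,m})\cap K'_u(d(m-\tu))$ is empty and such indices $u$ contribute nothing; this also absorbs all edge cases, e.g.\ $T_{u,m}<0$.

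For the remaining indices, i.e.\ those with $T_{u,m}>d(m-\tu)$, I would apply Lemma~\ref{lem:KuTbound2} with $T=T_{u,m}$, $T'=d(m-\tu)$, the same $\theta'$, and $\beta=\alpha+\lambda-d$. Its hypothesis $T\in[T',(\alpha+\lambda)T'/d]$ holds: the lower endpoint is the case assumption, while the upper endpoint $T_{u,m}\leq(\alpha+\lambda)(m-\tu)$ follows because the bracketed correction term in equation~\eqref{eqn:Tum2} is nonnegative (using $\Gamma_{\alpha,\lambda,d,\theta',\gamma_u}(m)\geq1$, $\delta<1$, and $\sup_{u\subseteq 1{:}s}\gamma_u=1$). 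Substituting equation~\eqref{eqn:Tum2} into the bound of Lemma~\ref{lem:KuTbound2} and using the algebraic identity
\begin{equation*}
2^{-m+\tu}\,2^{(T_{u,m}-d(m-\tu))/\beta}=\frac{\delta}{\Gamma_{\alpha,\lambda,d,\theta',\gamma_u}(m)}\,\gamma_u^{\frac{1}{\alpha+\lambda+1/2}},
\end{equation*}
together with $\max\bigl((T_{u,m}-d(m-\tu))/\beta,\,1\bigr)\leq m$ (that quantity equals $(m-\tu)$ minus a nonnegative term, and $m\geq1$) and $2^{(1-\theta')d(m-\tu)/\beta}\leq 2^{(1-\theta')dm/\beta}$, the $u$-summand becomes $\frac{\delta}{\Gamma_{\alpha,\lambda,d,\theta',\gamma_u}(m)}\gamma_u^{1/(\alpha+\lambda+1/2)}\bigl((A_{\alpha,\lambda}m+B_{\alpha,\lambda})^{|u|}+2^{(1-\theta')dm/\beta}D_{\alpha,\lambda,d,\theta'}^{|u|}\bigr)$. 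Summing over nonempty $u$ and recognising the definition~\eqref{eqn:Gamma2} of $\Gamma_{\alpha,\lambda,d,\theta',\gamma_u}(m)$ collapses the total to $\frac{\delta}{\Gamma_{\alpha,\lambda,d,\theta',\gamma_u}(m)}\bigl(\Gamma_{\alpha,\lambda,d,\theta',\gamma_u}(m)-1\bigr)<\delta$, which is the desired bound.

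I expect the main obstacle to be the containment in the second paragraph — establishing that the ``cost'' $(\lambda-1)\lceil\kappa\rceil_{\alpha+1}+\Vert\kappa\Vert$ governing membership in $K_u$ dominates $\Vert\kappa\Vert_{(d)}$ whenever $d\leq\alpha$, and that this quantity is invariant along $B(\bsk_u,v)$. This is exactly what lets us simultaneously discard the small-$T_{u,m}$ indices (where Assumption~\ref{assump:tu} makes $Z(\bsk)$ vanish on $K_u(T_{u,m})$) and verify the admissibility range required by Lemma~\ref{lem:KuTbound2}; everything downstream is the same bookkeeping as in Corollary~\ref{cor:eventA}, now carrying the extra $D_{\alpha,\lambda,d,\theta'}$ term along.
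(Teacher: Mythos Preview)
Your proposal is correct and follows essentially the same route as the paper: reduce via Assumption~\ref{assump:tu} and a union bound to controlling $\sum_u 2^{-m+\tu}|K_u(T_{u,m})\cap K'_u(d(m-\tu))|$, split on whether $T_{u,m}$ exceeds $d(m-\tu)$, invoke Lemma~\ref{lem:KuTbound2} in the nontrivial case, and collapse the sum against the definition of $\Gamma_{\alpha,\lambda,d,\theta',\gamma_u}(m)$. Your containment argument showing $\sum_{j\in u}\Vert\kappa_j\Vert_{(d)}\le T_{u,m}$ for $\bsk\in K_u(T_{u,m})$ is in fact more carefully justified than the paper's terse chain $\sum_{j\in u}\Vert\kappa_j\Vert_{(d)}\le\sum_{j\in u}\Vert\kappa_j\Vert\le T_{u,m}$ (whose middle step is not literally valid for $\lambda<1$ when extra low bits are present), so you have improved on the exposition rather than deviated from it.
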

\begin{proof}
Let $T'_{u,m}=d(m-\tu)$. By Assumption~\ref{assump:tu}, for any $\bsk\in \natu^s_*$ with  $\supp(\bsk)=u$, $\Pr(Z(\bsk)=1)=0$ if $\bsk\notin K'_u(T'_{u,m})$ and   $\Pr(Z(\bsk)=1)\leq 2^{-m+\tu}$ if $\bsk\in K'_u(T'_{u,m})$.
    Because $\Gamma_{\alpha,\lambda,d,\theta',\gamma_u}(m)\geq 1, \delta<1$ and $\sup_{u\subseteq 1{:}s}\gamma_u=1$, we have $T_{u,m}\leq (\alpha+\lambda)(m-\tu)= (\alpha+\lambda)T'_{u,m}/d$. If further $T_{u,m}\geq T'_{u,m}$, we can apply Lemma~\ref{lem:KuTbound2} to get
    \begin{align*}
  &|K_u(T_{u,m})\cap K'_u(T'_{u,m})|\\
  \leq &\frac{ \delta \gamma^{\frac{1}{\alpha+\lambda+1/2}}_u}{\Gamma_{\alpha,\lambda,d,\theta',\gamma_u}(m)} 2^{m-\tu}\Big(( A_{\alpha,\lambda}m +B_{\alpha,\lambda})^{|u|} + 
   2^{\frac{(1-\theta')d}{\alpha+\lambda-d}m} D^{|u|}_{\alpha,\lambda,d,\theta'}\Big).
\end{align*}
    A union bound over all $\bsk\in K_u(T_{u,m})\cap K'_u(T'_{u,m})$ implies
    \begin{align*}
     &\Pr(Z(\bsk)=1 \text{ for some } \bsk \in K_u(T_{u,m}))\\
     \leq &\frac{ \delta \gamma^{\frac{1}{\alpha+\lambda+1/2}}_u}{\Gamma_{\alpha,\lambda,d,\theta',\gamma_u}(m)}\Big(( A_{\alpha,\lambda}m +B_{\alpha,\lambda})^{|u|} + 
   2^{\frac{(1-\theta')d}{\alpha+\lambda-d}m} D^{|u|}_{\alpha,\lambda,d,\theta'}\Big).   
    \end{align*}
    On the other hand, when $T_{u,m}<T'_{u,m}$, every $\bsk\in K_u(T_{u,m})$ satisfies
    $$\sum_{j\in u}\Vert\kappa_j\Vert_{(d)}\leq \sum_{j\in u}\Vert\kappa_j\Vert\leq T_{u,m}<T'_{u,m}.$$
    Thus, $K_u(T_{u,m})\cap K'_u(T'_{u,m})=\emptyset$ and $\Pr(Z(\bsk)=1 \text{ for some } \bsk \in K_u(T_{u,m}))=0$. The conclusion follows by taking an union bound over all nonempty $u\subseteq 1{:}s$.
\end{proof}

The following theorem is the counterpart of Theorem~\ref{thm:CRD} when $d<\alpha+\lambda$.

\begin{theorem}\label{thm:smalld}
     Suppose $\Vert f\Vert_{\alpha,\lambda}<\infty$ for $\alpha\in \natu_0$ and $\lambda\in (0,1]$. Then for $m\in \natu, r\in \natu,\theta\in (0,1)$ and $\theta'\in (0,1)$, a randomization scheme satisfying Assumption~\ref{assump:tu} with $d<\alpha+\lambda$ ensures
\begin{equation*}
    \Pr\Big(|\hat{\mu}^{(r)}_{\infty}-\mu|^2>\Gamma_{\alpha,\lambda,d,\gamma_u,\tu,\theta}\Vert f\Vert^2_{\alpha,\lambda}\frac{(\Gamma_{\alpha,\lambda,d,\theta',\gamma_u}(m))^{2\theta(\alpha+\lambda-d)}}{2^{(2\theta(\alpha+\lambda)+1)m}}\Big)\leq  2^{-r},
\end{equation*}
where $\gamma_u$ is given by equation~\eqref{eqn:gammau},  $\tu$ is given in Assumption~\ref{assump:tu} and
$$\Gamma_{\alpha,\lambda,d,\gamma_u,\tu,\theta}=16^{2\theta(\alpha+\lambda-d)+1}\sum_{\substack{u\subseteq 1{:}s\\ u\neq\emptyset}}\gamma^{\frac{1+2\theta d+2(1-\theta)(\alpha+\lambda)}{\alpha+\lambda+1/2}}_u  C_{\alpha,\lambda,\theta}^{|u|}2^{\tu(2\theta(\alpha+\lambda)+1)} $$
with $C_{\alpha,\lambda,\theta}$ given by equation~\eqref{eqn:Cconstant}.
\end{theorem}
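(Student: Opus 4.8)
The plan is to reuse the template of the proof of Theorem~\ref{thm:CRD}, replacing Corollary~\ref{cor:eventA} by Corollary~\ref{cor:eventA2} (which supplies the correct thresholds $T_{u,m}$ for the $d<\alpha+\lambda$ regime) and replacing the exact identity $\Pr(Z(\bsk)=1)=2^{-m}$ available under the completely random design by the bound $\Pr(Z(\bsk)=1)\le 2^{-m+\tu}$ from Assumption~\ref{assump:tu}. Concretely, fix $\delta=1/16$, define $K$ by equation~\eqref{eqn:setK} with $T_{u,m}$ as in equation~\eqref{eqn:Tum2}, and let $\ca=\{Z(\bsk)=1\text{ for some }\bsk\in K\}$. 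Corollary~\ref{cor:eventA2} gives $\Pr(\ca)\le 1/16<1/8$, while $\e(\hat{\mu}_\infty\giv\ca^c)=\mu$ holds because $\ca$ is determined by the matrices $C_j$ alone and the digital shifts keep $S(\bsk)$ zero-mean (as recorded above equation~\eqref{eqn:conditionalvar}). Lemma~\ref{lem:medianMSE} with $\delta=1/16$ (so that $8\delta=1/2$) together with equation~\eqref{eqn:conditionalvar} then yields
\[
\Pr\Big(|\hat{\mu}^{(r)}_{\infty}-\mu|^2>16\sum_{\bsk\in\natu^s_*\setminus K}\Pr(Z(\bsk)=1)\,|\hat{f}(\bsk)|^2\Big)\le 2^{-r}.
\]

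Next I would bound the tail sum exactly as in equation~\eqref{eqn:ftofu}. Since $K=\bigcup_{u\neq\emptyset}K_u(T_{u,m})$ with each $K_u(T_{u,m})\subseteq\{\bsk:\supp(\bsk)=u\}$, and $\hat f(\bsk)=\hat f_u(\bsk)$ when $\supp(\bsk)=u$, the sum splits over $u$; on the block $\supp(\bsk)=u$ Assumption~\ref{assump:tu} gives $\Pr(Z(\bsk)=1)\le 2^{-m+\tu}$, and Lemma~\ref{lem:fukbound} (valid for \emph{any} real $T_{u,m}$, so no positivity of the thresholds is needed) bounds $\sum_{\bsk\in\natu^s_*\setminus K_u(T_{u,m})}|\hat f_u(\bsk)|^2$. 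This produces
\[
\sum_{\bsk\in\natu^s_*\setminus K}\Pr(Z(\bsk)=1)\,|\hat{f}(\bsk)|^2\le 2^{-m}\sum_{\substack{u\subseteq1{:}s\\u\neq\emptyset}}2^{\tu}\,\frac{\Vert f\Vert^2_{u,\alpha,\lambda}}{4^{\theta T_{u,m}}}\,C_{\alpha,\lambda,\theta}^{|u|}.
\]

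All that remains is bookkeeping: substitute $\Vert f\Vert^2_{u,\alpha,\lambda}=\gamma_u^2\Vert f\Vert^2_{\alpha,\lambda}$ and the explicit $T_{u,m}$ from equation~\eqref{eqn:Tum2} and collect exponents. With $\delta=1/16$ one has $\log_2\delta=-4$, hence $4^{\theta T_{u,m}}=2^{2\theta(\alpha+\lambda)(m-\tu)}(\Gamma_{\alpha,\lambda,d,\theta',\gamma_u}(m))^{-2\theta(\alpha+\lambda-d)}2^{-8\theta(\alpha+\lambda-d)}\gamma_u^{2\theta(\alpha+\lambda-d)/(\alpha+\lambda+1/2)}$. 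Multiplying through, the power of $2^{-m}$ becomes $-(2\theta(\alpha+\lambda)+1)$, the power of $2^{\tu}$ becomes $2\theta(\alpha+\lambda)+1$, the power of $\gamma_u$ becomes $2-2\theta(\alpha+\lambda-d)/(\alpha+\lambda+1/2)=(1+2\theta d+2(1-\theta)(\alpha+\lambda))/(\alpha+\lambda+1/2)$, and a factor $(\Gamma_{\alpha,\lambda,d,\theta',\gamma_u}(m))^{2\theta(\alpha+\lambda-d)}$ survives; finally $2^{8\theta(\alpha+\lambda-d)}=16^{2\theta(\alpha+\lambda-d)}$ combines with the leading $16$ from Lemma~\ref{lem:medianMSE} to give $16^{2\theta(\alpha+\lambda-d)+1}$. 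Summing over $u$ then reproduces the constant $\Gamma_{\alpha,\lambda,d,\gamma_u,\tu,\theta}$ and the claimed bound. I expect the genuinely delicate part to already be behind us, namely the cardinality estimate of Lemma~\ref{lem:KuTbound2} and the case split $T_{u,m}$ versus $T'_{u,m}=d(m-\tu)$ inside Corollary~\ref{cor:eventA2} that balances the decay $4^{-\theta T_{u,m}}$ against the growth of $|K_u(T_{u,m})\cap K'_u(T'_{u,m})|$; relative to that, the present theorem is a matter of assembling these ingredients and tracking exponents.
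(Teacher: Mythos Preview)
Your proposal is correct and follows essentially the same argument as the paper's own proof: fix $\delta=1/16$, invoke Corollary~\ref{cor:eventA2} for the choice of thresholds $T_{u,m}$, combine Lemma~\ref{lem:medianMSE} with equation~\eqref{eqn:conditionalvar}, split over $u$ using $\hat f(\bsk)=\hat f_u(\bsk)$ and the bound $\Pr(Z(\bsk)=1)\le 2^{-m+\tu}$ from Assumption~\ref{assump:tu}, apply Lemma~\ref{lem:fukbound}, and then substitute $T_{u,m}$ and $\Vert f\Vert^2_{u,\alpha,\lambda}=\gamma_u^2\Vert f\Vert^2_{\alpha,\lambda}$. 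Your explicit exponent bookkeeping is more detailed than the paper's terse ``plugging in $T_{u,m}$'', but the structure and ingredients are identical.
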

\begin{proof}
    Let $\delta=1/16$ in Lemma~\ref{lem:medianMSE} and Corollary~\ref{cor:eventA2}. By Lemma~\ref{lem:fukbound} and Assumption~\ref{assump:tu},
    $$\sum_{\bsk\in\natu_*^{s}\setminus K_u(T_{u,m})} \Pr(Z(\bsk)=1) |\hat{f}_u(\bsk)|^2 \leq 2^{-m+\tu}\frac{\Vert f\Vert^2_{u,\alpha,\lambda} }{4^{\theta T_{u,m}}}   C_{\alpha,\lambda,\theta}^{|u|}.$$
    From Equation~\eqref{eqn:fuWalsh} and \eqref{eqn:gammau},
    \begin{align}\label{eqn:conditionalvarbound}
      \sum_{\bsk\in\natu^s_*\setminus K} \Pr(Z(\bsk)=1) |\hat{f}(\bsk)|^2 = &\sum_{\substack{u\subseteq 1{:}s\\ u\neq\emptyset}}\sum_{\bsk\in\natu_*^{s}\setminus K_u(T_{u,m})} \Pr(Z(\bsk)=1) |\hat{f}_u(\bsk)|^2 \\
      \leq &\Vert f\Vert^2_{\alpha,\lambda} \sum_{\substack{u\subseteq 1{:}s\\ u\neq\emptyset}} 2^{-m+\tu}\frac{\gamma^2_u}{4^{\theta T_{u,m}}}   C_{\alpha,\lambda,\theta}^{|u|}. \nonumber
    \end{align}
    Our conclusion follows from Lemma~\ref{lem:medianMSE} after plugging in $T_{u,m}$ from equation~\eqref{eqn:Tum2}.
\end{proof}

\begin{remark}
    By selecting $\theta$ and $\theta'$ sufficiently close to $1$, an argument analogous to Remark~\ref{rmk:asymp} shows $\hat{\mu}^{(r)}_{\infty}$ achieves an asymptotic convergence rate of $O(2^{-(\alpha+\lambda+1/2-\eta)m})$ for any $\eta>0$.
\end{remark}

\begin{remark}\label{rmk:tractability2}
    Consider again settings where $s\to\infty$ while $\Vert f\Vert_{\alpha,\lambda}\leq 1$. We assume equation~\eqref{eqn:summable} holds for weights $\{\gamma_j,j\in \natu\}$. Setting $\theta'=1-\eta/d$ for $\eta>0$, the analysis in Remark~\ref{rmk:tractability} implies $\Gamma_{\alpha,\lambda,d,\theta',\gamma_u}(m)\leq C_\eta 2^{\eta m/(\alpha+\lambda-d)}$ for a constant $C_\eta$ independent of $s$. Therefore, $\hat{\mu}^{(r)}_{\infty}$ attains dimension-independent error bounds if $\Gamma_{\alpha,\lambda,d,\gamma_u,\tu,\theta}$ remains bounded as $s\to\infty$, which depends on growth of $\tu$.

    Recall that $\tu=t_u+|u|-1$. It is shown in \cite{c812a4a5-75ac-3614-936d-5782e44941d0} that
    the base-2 Niederreiter sequence \cite{NIEDERREITER198851} satisfies
    \begin{equation}\label{eqn:Niederseq}
        t_u\leq \sum_{j\in u}\Big(\log_2(j)+\log_2\log_2(j+2)+2\Big),
    \end{equation}
    while the Sobol' sequence \cite{sobol67} satisfies 
    $$t_u\leq \sum_{j\in u} \Big(\log_2(j)+\log_2\log_2(j+1)+\log_2\log_2\log_2(j+3)+C\Big)$$
    for an absolute constant $C$. 
    
    For simplicity, we continue our discussion with the assumption $  \tu\leq \sum_{j\in u }\tj$ for nonnegative numbers $\{\tj, j\in \natu\}$. We can then bound
    \begin{align*}
      \frac{\Gamma_{\alpha,\lambda,d,\gamma_u,\tu,\theta}}{16^{2\theta(\alpha+\lambda-d)+1}}  \leq & \sum_{\substack{u\subseteq 1{:}s\\ u\neq\emptyset}}\prod_{j\in u}\gamma^{\frac{1+2 d}{\alpha+\lambda+1/2}}_j  C_{\alpha,\lambda,\theta}2^{\tj(2(\alpha+\lambda)+1)}\\ 
   \leq & \prod_{j=1}^s\Big(1+\gamma^{\frac{1+2 d}{\alpha+\lambda+1/2}}_jC_{\alpha,\lambda,\theta}2^{\tj(2(\alpha+\lambda)+1)}\Big)\\
   \leq &  \exp\Big(C_{\alpha,\lambda,\theta}\sum_{j=1}^s \gamma^{\frac{1+2 d}{\alpha+\lambda+1/2}}_j2^{\tj(2(\alpha+\lambda)+1)}\Big).
    \end{align*}
    Therefore, $\Gamma_{\alpha,\lambda,d,\gamma_u,\tu,\theta}$ has a dimension-independent bound if 
    $$\sum_{j=1}^\infty \gamma^{\frac{1+2 d}{\alpha+\lambda+1/2}}_j2^{\tj(2(\alpha+\lambda)+1)}<\infty.$$
\end{remark}

\begin{remark}\label{rmk:tractability3}
   Viewing $\Gamma_{\alpha,\lambda,\gamma_u,\tu,\theta}$ from Theorem~\ref{thm:larged} as a special case of $\Gamma_{\alpha,\lambda,d,\gamma_u,\tu,\theta}$ with $d=\alpha+\lambda$, the argument in Remark~\ref{rmk:tractability2} yields dimension-independent error bounds for the $d\geq \alpha+\lambda$ case if $\gamma_u\leq \prod_{j\in u} \gamma_j$, $  \tu\leq \sum_{j\in u }\tj$ and
    $$\sum_{j=1}^\infty \gamma_j^{2}2^{\tj(2(\alpha+\lambda)+1)}<\infty.$$
While increasing $d$ may appear to enhance tractability, parameters $\tu$ and $\tj$ typically scale with $d$, as in the interlacing construction \cite{dick:2011}, and unbounded increases in $d$ ultimately violate the tractability condition above. Thus, prudent selection of $d$ is critical for practical implementations.

\end{remark}

When $d<\alpha+\lambda$, we can refine our argument to establish dimension-independent convergence under a weaker condition analogous to that outlined in Remark~\ref{rmk:tractability3}.

\begin{theorem}\label{thm:tractability}
    Suppose $\Vert f\Vert_{\alpha,\lambda}<\infty$ for $\alpha\in \natu_0$ and $\lambda\in (0,1]$. Given  a randomization scheme satisfying Assumption~\ref{assump:tu} with $d< \alpha+\lambda$, we further assume $\gamma_u\leq \prod_{j\in u} \gamma_j$ for $\{\gamma_j,j\in \natu\}$, $  \tu\leq \sum_{j\in u }\tj$ for $\{\tj,j\in \natu\}$ and 
    \begin{equation}\label{eqn:gammautu}
      \sum_{j=1}^\infty\Big(\gamma^{\frac{1}{\alpha+\lambda+1/2}}_j 2^{\tj}\Big)^{\phi}<\infty \quad \text{for} \quad \phi=\frac{\alpha+\lambda+1/2}{\alpha+\lambda-d+1/2}.
    \end{equation}
    Then under the given randomization scheme, we can find a constant $C_\eta$ independent of $s$ for any $\eta>0$ such that for all $m\in\natu$ and $r\in \natu$,
    $$\Pr\Big(|\hat{\mu}^{(r)}_{\infty}-\mu|^2 > C_\eta \Vert f\Vert^2_{\alpha,\lambda} 2^{-(2\alpha+2\lambda+1-2\eta)m}\Big)\leq 2^{-r}.$$
\end{theorem}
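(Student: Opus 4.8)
\emph{Approach.} The plan is to re-run the proof of Theorem~\ref{thm:smalld} (through Corollary~\ref{cor:eventA2}), but to replace the single-formula thresholds of equation~\eqref{eqn:Tum2} by thresholds $T_{u,m}$ in $K=\bigcup_{u\neq\emptyset}K_u(T_{u,m})$ that are tuned \emph{separately for each} nonempty $u\subseteq 1{:}s$, so as to optimize the trade-off between conditional variance and failure probability. Keeping $\mathcal A=\{Z(\bsk)=1\text{ for some }\bsk\in K\}$ and $\delta=1/16$ (so $(8\delta)^r=2^{-r}$ in Lemma~\ref{lem:medianMSE}), Lemma~\ref{lem:fukbound} and Assumption~\ref{assump:tu} give, via equations~\eqref{eqn:conditionalvar} and~\eqref{eqn:fuWalsh},
$$\Pr(\mathcal A^c)\var(\hat\mu_\infty\mid\mathcal A^c)\ \le\ \sum_{u\neq\emptyset}2^{-m+\tu}\,\gamma_u^2\|f\|_{\alpha,\lambda}^2\,4^{-\theta T_{u,m}}\,C_{\alpha,\lambda,\theta}^{|u|},$$
while, with $\beta=\alpha+\lambda-d$ and Lemma~\ref{lem:KuTbound2} applied with $T'=d(m-\tu)$,
$$\Pr(\mathcal A)\ \le\ \sum_{u\neq\emptyset}2^{-m+\tu}\bigl|K_u(T_{u,m})\cap K'_u\bigl(d(m-\tu)\bigr)\bigr|.$$
For a second auxiliary parameter $\theta'\in(0,1)$ I choose $T_{u,m}$ to keep this last bound $\le\delta$ while making the first as small as possible; writing $T_{u,m}=(\alpha+\lambda)(m-\tu)+\beta\log_2 y_u$ turns this into minimizing $\sum_u a_u y_u^{-2\theta\beta}$ over $\sum_u b_u y_u\le\delta$ (subject to $y_u\le1$ and $T_{u,m}\ge d(m-\tu)$), a Hölder-type problem whose optimal value is at most $\delta^{-2\theta\beta}\bigl(\sum_u b_u^{2\theta\beta/(2\theta\beta+1)}a_u^{1/(2\theta\beta+1)}\bigr)^{2\theta\beta+1}$.

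\emph{Outcome of the optimization.} Carrying this through, keeping the dominant exponential part of the Lemma~\ref{lem:KuTbound2} bound and treating its polynomial factor and its first term as lower-order corrections, I expect to obtain $\Pr(\mathcal A)\le\delta$ together with
$$\Pr(\mathcal A^c)\var(\hat\mu_\infty\mid\mathcal A^c)\ \lesssim\ \delta^{-2\theta\beta}\|f\|_{\alpha,\lambda}^2\,2^{-(2\theta\beta+1+2\theta\theta' d)m}\Bigl(\sum_{u\neq\emptyset}\gamma_u^{\frac{2}{2\theta\beta+1}}2^{\psi\tu}\,\mathrm{const}^{|u|}\Bigr)^{2\theta\beta+1},\qquad \psi=1+\frac{2\theta\theta' d}{2\theta\beta+1},$$
where the constant in $\mathrm{const}^{|u|}$ depends only on $\alpha,\lambda,d,\theta,\theta'$. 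The exponent bookkeeping is the crux: as $\theta,\theta'\uparrow1$ one has $\tfrac{2}{2\theta\beta+1}\to\tfrac{1}{\alpha+\lambda-d+1/2}$, $\psi\to\phi$, and $2\theta\beta+1+2\theta\theta' d\to 2(\alpha+\lambda)+1$, precisely the exponents in~\eqref{eqn:gammautu} and in the target rate. Moreover, for \emph{every} $\theta,\theta'\in(0,1)$ one has $\tfrac{2(\alpha+\lambda)+1}{2\theta\beta+1}\ge\phi$ and $\psi\le\phi$, so (using $\gamma_j\le1$ and $\tj\ge0$ for all but finitely many $j$) $\gamma_j^{\frac{2}{2\theta\beta+1}}2^{\psi\tj}\le\bigl(\gamma_j^{1/(\alpha+\lambda+1/2)}2^{\tj}\bigr)^{\phi}$.

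\emph{Removing the dimension.} As in Remark~\ref{rmk:tractability}, the hypotheses $\gamma_u\le\prod_{j\in u}\gamma_j$ and $\tu\le\sum_{j\in u}\tj$ factor the $u$-sum over coordinates,
$$\sum_{u\neq\emptyset}\gamma_u^{\frac{2}{2\theta\beta+1}}2^{\psi\tu}\mathrm{const}^{|u|}\ \le\ \prod_{j=1}^{s}\Bigl(1+\mathrm{const}\cdot\gamma_j^{\frac{2}{2\theta\beta+1}}2^{\psi\tj}\Bigr)\ \le\ \exp\Bigl(\mathrm{const}\sum_{j=1}^{s}\gamma_j^{\frac{2}{2\theta\beta+1}}2^{\psi\tj}\Bigr),$$
and by the termwise comparison above the right-hand sum is bounded by $\sum_{j}\bigl(\gamma_j^{1/(\alpha+\lambda+1/2)}2^{\tj}\bigr)^{\phi}<\infty$, uniformly in $s$, by hypothesis~\eqref{eqn:gammautu}. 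The polynomial-in-$m$ factors coming from the first term of Lemma~\ref{lem:KuTbound2} (and from the margin in the chosen $T_{u,m}$) are absorbed exactly as in \cite[Lemma 3]{HICKERNELL2003286}, at the cost of an $s$-free factor $C_\eta 2^{\eta m}$. Finally, given $\eta>0$ I fix $\theta,\theta'\in(0,1)$ close enough to $1$ that $2\theta\beta+1+2\theta\theta' d\ge 2(\alpha+\lambda)+1-2\eta$ (with $\theta$ fixed, $C_{\alpha,\lambda,\theta}$ is a finite constant), apply Lemma~\ref{lem:medianMSE}, and collect all the now $s$-independent constants into $C_\eta$, obtaining $\Pr\bigl(|\hat\mu_\infty^{(r)}-\mu|^2>C_\eta 2^{-(2\alpha+2\lambda+1-2\eta)m}\bigr)\le(8\delta)^r=2^{-r}$.

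\emph{Main obstacle.} The delicate point is making the per-$u$ optimization rigorous. The bound of Lemma~\ref{lem:KuTbound2} is available only for $T_{u,m}\in[d(m-\tu),(\alpha+\lambda)(m-\tu)]$ (and below $d(m-\tu)$ the intersection with $K'_u$ is empty, so the failure probability from $u$ vanishes), it has a two-term form with a polynomial-in-$m$ factor, and $T_{u,m}$ must be taken to be (essentially) integer-valued, so one cannot literally differentiate. In particular the unconstrained Hölder-optimal threshold may fall below $d(m-\tu)$ for the "light" components $u$ with small $\gamma_u$, so one has to clip it back into the admissible range, split the budget $\delta$ between the two terms of Lemma~\ref{lem:KuTbound2}, and check that the clipped light components remain controlled by the same factorized $u$-sum. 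Keeping every multiplicative constant independent of $s$ through the clipping and through the final power $2\theta\beta+1$ is the real bookkeeping burden; the structural content is entirely the Lagrange/Hölder trade-off together with the exponent identities $\tfrac{2}{2\theta\beta+1}\to\tfrac{1}{\alpha+\lambda-d+1/2}$ and $\psi\to\phi$.
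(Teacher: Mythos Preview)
Your proposal is correct and follows essentially the same route as the paper. The only stylistic difference is that you frame the choice of thresholds $T_{u,m}$ as a H\"older/Lagrange optimization in $y_u$, whereas the paper simply writes down the explicit solution $T_{u,m}=(\alpha+\lambda)(m-\tu)-(\alpha+\lambda-d)\bigl(\log_2\Gamma(m)-\log_2\delta'-\frac{\phi\log_2\gamma_u}{\alpha+\lambda+1/2}-\phi\tu\bigr)$ (which is exactly your H\"older-optimal $y_u$ in the limit $\theta\to1$) and then verifies feasibility; your ``clipping'' obstacle is resolved in the paper by the one-line observation that $\sup_{u}\gamma_u^{1/(\alpha+\lambda+1/2)}2^{\tu}\le\Lambda<\infty$ and the choice $\delta'=\min(\Lambda^{-\phi},1/16)$, which forces $y_u\le1$ for every $u$.
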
 
\begin{proof}
By equation~\eqref{eqn:gammautu}, only finitely many $j\in \natu$ satisfy $\gamma^{\frac{1}{\alpha+\lambda+1/2}}_j 2^{\tj}\geq 1$. Hence,
   $$\sup_{\substack{u\subseteq 1{:}s\\u\neq\emptyset}} \gamma^{\frac{1}{\alpha+\lambda+1/2}}_u 2^{\tu}\leq \sup_{\substack{u\subseteq 1{:}s\\u\neq\emptyset}} \prod_{j\in u } \gamma^{\frac{1}{\alpha+\lambda+1/2}}_j 2^{\tj} \leq \Lambda$$
   for a constant $\Lambda$ independent of $s$. Let $\delta'=\min(\Lambda^{-\phi},1/16)$, $\theta'=1-\eta /d,$
    $$\Gamma(m)=1+\sum_{\substack{u\subseteq 1{:}s\\ u\neq\emptyset}}\Big(\gamma^{\frac{1}{\alpha+\lambda+1/2}}_u  2^{\tu}\Big)^{\phi}\Big ((A_{\alpha,\lambda}m+B_{\alpha,\lambda})^{|u|}+2^{\frac{(1-\theta')d}{\alpha+\lambda-d}m}D^{|u|}_{\alpha,\lambda,d,\theta'}\Big)$$
    and
    \begin{align*}
      T_{u,m}=&(\alpha+\lambda)(m-\tu)-\\
      &(\alpha+\lambda-d)\Big(\log_2(\Gamma(m))-\log_2(\delta')-\frac{\phi\log_2(\gamma_u)}{\alpha+\lambda+1/2}-\phi \tu\Big).  
    \end{align*}
    By our choice of $\delta'$, 
    $$\log_2(\delta')+\frac{\phi\log_2(\gamma_u)}{\alpha+\lambda+1/2}+\phi \tu =\log_2(\delta')+\phi\log_2\Big(\gamma^{\frac{1}{\alpha+\lambda+1/2}}_u 2^{\tu} \Big) \leq 0.$$
    Therefore, $T_{u,m}\leq (\alpha+\lambda)(m-\tu)$ and an argument similar to that used in Corollary~\ref{cor:eventA2} shows
    \begin{align*}
        &\Pr\Big(Z(\bsk)=1 \text{ for some } \bsk\in \bigcup_{\substack{u\subseteq 1{:}s\\ u\neq\emptyset}} K_u(T_{u,m})\Big)\\
        \leq & \frac{\delta'}{\Gamma(m)} \sum_{\substack{u\subseteq 1{:}s\\ u\neq\emptyset}} \Big(\gamma^{\frac{1}{\alpha+\lambda+1/2}}_u 2^{\tu}\Big)^{\phi}\Big(( A_{\alpha,\lambda} m+B_{\alpha,\lambda})^{|u|} + 
  2^{\frac{(1-\theta')d}{\alpha+\lambda-d}m} D^{|u|}_{\alpha,\lambda,d,\theta'}\Big),
    \end{align*}
   which is bounded by $\delta'$ due to the definition of $\Gamma(m)$. Equation~\eqref{eqn:conditionalvarbound} with the above choice of $T_{u,m}$ then implies 
   \begin{align*}
       \sum_{\bsk\in\natu^s_*\setminus K} \Pr(Z(\bsk)=1) |\hat{f}(\bsk)|^2
       \leq &\Vert f\Vert^2_{\alpha,\lambda} \sum_{\substack{u\subseteq 1{:}s\\ u\neq\emptyset}} 2^{-m+\tu}\frac{\gamma^2_u}{4^{\theta T_{u,m}}}   C_{\alpha,\lambda,\theta}^{|u|}\\
       = &\Gamma'\Vert f\Vert^2_{\alpha,\lambda}\frac{(\Gamma(m)/\delta')^{2\theta(\alpha+\lambda-d)}}{2^{(2\theta(\alpha+\lambda)+1)m}},
   \end{align*}
   where
   $$\Gamma'=\sum_{\substack{u\subseteq 1{:}s\\ u\neq\emptyset}} \gamma^{2-\frac{2\theta\phi(\alpha+\lambda-d)}{\alpha+\lambda+1/2}}_uC_{\alpha,\lambda,\theta}^{|u|} 2^{\tu\big(1+2\theta(\alpha+\lambda)-2\theta\phi(\alpha+\lambda-d)\big)}.$$
   Analogous to Remark~\ref{rmk:tractability2}, we can show $\Gamma(m)\leq C'_\eta 2^{\eta m/(\alpha+\lambda-d)}$ for a constant $C'_\eta$ independent of $s$ when equation~\eqref{eqn:gammautu} holds. Furthermore, we notice that $\alpha+\lambda>\phi (\alpha+\lambda-d)$ and
   \begin{align*}
       \Gamma' \leq   &\sum_{\substack{u\subseteq 1{:}s\\ u\neq\emptyset}} \gamma^{2-\frac{2\phi(\alpha+\lambda-d)}{\alpha+\lambda+1/2}}_uC_{\alpha,\lambda,\theta}^{|u|} 2^{\tu\big(1+2(\alpha+\lambda)-2\phi(\alpha+\lambda-d)\big)}\\
       =&\sum_{\substack{u\subseteq 1{:}s\\ u\neq\emptyset}}\Big(\gamma^{\frac{1}{\alpha+\lambda+1/2}}_u2^{\tu}\Big)^\phi  C^{|u|}_{\alpha,\lambda,\theta},
   \end{align*}
   which is bounded as $s\to\infty$ because of equation~\eqref{eqn:gammautu}. Our conclusion then follows from Lemma~\ref{lem:medianMSE} with $\delta=1/16$.
\end{proof}

\begin{remark}
By Hölder's inequality,
    \begin{align*}
        &\sum_{j=1}^\infty\Big(\gamma^{\frac{1}{\alpha+\lambda+1/2}}_j 2^{\tj}\Big)^{\phi}\\
\leq &\Big(\sum_{j=1}^\infty \gamma^{\frac{1}{\alpha+\lambda+1/2}}_j\Big)^{1-\frac{\phi}{1+2(\alpha+\lambda)}}\Big(\sum_{j=1}^\infty \gamma^{\frac{1+2d}{\alpha+\lambda+1/2}}_j2^{\tj(2(\alpha+\lambda)+1)}\Big)^{\frac{\phi}{1+2(\alpha+\lambda)}}.
    \end{align*}
    Therefore, equation~\eqref{eqn:gammautu} holds under the assumptions in Remark~\ref{rmk:tractability2}.
\end{remark}

\begin{remark}\label{rmk:tractability4}
For base-2 Niederreiter sequences, substituting the bound \eqref{eqn:Niederseq} into equation \eqref{eqn:gammautu} yields the simplified tractability condition:
    $$\sum_{j=1}^\infty\Big(\gamma^{\frac{1}{\alpha+\lambda+1/2}}_j j\log_2(j)\Big)^{\phi}<\infty.$$
    Since $\phi>1$, this condition is strictly weaker than the requirements in \cite[Corollary 3.11]{Goda2024} for such digital nets to achieve near $O(2^{-(\alpha+\lambda+1/2)m})$ convergence rates under the random linear scrambling.

    In general, when $\gamma_j=O(j^{-p})$ and $2^{\tj}=O(j^q)$ for $p\geq 0$ and $q\geq 0$, $\hat{\mu}^{(r)}_\infty$ attains dimension-independent convergence if
    \begin{equation}\label{eqn:tractability}
     \phi(\frac{p}{\alpha+\lambda+1/2}-q)=\frac{p-(\alpha+\lambda+1/2)q}{\alpha+\lambda-d+1/2}>1.   
    \end{equation}
    This criterion interpolates between the tractability condition for the completely random design (by setting $d=0$ and $q=0$) and that for randomization schemes with $d\geq \alpha+\lambda$ (by setting $d=\alpha+\lambda$).
\end{remark}

\section{Numerical experiments}\label{sec:exper}

This section tests our theoretical results on integrands with varying dimensions and decay rates of $\gamma_u$. We also reference \cite{pan2024automatic}, which provides further experiments demonstrating near-optimal convergence rates of median RQMC and comparisons to higher-order scrambled digital nets.

The test integrands have the common form
$$f_{s,\gamma,\alpha}(\bsx)=\prod_{j=1}^s \Big(1+\frac{c_\alpha}{j^\gamma} (x_j\exp(x_j)-1)\Big),$$
where $s=10,100,1000$ controls the dimension of integration domain, $\gamma=2,3,4$ controls the decay rate of $\gamma_u$ and $c_\alpha$ for $\alpha=0,1$ is a normalizing constant to ensure $\Vert f\Vert_{\alpha,1}=1$. To determine $c_\alpha$, we start by computing the ANOVA component given by equation~\eqref{eqn:fudef}:
\begin{equation}\label{eqn:fuexample}
 f_u(\bsx)=\frac{c^{|u|}_\alpha}{(\prod_{j\in u}j)^{\gamma}}\prod_{j\in u}(x_j\exp(x_j)-1).   
\end{equation}
For $\lambda=1$ and smooth function $f$, the Vitali variation defined in equation~\eqref{eqn:variationdef} can be viewed as a Riemann sum for 
$$\Big(\int_{[0,1]^d} \Big(f^{(1,\dots,1)}(\bsx)\Big)^2\rd \bsx\Big)^{1/2}.$$
Hence, $V^{(|u|)}_{1}(f_u)=\Vert f^{(1,\dots,1)}_u\Vert_{L^2}$. As discussed in Section~\ref{sec:anova}, $ \Vert f\Vert_{u,\alpha,\lambda}=V^{u}_\lambda (f_u)$ when $\alpha=0$. Therefore,
$$ \Vert f\Vert_{u,0,1}=\Vert f^{(1,\dots,1)}_u\Vert_{L^2}=\frac{c^{|u|}_0}{(\prod_{j\in u}j)^{\gamma}}\Big(\int_0^1 (x+1)^2\exp(2x)\rd x\Big)^{|u|/2}$$
and we can make $\Vert f_{s,\gamma,0}\Vert_{0,1}=1$ by setting 
$$c_0=\Big(\int_0^1 (x+1)^2\exp(2x)\rd x\Big)^{-1/2}.$$
Next, when $\alpha=1$, it follows from the definition~\eqref{eqn:funorm} that
$$\Vert f\Vert_{u,1,1}=\sup_{v\subseteq u}\sup_{\substack{\bsalpha_u\in \ints_{\le 1}^{|u|}\\ \alpha_j=1 \ \forall j\in v,\\ \alpha_j>0 \ \forall j\in u\setminus v  }} V^{v}_1 (f^{(\bsalpha_u)}_u)=\sup_{v\subseteq u}V^{v}_1 (f^{(1,\dots,1)}_u).$$
Let 
$g=f^{(1,\dots,1)}_u.$
For each $v\subseteq u$, because $((x+1)\exp(x))'=(x+2)\exp(x)$,
$$V^{(|v|)}_1\Bigl( g(\text{ }\cdot\text{ };\bsx_{v^c})\Bigr)=\frac{c^{|u|}_1}{(\prod_{j\in u}j)^{\gamma}}\Big(\int_0^1 (x+2)^2\exp(2x)\rd x\Big)^{|v|/2} \prod_{j\in u\setminus v} (x_j+1)\exp(x_j).$$
Therefore, by equation~\eqref{eqn:Vvbound}, 
\begin{align*}
 &\Big(V^{v}_1(g)\Big)^2\\
 \leq &\frac{c^{2|u|}_1}{(\prod_{j\in u}j)^{2\gamma}}\Big(\int_0^1 (x+2)^2\exp(2x)\rd x\Big)^{|v|} \Big(\int_0^1 (x+1)^2\exp(2x)\rd x\Big)^{|u|-|v|}   \\
 \leq & \Vert f^{(2,\dots,2)}_u\Vert_{L^2}^2,
\end{align*}
while $V^{u}_1(g)=\Vert g^{(1,\dots,1)}\Vert_{L^2}= \Vert f^{(2,\dots,2)}_u\Vert_{L^2}$. Therefore,
$$\Vert f\Vert_{u,1,1}=\Vert f^{(2,\dots,2)}_u\Vert_{L^2}=\frac{c^{|u|}_1}{(\prod_{j\in u}j)^{\gamma}}\Big(\int_0^1 (x+2)^2\exp(2x)\rd x\Big)^{|u|/2}$$
and we can make $\Vert f_{s,\gamma,1}\Vert_{1,1}=1$ by setting 
$$c_1=\Big(\int_0^1 (x+2)^2\exp(2x)\rd x\Big)^{-1/2}.$$
With the above choice of $c_\alpha$ for $\alpha=0,1$, we also have $\gamma_u=\prod_{j\in }\gamma_j$ for $\gamma_j=j^{-\gamma}$.

For each integrand $f_{s,\gamma,\alpha}$, we evaluate the performance of the median estimator $\hat{\mu}^{(r)}_E$ for $m=1,2,\dots,16$ under two randomization schemes: the completely random design (CRD) and the random linear scrambling (RLS). The estimator $\hat{\mu}^{(r)}_E$ is computed as the sample median of $2r-1$ independent realizations of $\hat{\mu}_E$, where $\hat{\mu}_E$ is defined as in equation~\eqref{eqn:muEdef} with precision $E=64$. The observed integration errors in our tested range of $m$ is much larger than $2^{-E}$, ensuring that $\hat{\mu}^{(r)}_E$ closely approximates $\hat{\mu}^{(r)}_\infty$. Consequently, the error bounds in Theorem~\ref{thm:CRD}, \ref{thm:larged} and \ref{thm:smalld} apply to $\hat{\mu}^{(r)}_E$ with probability at least $1-2^{-r}$. We set $r=10$, limiting the failure probability per data point to below $0.1\%$. As a benchmark, we compute the standard RQMC estimator (STD), defined as the average of the $2r-1$ $\hat{\mu}_E$ realizations used by RLS. The generating matrices $\mathcal{C}_j$ used by RLS and STD come from \cite{joe:kuo:2008}.

\begin{figure}[htb]
  \setkeys{Gin}{width=\linewidth}
  \setlength\tabcolsep{2pt}
  \begin{tabularx}{\textwidth}{XXX}
    \includegraphics[width=0.3\textwidth]{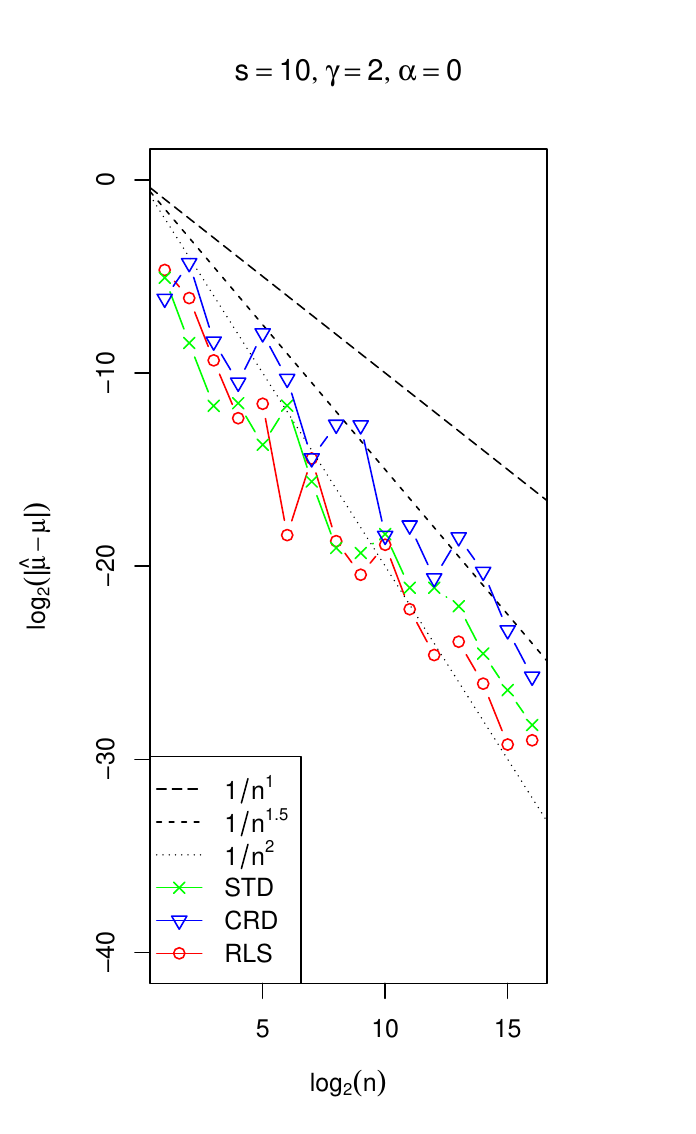} &
    \includegraphics[width=0.3\textwidth]{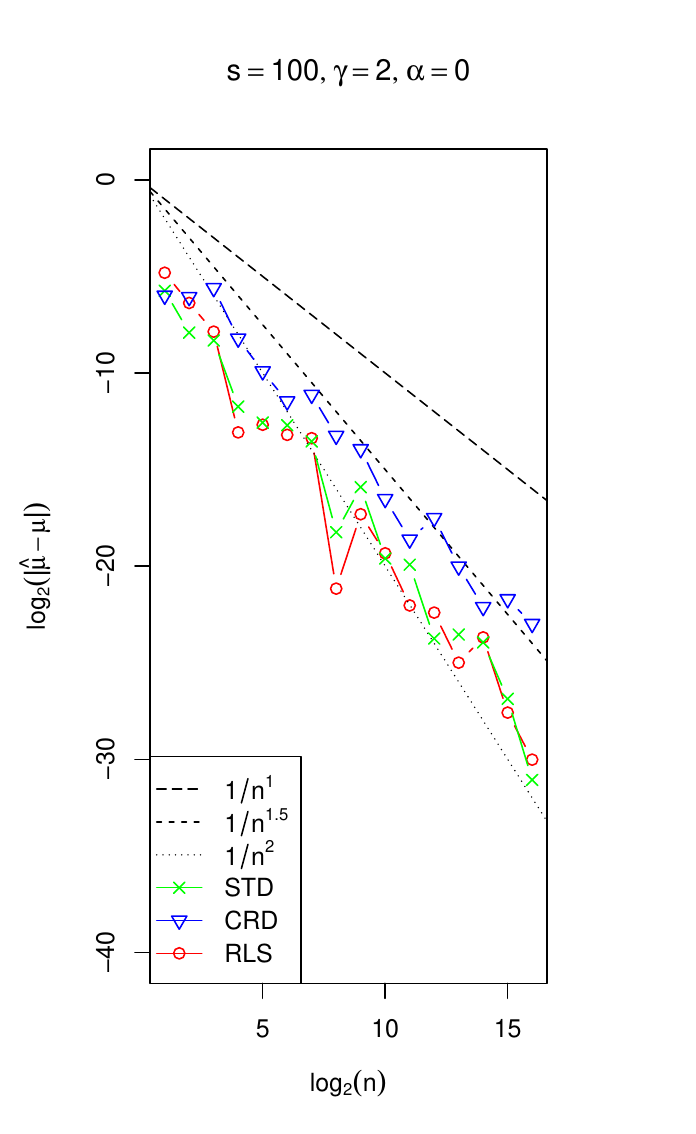} &
    \includegraphics[width=0.3\textwidth]{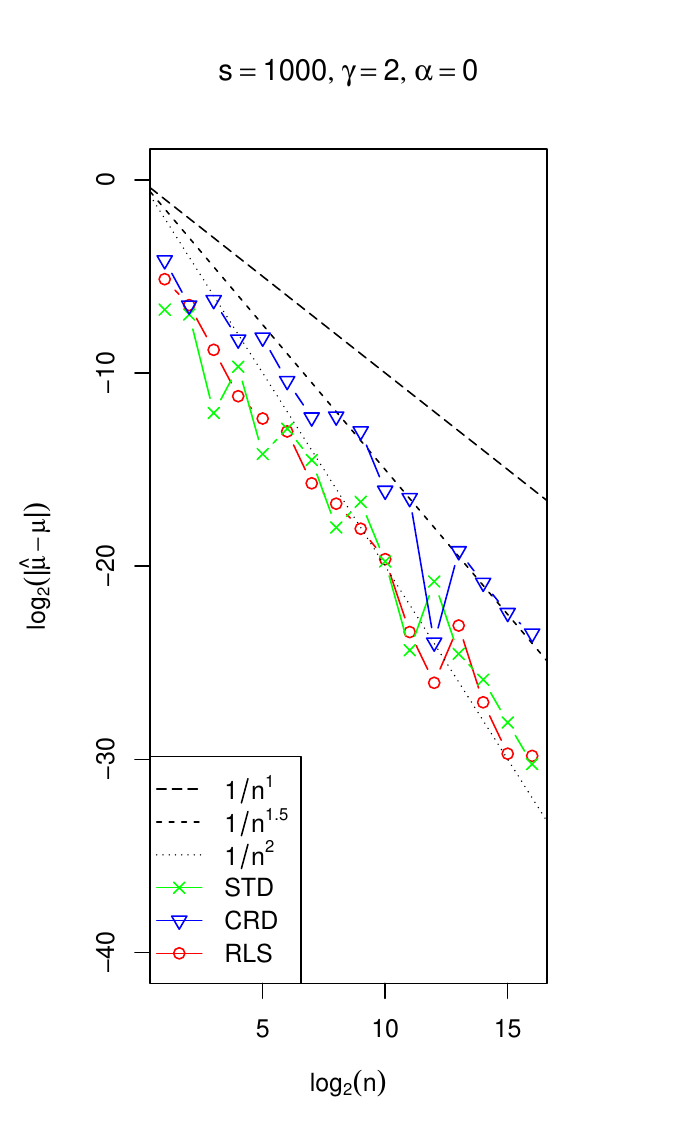}   \\
    \includegraphics[width=0.3\textwidth]{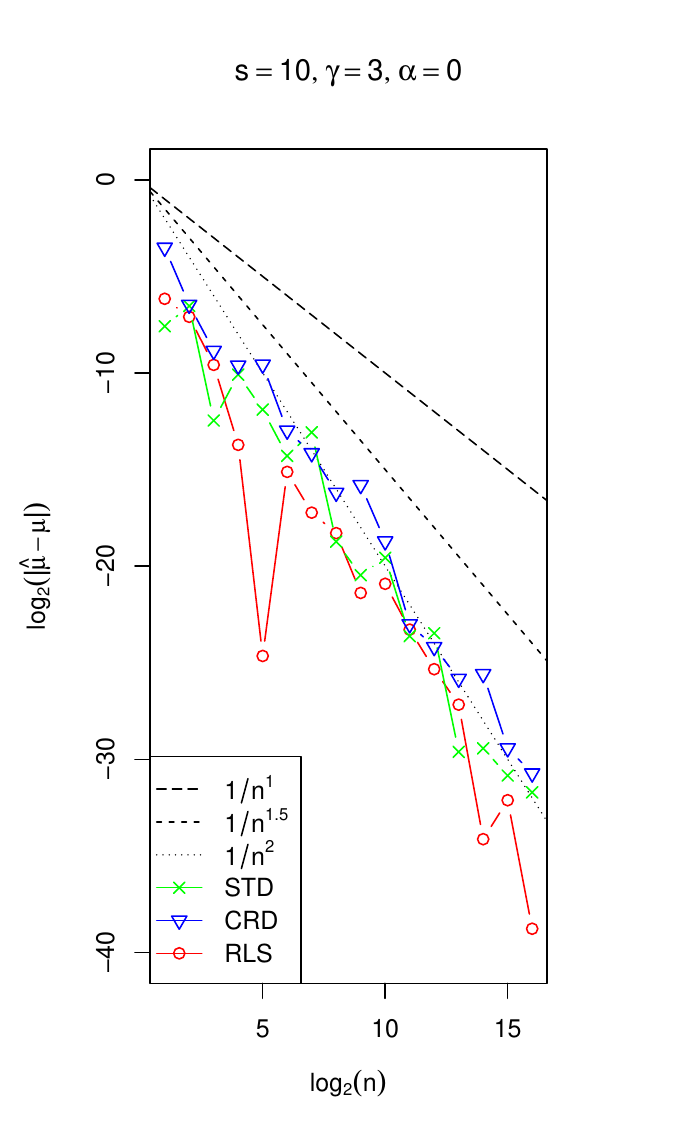} &
    \includegraphics[width=0.3\textwidth]{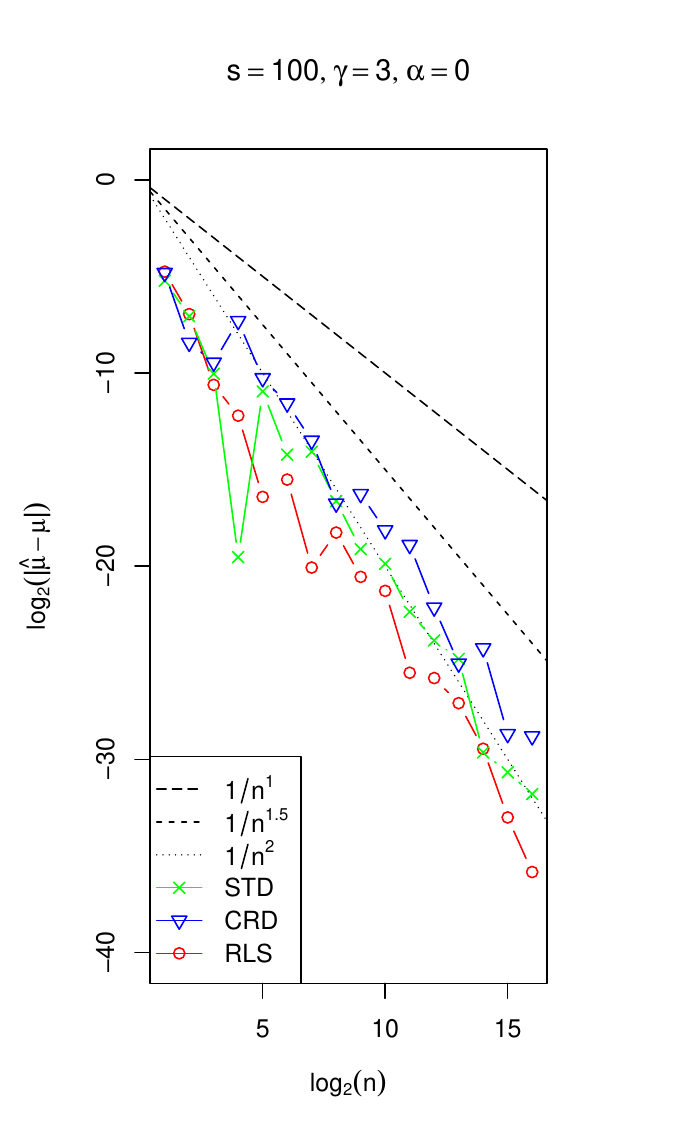} &
    \includegraphics[width=0.3\textwidth]{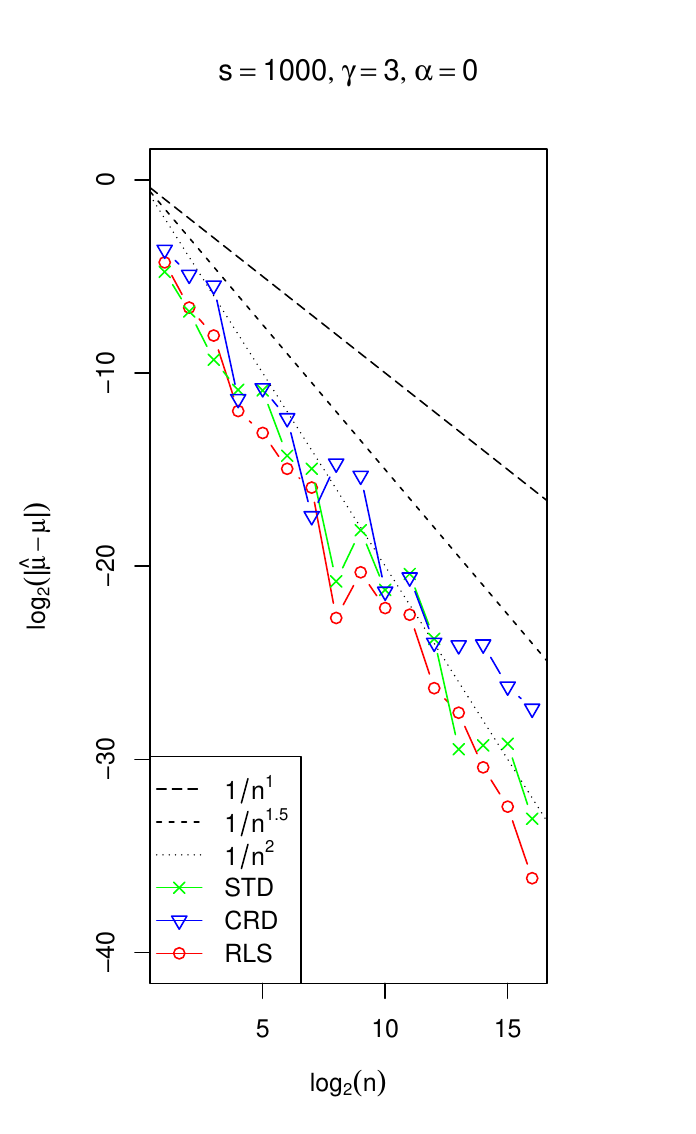}   \\
    \includegraphics[width=0.3\textwidth]{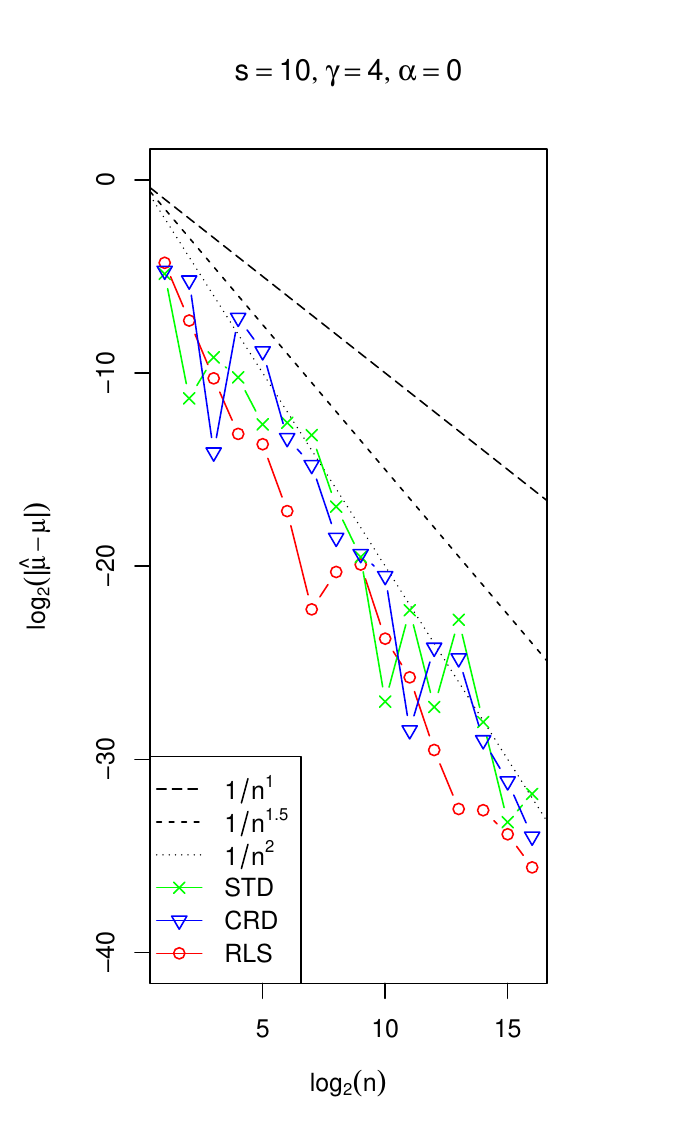} &
    \includegraphics[width=0.3\textwidth]{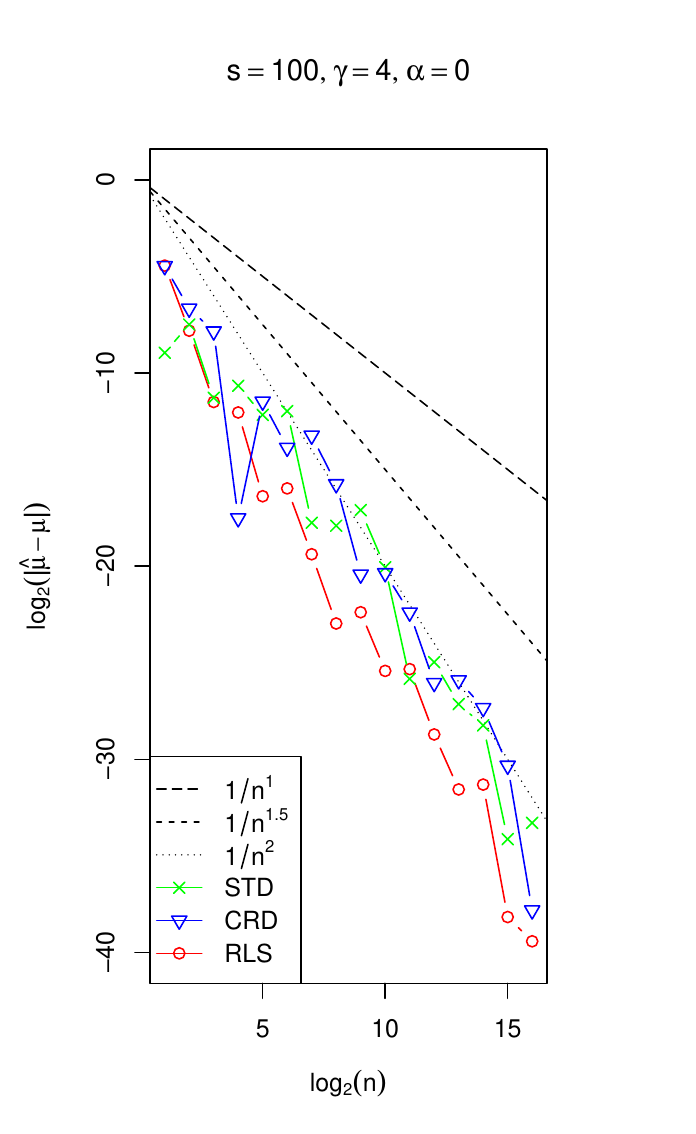} &
    \includegraphics[width=0.3\textwidth]{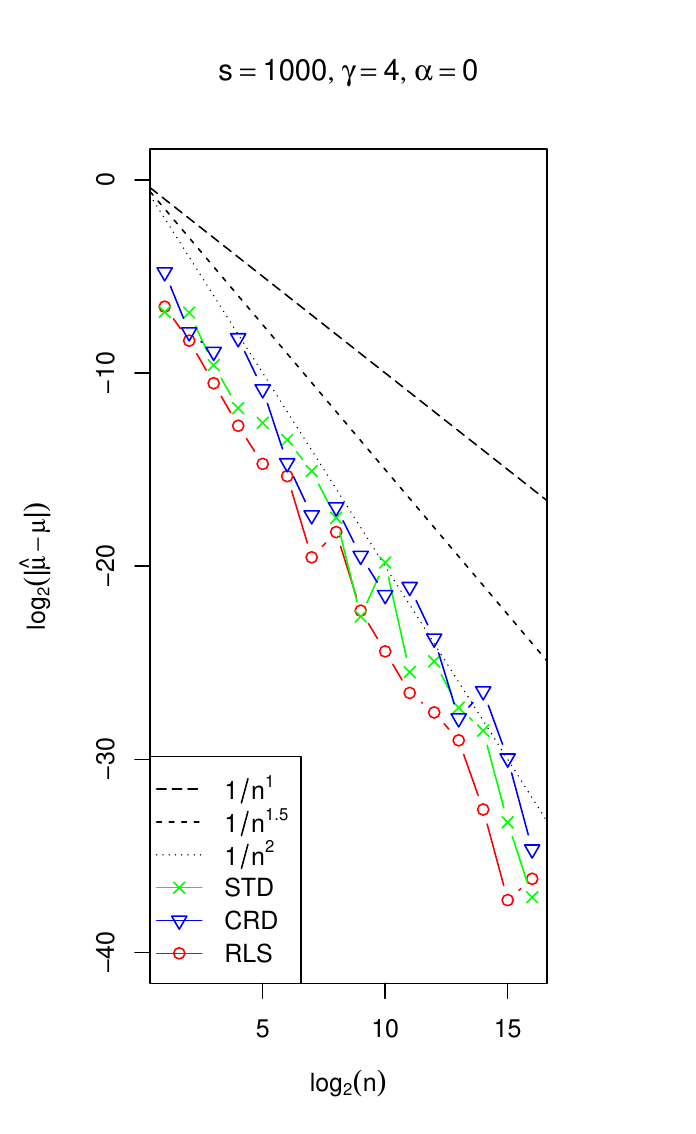}
  \end{tabularx}
\caption{Integration errors for $f_{s,\gamma,\alpha}$ with $\alpha=0$.}
\label{fig:alpha0}
\end{figure}

\begin{figure}[htb]
  \setkeys{Gin}{width=\linewidth}
  \setlength\tabcolsep{2pt}
  \begin{tabularx}{\textwidth}{XXX}
    \includegraphics[width=0.3\textwidth]{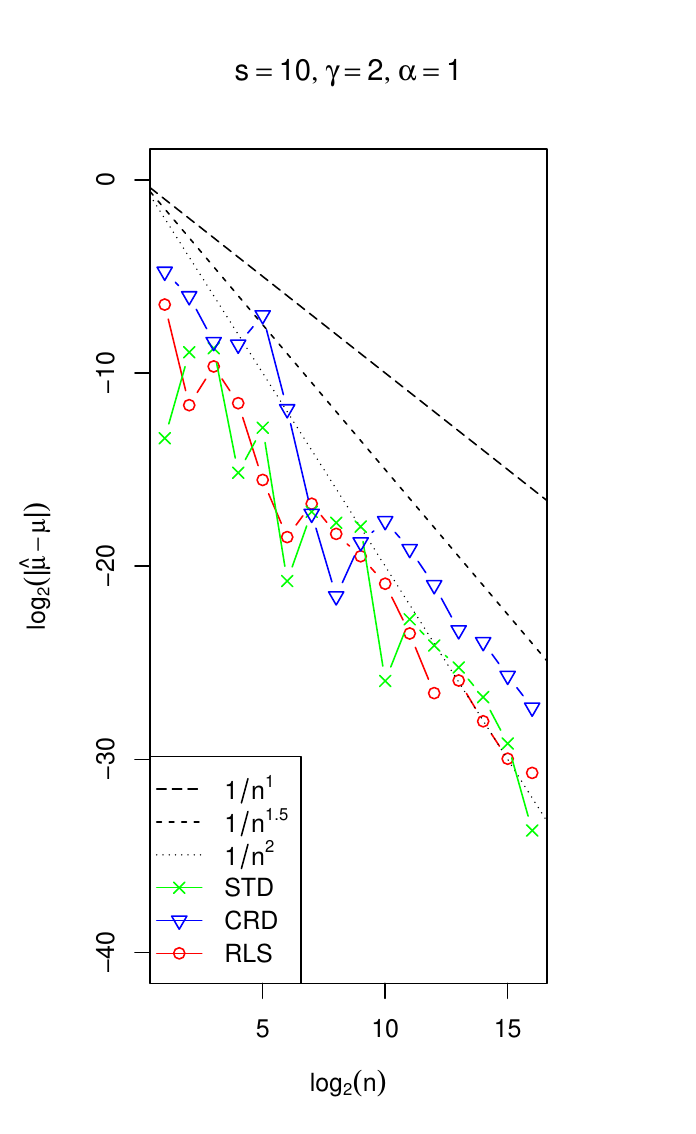} &
    \includegraphics[width=0.3\textwidth]{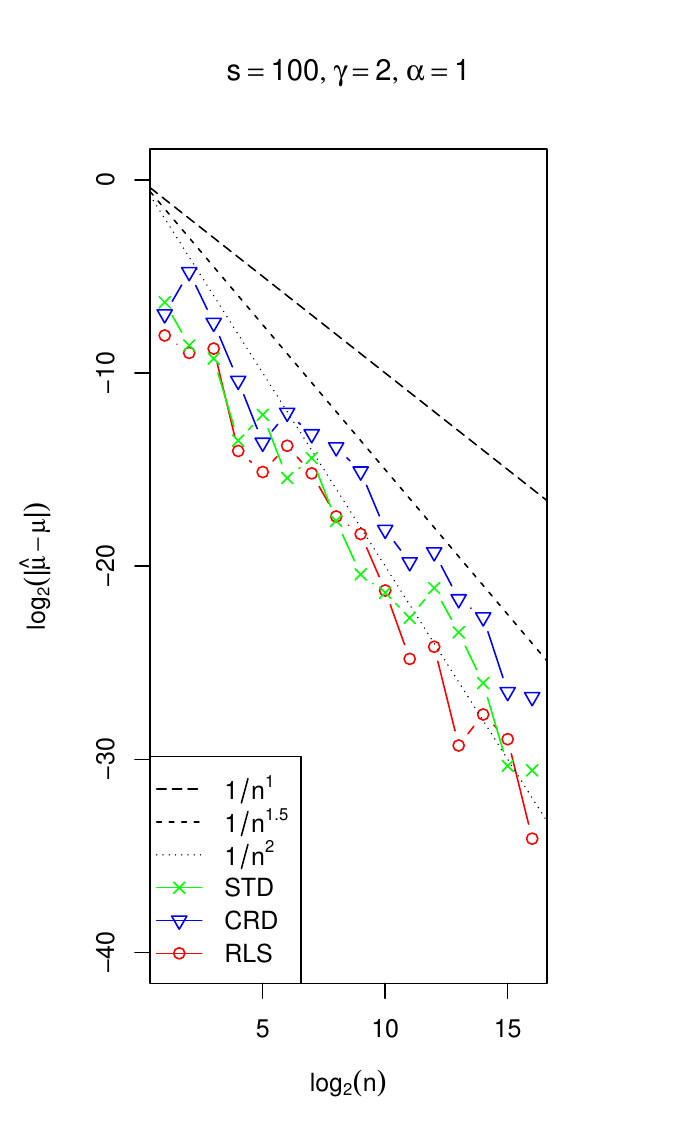} &
    \includegraphics[width=0.3\textwidth]{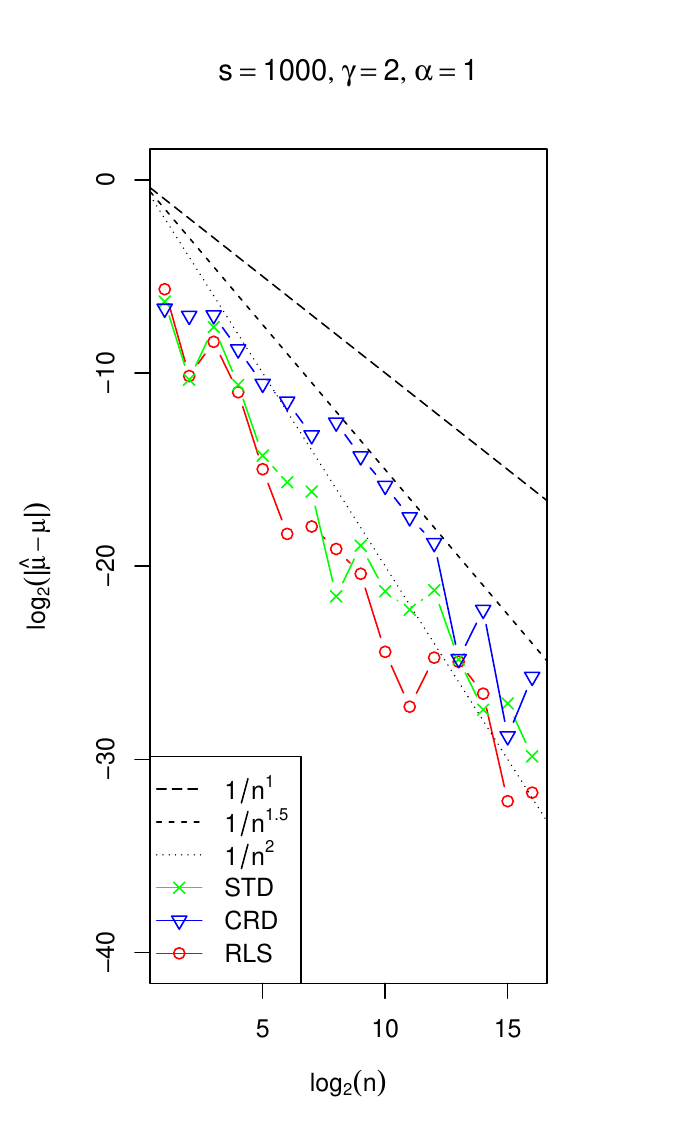}   \\
    \includegraphics[width=0.3\textwidth]{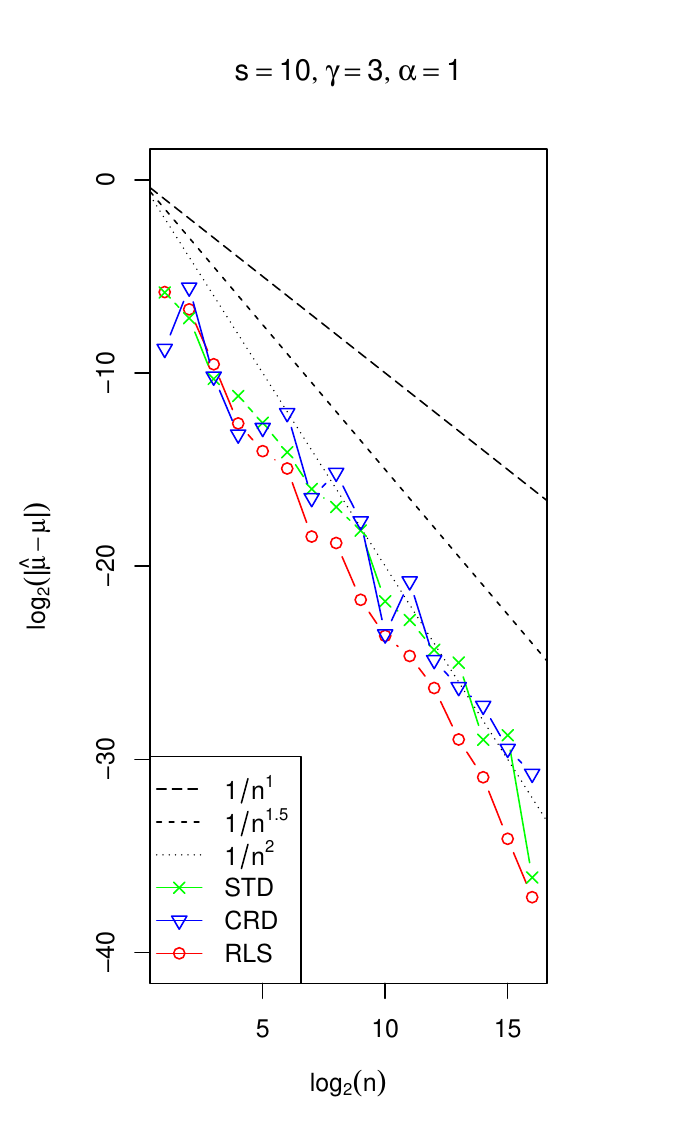} &
    \includegraphics[width=0.3\textwidth]{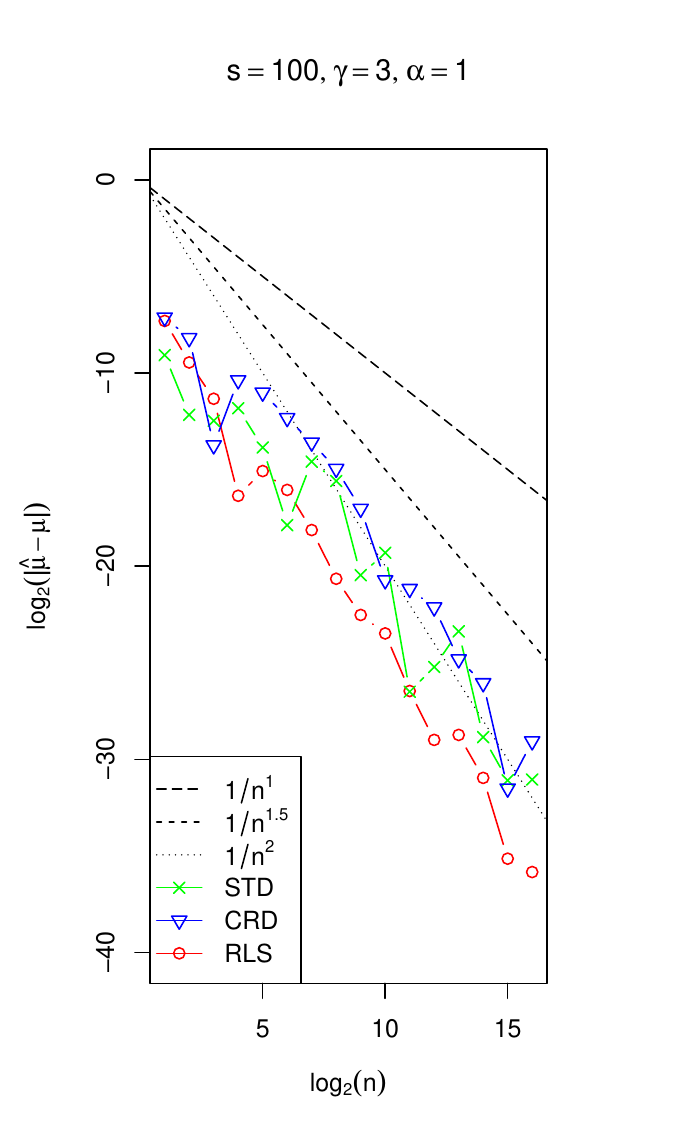} &
    \includegraphics[width=0.3\textwidth]{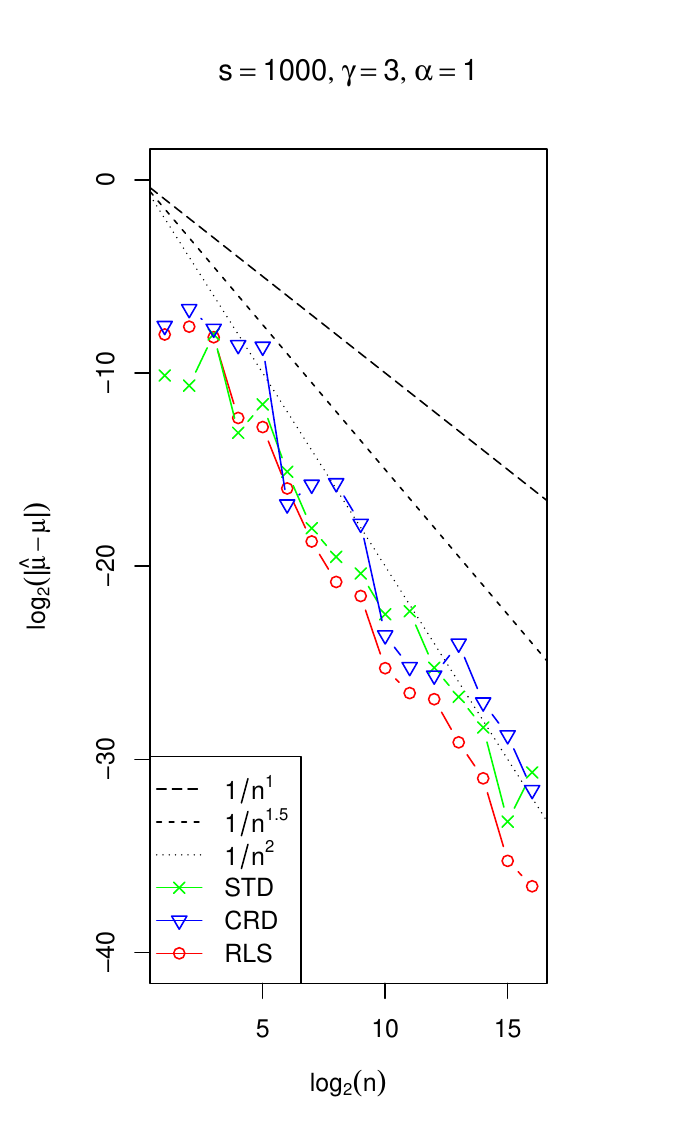}   \\
    \includegraphics[width=0.3\textwidth]{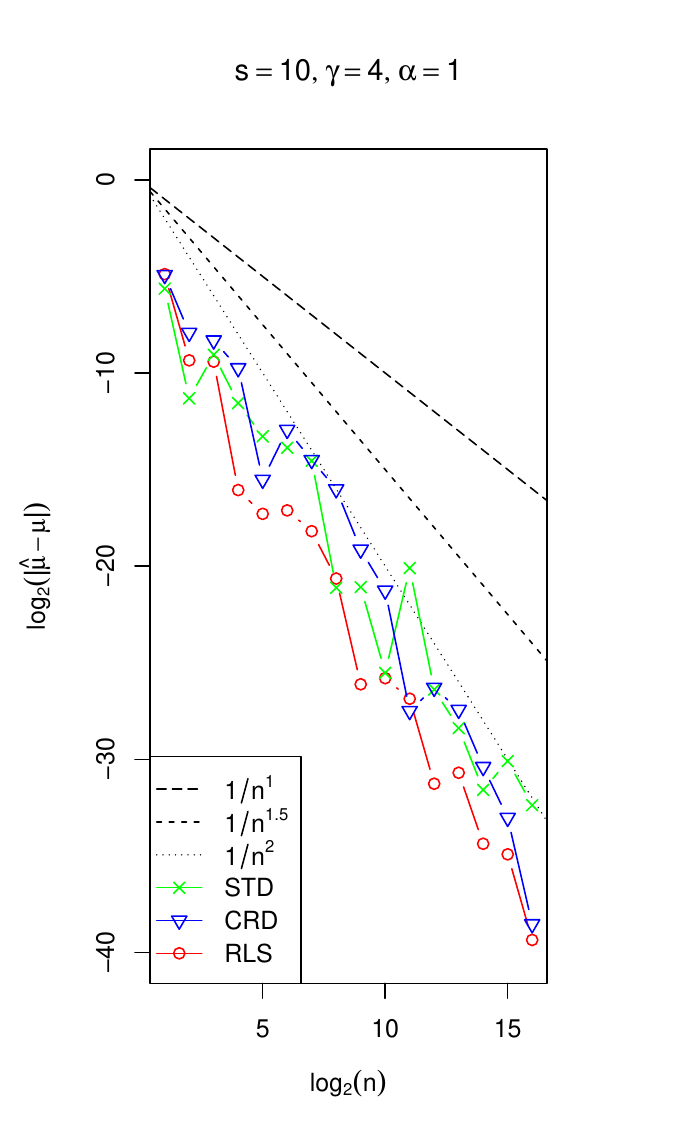} &
    \includegraphics[width=0.3\textwidth]{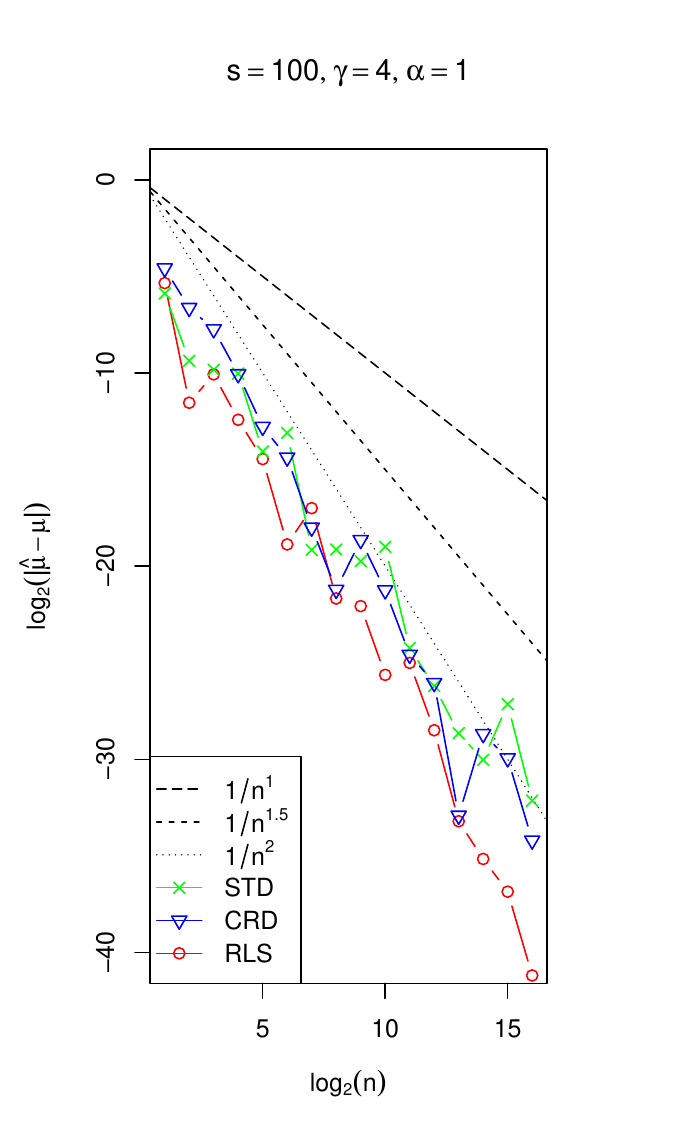} &
    \includegraphics[width=0.3\textwidth]{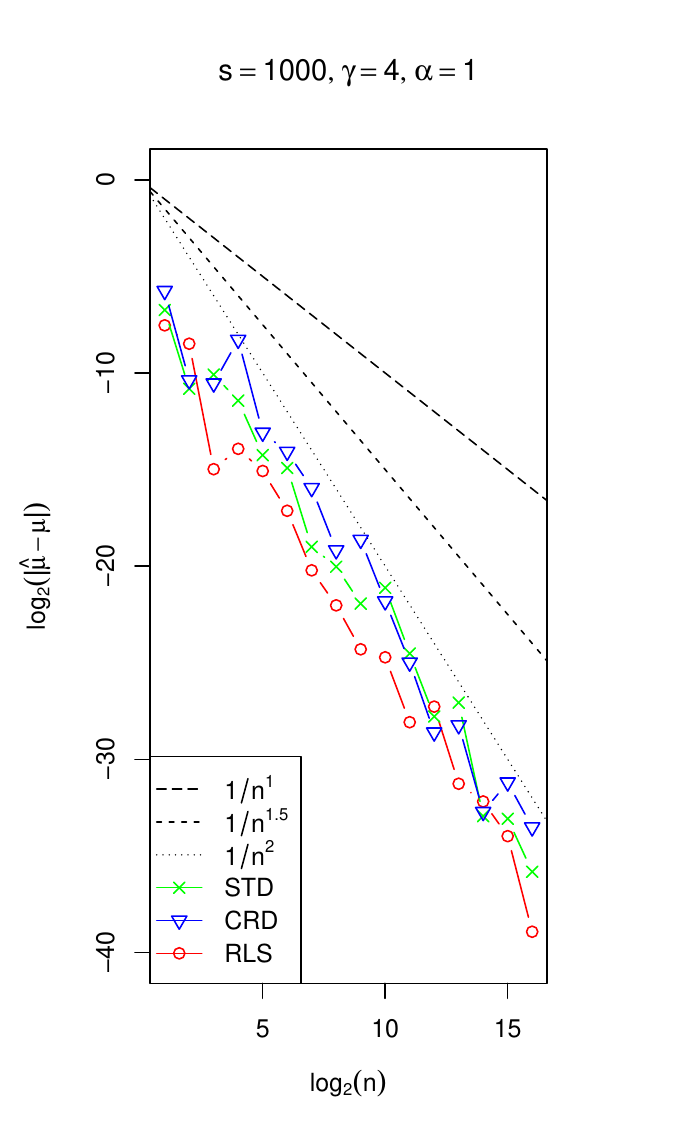}
  \end{tabularx}
\caption{Integration errors for $f_{s,\gamma,\alpha}$ with $\alpha=1$.}
\label{fig:alpha1}
\end{figure}

Figure~\ref{fig:alpha0} and Figure~\ref{fig:alpha1} summarize the simulation results. First, we observe that errors for $\alpha=0$ and $\alpha=1$ exhibit nearly identical convergence rates. Surprisingly, for $\alpha=0$ and $\gamma=3$ or $4$, the empirical rate approaches $O(2^{-2m})$, despite the theoretical bound $O(2^{-(\alpha+\lambda+1/2)m})=O(2^{-(3/2)m})$. This phenomenon arises because $f_{s,\gamma,0}$ also has a finite $\Vert \cdot\Vert_{1,1}$ norm. By equation~\eqref{eqn:fuexample}, the ANOVA component $f_u$ of $f_{s,\gamma,0}$ equals $(c_0/c_1)^{|u|}$ times that of $f_{s,\gamma,1}$. Consequently,
$$\Vert f_{s,\gamma,0}\Vert_{u,1,1}=\Big(\frac{c_0}{c_1}\Big)^{|u|}\Vert f_{s,\gamma,1}\Vert_{u,1,1}=\Big(\frac{c_0}{c_1}\Big)^{|u|}\prod_{j\in u}j^{-\gamma}$$
and 
$$\Vert f_{s,\gamma,0}\Vert_{1,1}=\sup_{u\subseteq 1{:}s} \Vert f_{s,\gamma,0}\Vert_{u,1,1}=\sup_{u\subseteq 1{:}s} \prod_{j\in u}\Big(\frac{c_0}{c_1}j^{-\gamma}\Big)<\infty$$
as $j^{-\gamma}> c_1/c_0$ holds for only finitely many $j\in \natu$. Furthermore, because $\Vert f_{s,\gamma,0}\Vert_{1,1}\geq 1$, the relative variation $\gamma'_u$ of $f_{s,\gamma,0}$ under the $\Vert \cdot\Vert_{1,1}$ norm satisfies
$$\gamma'_u=\frac{\Vert f_{s,\gamma,0}\Vert_{u,1,1}}{\Vert f_{s,\gamma,0}\Vert_{1,1}}\leq \prod_{j\in u}\Big(\frac{c_0}{c_1}j^{-\gamma}\Big)=\prod_{j\in u}\gamma'_j$$
for $\gamma'_j=(c_0/c_1)j^{-\gamma}$. In general, this analysis extends to arbitrary norms $\Vert \cdot\Vert_{\alpha',\lambda'}$ with $\alpha'\in \natu_0,\lambda'\in (0,1]$, and the corresponding $\gamma_j$ is always proportional to $j^{-\gamma}$. Thus, the decay rate $\gamma$ dominates convergence behavior as  $s\to\infty$, irrespective of $\alpha$. 

Next, we observe that for a fixed value of $\gamma$, the integration errors exhibit similar behavior as $s$ varies, aligning with our analysis of dimension-independent convergence. Applying equation~\eqref{eqn:tractability} with parameters $d=1$, $p=\gamma$ and $q=1+\eta$ for an arbitrarily small $\eta>0$,  our theory predicts that for large $s$, $\hat{\mu}^{(r)}_\infty$ under RLS achieve near $O(2^{-(\alpha+\lambda+1/2)m})$ convergence rates, provided $\alpha+\lambda+1/2<(\gamma+1)/2$. Empirically, the observed rates approximate $O(2^{-(3/2)m})$ when $\gamma=2$ and $O(2^{-2m})$ when $\gamma=3$. For $\gamma=4$, the convergence rates slightly exceed $O(2^{-2m})$ but remain significantly slower than $O(2^{-(5/2)m})$. This discrepancy is likely due to the limited range of $m$ tested. Notably, while our theory suggests that $\hat{\mu}^{(r)}_\infty$ under CRD can attain near $O(2^{-(\alpha+\lambda+1/2)m})$ rates for $\alpha+\lambda+1/2<\gamma$, practical results show that CRD consistently underperforms RLS. We hypothesize that the superior theoretical rates for CRD may only manifest at substantially larger $m$.

Finally, we compare the integration errors under RLS and STD. Across the vast majority of test cases, RLS demonstrates superior performance over STD, reinforcing that the median RQMC approach (via RLS) is a more robust alternative to traditional RQMC (via STD). We also note that STD exhibits convergence rates comparable to those of RLS and substantially surpasses the theoretical $O(2^{-(3/2)m})$  rate derived from the root mean squared error when $\gamma=3$ or $4$. This discrepancy arises because each estimator $\hat{\mu}_\infty$ in STD is a special case of $\hat{\mu}^{(r)}_\infty$ with $r=1$, which satisfies the rapidly converging probabilistic error bounds in Theorems~\ref{thm:larged} and \ref{thm:smalld}. However, averaging multiple replicates in STD compromises these bounds, as even a single outlier among the replicates can dominate the error and inflate its magnitude. This is the key mechanism that motivates  our shift toward median RQMC approaches.

\section{Discussion}\label{sec:disc}

Our current analysis presumes fixed constants $\alpha$ and $\lambda$ for which $\Vert f\Vert_{\alpha,\lambda}<\infty$. In practice, integrands are expected to satisfy $\Vert f\Vert_{\alpha,\lambda}<\infty$ across a spectrum of $\alpha$ and $\lambda$ values, as demonstrated by $f_{s,\gamma,\alpha}$ in Section~\ref{sec:exper}.  Crucially, since median RQMC methods do not explicitly incorporate smoothness information, the resulting error bound achieves the minimum over all viable $\alpha$ and $\lambda$. A more nuanced analysis should consequently incorporate variable smoothness parameters and derive error bounds under the optimal ones. Developing a framework for such an analysis is an open problem we leave for future research.

Another unresolved challenge is how to select $d$ for our randomization. As noted in Remark~\ref{rmk:tractability3}, $\tu$ generally scales with $d$. Theorem~\ref{thm:larged} further implies that increasing $d$ beyond $ \alpha+\lambda$ does not enhance the convergence rate, suggesting the upper bound $d\leq\lceil \alpha+\lambda\rceil$. This constraint, however, offers little insights for high-dimensional integration of smooth integrands, where dimensionality—rather than smoothness—governs the convergence rate. A critical open question is how to rigorously define an effective smoothness criterion that jointly incorporates dimensionality and traditional smoothness parameters. Addressing this could yield a principled framework for selecting $d$, thereby optimizing integration performance.

A natural extension of our analysis is to integrands with discontinuities in their derivatives or function values. As discussed in Subsection~\ref{subsec:smoothness}, the requirement $f\in C^{(\alpha,\dots,\alpha)}([0,1]^s)$  can be relaxed provided that the weak derivatives satisfy equation~\eqref{eqn:exactfk}. Notably, the fractional Vitali variation framework inherently accommodates discontinuities. By developing a theory for non-smooth integrands, our results could be extended to real-world applications in finance and related fields. For instance, \cite{griebel2017anova} showcases an infinite-dimensional integration problem where all ANOVA components are smooth despite discontinuities in the integrand's derivatives, suggesting that non-smoothness may not fundamentally compromise convergence rates in high-dimensional settings.

\appendix

\section{Proof of Lemma~\ref{lem:fkbound}}\label{app3}
Let $\bsk'\in B(\bsk_u,v)$. Because $|\kappa'_j|\geq |\kappa_j|=\alpha+1$ for $j\in v$ and $|\kappa'_j|= |\kappa_j|=\alpha_j$ for $j\in u\setminus v$, we can apply Lemma~\ref{lem:exactfk} to get
\begin{align*}
  &|\hat f_u(\bsk')|^2=\\&\Big|\int_{[0,1]^{|u|}}\Big(f_u^{(\bsalpha_u)}(\bsx)\prod_{j\in u\setminus v} W_{\kappa_j}(x_j)\Big)\Big(\prod_{j\in v}  \walkappa{\kappa'_j\setminus \lceil\kappa_j\rceil_{1{:}\alpha}}(x_j)W_{\lceil\kappa_j\rceil_{1{:}\alpha}}(x_j)\Big)\rd \bsx_u\Big|^2,  
\end{align*}
Let $\kappa^+_j=\lceil\kappa_j\rceil_{1{:}\alpha}$,  $\ell_j=\lceil\kappa_j\rceil_{\alpha+1}$ and $\kappa^-_j=\kappa'_j\setminus \kappa_j$ for $j\in v$. We can rewrite the above equation as
\begin{equation*}
    |\hat f_u(\bsk')|^2=\Bigl|\int_{[0,1]^{|v|}}g(\bsx_v)\prod_{j\in v} \walkappa{\kappa^-_j}(x_j)(-1)^{\vec{x}_j(\ell_j)}W_{\kappa^+_j}(x_j)\rd \bsx_v\Bigr|^2,
\end{equation*}
where 
    $$g(\bsx_v)=I_v\Big(f_u^{(\bsalpha_u)}\prod_{j\in u\setminus v}W_{\kappa_j}\Big).$$
    In the special case $v=1{:}s$, \cite[page 14 line 14]{pan2024automatic} shows
    \begin{multline*}
       \sum_{\kappa^-_1,\dots,\kappa^-_s}\Bigl|\int_{[0,1]^{s}}g(\bsx)\prod_{j\in 1{:}s} \walkappa{\kappa^-_j}(x_j)(-1)^{\vec{x}_j(\ell_j)}W_{\kappa^+_j}(x_j)\rd \bsx\Bigr|^2 \\
\leq 2^{-s}\Bigl(V^{(s)}_\lambda(g)\Bigr)^2\Bigl(\prod_{j=1}^s 4^{-\lambda \ell_j}\prod_{\ell'\in \kappa^+_j}4^{-\ell'-1}\Bigr), 
    \end{multline*}
where the summation $\sum_{\kappa^-_1,\dots,\kappa^-_s}$ means each $\kappa^-_j$ runs over all subsets of $\natu$ satisfying $\lceil\kappa^-_j\rceil_1<\ell_j$. After mapping $v$ to $1{:}|v|$ and compactly rewriting the summation as $\sum_{\bsk'\in B(\bsk_u,v)}$, we obtain
\begin{equation}\label{eqn:fkg}
    \sum_{\bsk'\in B(\bsk_u,v)}|\hat{f}_u(\bsk')|^2\leq 2^{-|v|} \Big(V^{(|v|)}_\lambda(g)\Big)^2\Bigl(\prod_{j\in v} 4^{-\lambda \ell_j}\prod_{\ell'\in \kappa^+_j}4^{-\ell'-1}\Bigr).
\end{equation}
Letting $\rho_j(x_j)=W_{\kappa_j}(x_j)/\Vert W_{\kappa_j}\Vert_\infty$ for $j\in u\setminus v$, equation~\eqref{eqn:Wint} implies 
\begin{align*}
    V^{(|v|)}_\lambda(g)=&\Big(\prod_{j\in u\setminus v}2\prod_{\ell'\in \kappa_j}2^{-\ell'-1}\Big)V^{(|v|)}_\lambda\Big(I_u(f_u^{(\bsalpha_u)}\prod_{j\in u\setminus v}\rho_j)\Big)\\
    \leq &2^{|u|-|v|}\Big(\prod_{j\in u\setminus v}\prod_{\ell'\in \kappa_j}2^{-\ell'-1}\Big)V^{v}_\lambda(f^{(\bsalpha_u)}_u)
\end{align*}
because $\prod_{j\in u\setminus v} \rho_j\in D_{v^c}$ from equation~\eqref{eqn:Du}. Our conclusion follows once we put the above bound into equation~\eqref{eqn:fkg} and plug in
\begin{align*}
 \prod_{j\in v} 4^{-\lambda \ell_j}\prod_{\ell'\in \kappa^+_j}4^{-\ell'-1}=4^{|v|}\prod_{j\in v} 4^{(1-\lambda) \lceil\kappa_j\rceil_{\alpha+1}}\prod_{\ell'\in \kappa_j}4^{-\ell'-1}    
\end{align*}
and $\lceil\kappa_j\rceil_{\alpha+1}=0$ for $j\in u\setminus v$.

\section{Proof of Lemma~\ref{lem:fukbound}}\label{app4}
By equation~\eqref{eqn:fuWalsh}, $\hat{f}_u(\bsk)=0$ if $\supp(\bsk)\neq u$. Letting 
$$\natu^u_{\alpha}=\{(k_j,j\in u )\in \natu^{|u|}\mid |\kappa_j|\leq \alpha+1 \text{ for } j\in u\},$$
we can apply Lemma~\ref{lem:fkbound} and equation~\eqref{eqn:funorm} to get
\begin{align}\label{eqn:fukbound}
&\sum_{\bsk\in\natu_*^{s}\setminus K_u(T)} |\hat{f}_u(\bsk)|^2 \\
\leq & 4^{|u|} \Vert f\Vert^2_{u,\alpha,\lambda} \sum_{\bsk_u\in \natu^u_{\alpha}}\bsone\Big\{\bsk_u\notin \bigcup_{v\subseteq u}\mathcal{K}_{u,v}(T) \Big\} \prod_{j\in u}4^{(1-\lambda) \lceil\kappa_j\rceil_{\alpha+1}}\prod_{\ell'\in \kappa_j}4^{-\ell'-1}. \nonumber  
\end{align}
Because $\bsk_u\notin \bigcup_{v\subseteq u}\mathcal{K}_{u,v}(T)$ implies
    $$(\lambda-1)\sum_{j\in u} \lceil\kappa_j\rceil_{\alpha+1}+ \sum_{j\in u} \Vert \kappa_j\Vert >T,$$
    we have for any $\theta\in (0,1)$ and $\rho_\theta=4^{-1+\theta}$
    \begin{align*}
        &\sum_{\bsk_u\in \natu^u_{\alpha}} \bsone\Big\{\bsk_u\notin \bigcup_{v\subseteq u}\mathcal{K}_{u,v}(T) \Big\} \prod_{j\in u}4^{(1-\lambda) \lceil\kappa_j\rceil_{\alpha+1}}\prod_{\ell'\in \kappa_j}4^{-\ell'-1} \\
        \leq & \sum_{\bsk_u\in \natu^u_{\alpha}} 4^{-\theta T+\theta(\lambda-1)\sum_{j\in u} \lceil\kappa_j\rceil_{\alpha+1}+\theta\sum_{j\in u} \Vert \kappa_j\Vert} \prod_{j\in u}4^{(1-\lambda) \lceil\kappa_j\rceil_{\alpha+1}}\prod_{\ell'\in \kappa_j}4^{-\ell'-1} \\
        = & 4^{-\theta T} \sum_{\bsk_u\in \natu^u_{\alpha}} \prod_{j\in u}4^{-(1-\theta)(\lambda-1) \lceil\kappa_j\rceil_{\alpha+1}}\prod_{\ell'\in \kappa_j}4^{-(1-\theta)\ell'-1} \\
        = & 4^{-\theta T} \Big( \sum_{\alpha'=0}^\alpha 4^{-\alpha'} \sum_{\kappa\in\mathcal{S}_{\alpha'}(\natu)}\prod_{\ell'\in \kappa}\rho_\theta^{\ell'} \quad + \quad4^{-\alpha-1}\sum_{\kappa\in \mathcal{S}_{\alpha+1}(\natu)}\rho_\theta^{(\lambda-1) \lceil\kappa\rceil_{\alpha+1}}\prod_{\ell'\in \kappa}\rho_\theta^{\ell'} \Big)^{|u|},
    \end{align*}
    where $\mathcal{S}_{\alpha'}(\natu)=\{ \kappa\subseteq \natu\mid |\kappa|=\alpha'\} $. To bound the above sum, consider the generating function 
    $$g(t)=\Big(\sum_{N=1}^\infty t^N\Big)^{\alpha'},$$
    and let $[t^N]g(t)$ denote the coefficient of $t^N$ in $g(t)$ (see \cite{flaj:sedg:2009} for background knowledge on generating functions). It follows that $[t^N]g(t)$ gives the number of ways to choose an ordered array of length $\alpha'$ from $\natu$ while ensuring the elements in the array sum to $N$.
    Because each $\kappa\in \mathcal{S}_{\alpha'}(\natu)$ is repeated $(\alpha')!$ times in the above counting, we have 
    \begin{equation}\label{eqn:generatingfun}
     \sum_{\kappa\in\mathcal{S}_{\alpha'}(\natu)}\prod_{\ell'\in \kappa}t^{\ell'}\leq \frac{1}{(\alpha')!}\Big(\sum_{N=1}^\infty t^N\Big)^{\alpha'}=\frac{1}{(\alpha')!(t^{-1}-1)^{\alpha'}}. 
    \end{equation}
    with $t=\rho_\theta$. For $\kappa\in \mathcal{S}_{\alpha+1}(\natu)$, we let $\kappa^+=\lceil\kappa\rceil_{1{:}\alpha}$ and $\ell=\lceil\kappa\rceil_{\alpha+1}$. By the one-to-one correspondence between $\kappa^+$ and $\kappa^+-\ell=\{\ell'-\ell\mid \ell'\in \kappa\}\in \mathcal{S}_{\alpha}(\natu)$,
    \begin{align*}
        \sum_{\kappa\in \mathcal{S}_{\alpha+1}(\natu)}\rho_\theta^{(\lambda-1) \lceil\kappa\rceil_{\alpha+1}}\prod_{\ell'\in \kappa}\rho_\theta^{\ell'}
    =&\sum_{\ell=1}^\infty\rho_\theta^{(\alpha+\lambda) \ell}\sum_{\kappa^+-\ell\in\mathcal{S}_{\alpha}(\natu)}\prod_{\ell'\in \kappa^+-\ell}\rho_\theta^{\ell'} \\
    =& \sum_{\ell=1}^\infty\rho_\theta^{(\alpha+\lambda) \ell} \frac{1}{\alpha!(\rho_\theta^{-1}-1)^{\alpha}}\\
    =& \frac{1}{\alpha!(\rho^{-\alpha-\lambda}_\theta-1)(\rho_\theta^{-1}-1)^{\alpha}}.
    \end{align*}
    Our conclusion follows once we put the above bounds into equation~\eqref{eqn:fukbound}.

\section{Proof of Lemma~\ref{lem:KuTbound}}\label{app1}
The conclusion is trivial when $T<0$, so we assume $T\geq 0$. As in the proof Lemma~\ref{lem:fkbound}, for each $\bsk_u\in \mathcal{K}_{u,v}(T)$ we write $\kappa^+_j=\lceil\kappa_j\rceil_{1{:}\alpha}$ and  $\ell_j=\lceil\kappa_j\rceil_{\alpha+1}$ for $j\in v$. By equation~\eqref{eqn:Bku}, each $\bsk'=(k'_1,\dots,k'_s)\in B(\bsk_u,v)$ must have the same $\kappa'_j$ for $j\in 1{:}s\setminus v$. For $j\in v$, however, $\kappa'_j\setminus \kappa_j$ can be any subset of $1{:}(\ell_j-1)$ if $\ell_j\geq 2$ and $\kappa'_j=\kappa_j$ if $\ell_j=1$. Thus, $|B(\bsk_u,v)|=\prod_{j\in v} 2^{\ell_j-1}$. Because $B(\bsk_u,v)$  are mutually disjoint for distinct $ \bsk_u\in \mathcal{K}_{u,v}(T)$, 
\begin{equation}\label{eqn:KuT}
    |K_u(T)|=\sum_{v\subseteq u}\sum_{\bsk_u\in \mathcal{K}_{u,v}(T)}\prod_{j\in v} 2^{\ell_j-1}.
\end{equation}
Next, we notice
$$(\lambda-1)\sum_{j\in u} \lceil\kappa_j\rceil_{\alpha+1}+ \sum_{j\in u}\Vert\kappa_j\Vert=(\alpha+\lambda)\sum_{j\in v}\ell_j+\sum_{j\in v}\sum_{\ell'\in\kappa^+_j}(\ell'-\ell_j)+\sum_{j\in u\setminus v}\Vert\kappa_j\Vert$$
and there is a one-to-one correspondence between $\kappa^+_j$ and $\kappa^+_j-\ell_j=\{\ell'-\ell_j\mid \ell'\in \kappa^+_j\}$ for $j\in v$. Furthermore, $\bsk_u\in \ck_{u,v}(T)$ implies
$$\sum_{j\in v}\sum_{\ell'\in\kappa^+_j}(\ell'-\ell_j)+\sum_{j\in u\setminus v}\Vert\kappa_j\Vert=\sum_{j\in v}\Vert\kappa^+_j-\ell_j\Vert+\sum_{j\in u\setminus v}\Vert\kappa_j\Vert\leq T-(\alpha+\lambda)\Vert \bsell_v\Vert_1.$$
For $\bsalpha_u=(\alpha_j,j\in u)\in \ints_{\le \alpha}^{|u|}$, let
$$\tildeK_{u}(N,\bsalpha_u)=\Big\{(k_j,j\in u )\in \natu_0^{|u|}\Big\vert |\kappa_j|=\alpha_j \ \forall j\in u,\sum_{j\in u}\Vert\kappa_j\Vert= N\Big\}.$$
We can decompose $\mathcal{K}_{u,v}(T)$ into subsets with the same $\bsell_v\in \natu^{|v|}$ and $(\kappa^+_j-\ell_j, j\in v; \kappa_j, j\in u\setminus v) \in \tildeK_u(N,\bsalpha_u)$ for $N\leq T-(\alpha+\lambda)\Vert\bsell_v\Vert_1$, and rewrite equation~\eqref{eqn:KuT} as
\begin{align}\label{eqn:KuT2}
    |K_u(T)|=&\sum_{v\subseteq u}\sum_{\substack{\bsalpha_u\in \ints_{\le \alpha}^{|u|}\\ \alpha_j=\alpha \ \forall j\in v,\\ \alpha_j>0 \ \forall j\in u\setminus v  }}\sum_{\substack{\bsell_v\in \natu^{|v|} } } \sum_{\substack{N\in \natu_0\\ N\leq T-(\alpha+\lambda)\Vert\bsell_v\Vert_1}}|\tildeK_u(N,\bsalpha_u)|\prod_{j\in v} 2^{\ell_j-1}\\
    =&\sum_{\substack{\bsalpha_u\in \ints_{\le \alpha}^{|u|} \\ \alpha_j>0 \ \forall j\in u} }\sum_{ N=0}^{\lfloor T\rfloor}|\tildeK_u(N,\bsalpha_u)|+\nonumber\\
    &\sum_{\substack{v\subseteq u\\ v\neq \emptyset}} 2^{-|v|}\sum_{\substack{\bsalpha_u\in \ints_{\le \alpha}^{|u|}\\ \alpha_j=\alpha \ \forall j\in v,\\ \alpha_j>0 \ \forall j\in u\setminus v  }}\sum_{ N=0}^{\lfloor T\rfloor}|\tildeK_u(N,\bsalpha_u)|\sum_{\substack{\bsell_v\in \natu^{|v|} \\ \Vert\bsell_v\Vert_1\leq (T-N)/(\alpha+\lambda)} }\hspace{-1em}2^{\Vert\bsell_v\Vert_1} \nonumber .
\end{align}
    Let $\overline{N}=\lfloor (T-N)/(\alpha+\lambda)\rfloor$. For nonempty $v\subseteq u$ and integer $n\geq |v|$, there are ${n-1\choose |v|-1}$ number of $\bsell_v\in \natu^{|v|}$ satisfying $\Vert\bsell_v\Vert_1=n$. Hence, when $\overline{N}\geq |v|$,
\begin{align}\label{eqn:sumoverl}
  \sum_{\substack{\bsell_v\in \natu^{|v|} \\ \Vert\bsell_v\Vert_1\leq (T-N)/(\alpha+\lambda)} } \hspace{-1em}2^{\Vert\bsell_v\Vert_1}=&\sum_{n=|v|}^{\overline{N} }{n-1\choose |v|-1} 2^n\leq \sum_{n=|v|}^{\overline{N} }{\overline{N}-1\choose |v|-1} 2^n \leq  {\overline{N}-1\choose |v|-1} 2^{\overline{N}+1}.
\end{align}
When $\overline{N}<|v|$, no $\bsell_v\in \natu^{|v|}$ satisfies $\Vert\bsell_v\Vert_1\leq (T-N)/(\alpha+\lambda)$ and the above sum is trivially $0$. Putting the above bound into equation~\eqref{eqn:KuT2}, we get
\begin{align}\label{eqn:KuT3}
    &|K_u(T)|-\sum_{\substack{\bsalpha_u\in \ints_{\le \alpha}^{|u|} \\ \alpha_j>0 \ \forall j\in u} }\sum_{ N=0}^{\lfloor T\rfloor}|\tildeK_u(N,\bsalpha_u)|\\
    \leq &\sum_{\substack{v\subseteq u\\ v\neq \emptyset}} 2^{-|v|}\sum_{\substack{\bsalpha_u\in \ints_{\le \alpha}^{|u|}\\ \alpha_j=\alpha \ \forall j\in v,\\ \alpha_j>0 \ \forall j\in u\setminus v  }}\sum_{N=0}^{\lfloor T\rfloor}|\tildeK_u(N,\bsalpha_u)|  {\lfloor (T-N)/(\alpha+\lambda)\rfloor \choose |v|-1} 2^{(T-N)/(\alpha+\lambda)+1}\nonumber\\
    \leq &2^{T/(\alpha+\lambda)}\sum_{\substack{v\subseteq u\\ v\neq \emptyset}} 2^{-|v|+1}{\lfloor T/(\alpha+\lambda)\rfloor \choose |v|-1} \sum_{\substack{\bsalpha_u\in \ints_{\le \alpha}^{|u|}\\ \alpha_j=\alpha \ \forall j\in v,\\ \alpha_j>0 \ \forall j\in u\setminus v  }}\sum_{N=0}^{\infty}|\tildeK_u(N,\bsalpha_u)|  2^{-N/(\alpha+\lambda)}.\nonumber
\end{align}
 To bound the sum over $N$, consider the generating function
  $$g(t)=\prod_{j\in u} \Big(\sum_{N=1}^\infty t^N\Big)^{\alpha_j}.$$
  Because each element of $\tildeK_u(N,\bsalpha_u)$ is counted $\prod_{j\in u}(\alpha_j)!$ times by $t^N[g(t)]$ , an argument similar to equation~\eqref{eqn:generatingfun} gives
 \begin{equation}\label{eqn:KutN}
   \sum_{N=0}^{\infty}|\tildeK_u(N,\bsalpha_u)|  t^N\leq \prod_{j\in u} \frac{1}{(\alpha_j)!}\Big(\sum_{N=1}^\infty t^N\Big)^{\alpha_j}=\prod_{j\in u} \frac{1}{(\alpha_j)!(t^{-1}-1)^{\alpha_j}}  
 \end{equation}
 for $t=2^{-1/(\alpha+\lambda)}$ and
 \begin{align*}
     \sum_{\substack{\bsalpha_u\in \ints_{\le \alpha}^{|u|}\\ \alpha_j=\alpha \ \forall j\in v,\\ \alpha_j>0 \ \forall j\in u\setminus v  }}\sum_{N=0}^{\infty}|\tildeK_u(N,\bsalpha_u)|  2^{-N/(\alpha+\lambda)} 
     \leq & \sum_{\substack{\bsalpha_u\in \ints_{\le \alpha}^{|u|}\\ \alpha_j=\alpha \ \forall j\in v,\\ \alpha_j>0 \ \forall j\in u\setminus v  }}\prod_{j\in u} \frac{1}{(\alpha_j)!(2^{1/(\alpha+\lambda)}-1)^{\alpha_j}} \\
     = &A_{\alpha,\lambda}^{|v|}B_{\alpha,\lambda}^{|u|-|v|},
 \end{align*}
 where $A_{\alpha,\lambda},B_{\alpha,\lambda}$ are defined in equation~\eqref{eqn:ABconstant}.
 By a similar calculation,
 \begin{align*}
     \sum_{\substack{\bsalpha_u\in \ints_{\le \alpha}^{|u|} \\ \alpha_j>0\  \forall j\in u} }\sum_{ N=0}^{\lfloor T\rfloor}|\tildeK_u(N,\bsalpha_u)| \leq & 2^{T/(\alpha+\lambda)}\sum_{\substack{\bsalpha_u\in \ints_{\le \alpha}^{|u|} \\ \alpha_j>0\  \forall j\in u} }\sum_{ N=0}^{\infty} |\tildeK_u(N,\bsalpha_u)|2^{-N/(\alpha+\lambda)} \\
     \leq  &  2^{T/(\alpha+\lambda)} B^{|u|}_{\alpha,\lambda}.
 \end{align*}
 Putting the above bounds into equation~\eqref{eqn:KuT3}, we finally get
 \begin{align}\label{eqn:KuT4}
     |K_u(T)| \leq & 2^{T/(\alpha+\lambda)} \Big(B^{|u|}_{\alpha,\lambda}+\sum_{\substack{v\subseteq u\\ v\neq \emptyset}} 2^{-|v|+1}{\lfloor T/(\alpha+\lambda)\rfloor \choose |v|-1}  A_{\alpha,\lambda}^{|v|}B^{|u|-|v|}_{\alpha,\lambda}\Big) \\
     \leq &2^{T/(\alpha+\lambda)} \Big(B^{|u|}_{\alpha,\lambda}+\sum_{\substack{v\subseteq u\\ v\neq \emptyset}}  \max\big(T/(\alpha+\lambda),1\big)^{|v|} A_{\alpha,\lambda}^{|v|}B_{\alpha,\lambda}^{|u|-|v|}\Big)\nonumber\\
     =&2^{T/(\alpha+\lambda)} \Big(B_{\alpha,\lambda}+\max\big(T/(\alpha+\lambda),1\big) A_{\alpha,\lambda}\Big)^{|u|}. \nonumber 
 \end{align}

\section{Proof of Lemma~\ref{lem:KuTbound2}}\label{app2}

 We adopt the notation used in Appendix~\ref{app1}. For $\bsk_u\in \mathcal{K}_{u,v}(T)$, we again consider the one-to-one correspondence between $\kappa^+_j$ and $\kappa^+_j-\ell_j$ for $j\in v$. Because $d\in \natu$ and $d<\alpha+\lambda$, we must have $d\leq \alpha$.
    By $|\kappa^+_j|=\alpha\geq d$  for $j\in v$,
    $$\sum_{j\in v}\Vert\kappa^+_j-\ell_j\Vert_{(d)}=\sum_{j\in v}\Vert\kappa^+_j\Vert_{(d)}-d\ell_j=\Big(\sum_{j\in v}\Vert\kappa_j\Vert_{(d)}\Big)-d\Vert\bsell_v\Vert_1$$
    and 
    $$\sum_{j\in v}\Vert\kappa^+_j-\ell_j\Vert_{(d)}+\sum_{j\in u\setminus v}\Vert\kappa_j\Vert_{(d)}=\Big(\sum_{j\in u}\Vert\kappa_j\Vert_{(d)}\Big)-d\Vert\bsell_v\Vert_1.$$
    Let
    $$\tildeK_{u,d}(N,N',\bsalpha_u)=\Big\{(k_j,j\in u )\in \tildeK_{u}(N,\bsalpha_u)\Big\vert \sum_{j\in u}\Vert\kappa_j\Vert_{(d)}=N'\Big\}.$$
    Similar to equation~\eqref{eqn:KuT2}, we can write
    \begin{align}\label{eqn:KuKu}
       &|K_u(T)\cap K'_u(T')|\\
       =&\sum_{v\subseteq u} \sum_{\substack{\bsalpha_u\in \ints_{\le \alpha}^{|u|}\\ \alpha_j=\alpha \ \forall j\in v,\\ \alpha_j>0 \ \forall j\in u\setminus v  }}\sum_{\substack{\bsell_v\in \natu^{|v|} } } \sum_{\substack{N\in \natu_0,N'\in \natu_0\\T'-d\Vert\bsell_v\Vert_1<N'\leq N\leq T-(\alpha+\lambda)\Vert\bsell_v\Vert_1}}|\tildeK_{u,d}(N,N',\bsalpha_u)|\prod_{j\in v} 2^{\ell_j-1} \nonumber\\
       =& \sum_{\substack{\bsalpha_u\in \ints_{\le \alpha}^{|u|} \\ \alpha_j>0 \ \forall j\in u} }\sum_{\substack{N\in \natu_0,N'\in \natu_0\\T'<N'\leq N\leq T}}|\tildeK_{u,d}(N,N',\bsalpha_u)|+\nonumber\\
       &\sum_{\substack{v\subseteq u\\ v\neq \emptyset}}2^{-|v|} \sum_{\substack{\bsalpha_u\in \ints_{\le \alpha}^{|u|}\\ \alpha_j=\alpha \ \forall j\in v,\\ \alpha_j>0 \ \forall j\in u\setminus v  }} \sum_{\substack{N\in \natu_0,N'\in \natu_0\\N'\leq N\leq T}}|\tildeK_{u,d}(N,N',\bsalpha_u)|\sum_{\substack{\bsell_v\in \natu^{|v|}\\ \Vert\bsell_v\Vert_1>(T'-N')/d, \\ \Vert\bsell_v\Vert_1\leq (T-N)/(\alpha+\lambda) } } 2^{\Vert\bsell_v\Vert_1}.\nonumber
    \end{align}
    The sum over $\bsell_v$ is empty if $(T'-N')/d\geq (T-N)/(\alpha+\lambda) $. Because $T'/d\geq T/(\alpha+\lambda)$ and $N\geq N'$, we further infer the sum is empty unless $N> N_*$ for
    $$N_*=\frac{(\alpha+\lambda)T'-dT}{\alpha+\lambda-d}.$$
    Therefore,
    \begin{align*}
        &\sum_{\substack{N\in \natu_0,N'\in \natu_0\\N'\leq N\leq T}}|\tildeK_{u,d}(N,N',\bsalpha_u)|\sum_{\substack{\bsell_v\in \natu^{|v|}\\ \Vert\bsell_v\Vert_1>(T'-N')/d, \\ \Vert\bsell_v\Vert_1\leq (T-N)/(\alpha+\lambda) } } 2^{\Vert\bsell_v\Vert_1}\\
        \leq & \sum_{\substack{N\in \natu_0\\ N_*<N\leq T}}\sum_{N'=0}^N |\tildeK_{u,d}(N,N',\bsalpha_u)| \sum_{\substack{\bsell_v\in \natu^{|v|}\\ \Vert\bsell_v\Vert_1\leq (T-N)/(\alpha+\lambda) } } 2^{\Vert\bsell_v\Vert_1}\\
        \leq &\sum_{\substack{N\in \natu_0\\ N_*<N\leq T}}|\tildeK_{u}(N,\bsalpha_u)|  {\lfloor (T-N)/(\alpha+\lambda)\rfloor-1 \choose |v|-1} 2^{\lfloor(T-N)/(\alpha+\lambda)\rfloor+1},
    \end{align*}
    where we have applied equation~\eqref{eqn:sumoverl} in the last step.
    Letting $T_*=(T-N_*)/(\alpha+\lambda)$, we put the above bound into equation~\eqref{eqn:KuKu} and get
    \begin{align}\label{eqn:KuKu2}
        &|K_u(T)\cap K'_u(T')|-\sum_{\substack{\bsalpha_u\in \ints_{\le \alpha}^{|u|} \\ \alpha_j>0 \ \forall j\in u}} \sum_{\substack{N\in \natu_0,N'\in \natu_0\\T'<N'\leq N\leq T}}|\tildeK_{u,d}(N,N',\bsalpha_u)|\\
        \leq & \sum_{\substack{v\subseteq u\\ v\neq \emptyset}} 2^{-|v|}\sum_{\substack{\bsalpha_u\in \ints_{\le \alpha}^{|u|}\\ \alpha_j=\alpha \ \forall j\in v,\\ \alpha_j>0 \ \forall j\in u\setminus v  }} \sum_{\substack{N\in \natu_0\\ N_*<N\leq T}}|\tildeK_{u}(N,\bsalpha_u)|  {\lfloor (T-N)/(\alpha+\lambda)\rfloor \choose |v|-1} 2^{(T-N)/(\alpha+\lambda)+1}\nonumber\\
        \leq &2^{T_*}\sum_{\substack{v\subseteq u\\ v\neq \emptyset}} 2^{-|v|+1}{\lfloor T_*\rfloor \choose |v|-1} \sum_{\substack{\bsalpha_u\in \ints_{\le \alpha}^{|u|}\\ \alpha_j=\alpha \ \forall j\in v,\\ \alpha_j>0 \ \forall j\in u\setminus v  }} \sum_{N=0}^{\infty}|\tildeK_{u}(N,\bsalpha_u)|  2^{-N/(\alpha+\lambda)}\nonumber\\
        \leq  &2^{T_*}\Big(B_{\alpha,\lambda}+\max(T_*,1) A_{\alpha,\lambda}\Big)^{|u|},\nonumber
    \end{align}
    where the last step follows from a calculation similar to equation~\eqref{eqn:KuT3} and \eqref{eqn:KuT4}.

    Next, we notice
    \begin{equation}\label{eqn:KuNN}
     \sum_{\substack{N\in \natu_0,N'\in \natu_0\\T'<N'\leq N\leq T}} |\tildeK_{u,d}(N,N',\bsalpha_u)|\leq t^{-T}\sum_{N'=\lceil T'\rceil }^{\lfloor T\rfloor} t^{N'} \sum_{N=N'}^{\lfloor T\rfloor} t^{N-N'} |\tildeK_{u,d}(N,N',\bsalpha_u)|   
    \end{equation}
    for $t=2^{-1/(\alpha+\lambda-d)}$. To bound the sum over $N$, we let $w=\{j\in u\mid \alpha_j>d\}$ for $\bsalpha_u\in \ints_{\leq\alpha}^{|u|}$ and define $\bsalpha^-_u=(\min(\alpha_j,d),j\in u)$ and $\bsalpha^+_u=(\alpha_j-d,j\in w)$ if $w\neq\emptyset$.
    For $\bsk_u\in \tildeK_{u,d}(N,N',\bsalpha_u)$, we further let $\kappa^-_{j,d}=\lceil\kappa_j\rceil_{1{:}\min(\alpha_j,d)}$ for $j\in u$ and $\kappa^+_{j,d}=\kappa_j\setminus \lceil\kappa_j\rceil_{1{:}d}$ for $j\in w$.  Because for $\bsk_u\in \tildeK_{u,d}(N,N',\bsalpha_u)$,
    $$\sum_{j\in w}\Vert\kappa^+_{j,d}\Vert=\sum_{j\in w}\Vert\kappa_{j}\Vert-\sum_{j\in w}\Vert\kappa^-_{j,d}\Vert=\sum_{j\in u}\Vert\kappa_{j}\Vert-\sum_{j\in u}\Vert\kappa_{j}\Vert_{(d)}=N-N',$$
    the number of possible $(\kappa^+_{j,d},j\in w)$ for $\bsk_u\in \tildeK_{u,d}(N,N',\bsalpha_u)$ with a given set of $(\kappa^-_{j,d},j\in u)\in \tildeK_u(N',\bsalpha^-_u)$ is bounded by the number of $(\kappa^+_{j,d},j\in w)$ with $\sum_{j\in w}\Vert\kappa^+_{j,d}\Vert= N-N'$, which is $|\tildeK_w(N-N',\bsalpha^+_u)|$ by definition. This implies 
    $|\tildeK_{u,d}(N,N',\bsalpha_u)|\leq |\tildeK_u(N',\bsalpha^-_u)|\times |\tildeK_w(N-N',\bsalpha^+_u)|$
    and
    \begin{align*}
       \sum_{N=N'}^{\lfloor T\rfloor} t^{N-N'} |\tildeK_{u,d}(N,N',\bsalpha_u)|\leq &|\tildeK_u(N',\bsalpha^-_u)|\sum_{N=N'}^{\lfloor T\rfloor}|\tildeK_w(N-N',\bsalpha^+_u)|t^{N-N'} \\
       \leq &|\tildeK_u(N',\bsalpha^-_u)| \sum_{N=0}^\infty |\tildeK_w(N,\bsalpha^+_u)|t^{N} \\
       \leq &|\tildeK_u(N',\bsalpha^-_u)|\prod_{j\in w}  \frac{1}{(\alpha^+_j)!(t^{-1}-1)^{\alpha^+_j}}, 
    \end{align*}
where we have applied equation~\eqref{eqn:KutN} in the last step. Putting the above bound into equation~\eqref{eqn:KuNN}, we get for any $\theta'\in (0,1)$,
\begin{align*}
& \sum_{\substack{N\in \natu_0,N'\in \natu_0\\T'<N'\leq N\leq T}} |\tildeK_{u,d}(N,N',\bsalpha_u)|\\
\leq &t^{-T}\Big(\prod_{j\in w}  \frac{1}{(\alpha^+_j)!(t^{-1}-1)^{\alpha^+_j}}\Big)\sum_{N'=\lceil T'\rceil }^{\lfloor T\rfloor}  |\tildeK_u(N',\bsalpha^-_u)|t^{N'}\\
\leq & t^{-T+\theta' T'}\Big(\prod_{j\in w}  \frac{1}{(\alpha^+_j)!(t^{-1}-1)^{\alpha^+_j}}\Big)\sum_{N'=0 }^{\infty}  |\tildeK_u(N',\bsalpha^-_u)|t^{(1-\theta')N'}\\
\leq & t^{-T+\theta' T'}\Big(\prod_{j\in w}  \frac{1}{(\alpha^+_j)!(t^{-1}-1)^{\alpha^+_j}}\Big)\Big(\prod_{j\in u} \frac{1}{(\alpha^-_j)!(t^{-1+\theta'}-1)^{\alpha^-_j}}\Big)
\end{align*}
    where we have applied equation~\eqref{eqn:KutN} once more. Our conclusion finally follows from equation~\eqref{eqn:KuKu2}, $t=2^{-1/(\alpha+\lambda-d)}$ and 
    \begin{align*}
        &\sum_{\substack{\bsalpha_u\in \ints_{\le \alpha}^{|u|} \\ \alpha_j>0 \ \forall j\in u}}\Big(\prod_{j\in w}  \frac{1}{(\alpha^+_j)!(t^{-1}-1)^{\alpha^+_j}}\Big)\Big(\prod_{j\in u} \frac{1}{(\alpha^-_j)!(t^{-1+\theta'}-1)^{\alpha^-_j}}\Big)\\
       = & \sum_{w\subseteq u}\Big(\sum_{\alpha'=1}^{\alpha-d}  \frac{1}{(\alpha')!(t^{-1}-1)^{\alpha'}}\Big)^{|w|}\sum_{\substack{\bsalpha^-_u\in \ints^{|u|}_{\leq d}\\\alpha_j=d \ \forall j\in w\\ \alpha^-_j>0 \ \forall j\in u\setminus w } }\Big(\prod_{j\in u} \frac{1}{(\alpha^-_j)!(t^{-1+\theta'}-1)^{\alpha^-_j}}\Big)^{|u|} \\
       =&  \sum_{w\subseteq u}  \Big(\frac{1}{d!(t^{-1+\theta'}-1)^{d}}\sum_{\alpha'=1}^{\alpha-d}  \frac{1}{(\alpha')!(t^{-1}-1)^{\alpha'}}\Big)^{|w|} \Big(\sum_{\alpha'=1}^{d}\frac{1}{(\alpha')!(t^{-1+\theta'}-1)^{\alpha'}}\Big)^{|u|-|w|}\\
       =& D_{\alpha,\lambda,d,\theta'}^{|u|},
    \end{align*}
    where $D_{\alpha,\lambda,d,\theta'}$ is defined in equation~\eqref{eqn:constantD}.

\section*{Acknowledgments}

The author acknowledges the support of the Austrian Science Fund (FWF) 
Project DOI 10.55776/P34808. For open access purposes, the author has 
applied a CC BY public copyright license to any author accepted 
manuscript version arising from this submission.

The author thanks David Krieg and Art Owen for valuable discussions and two anonymous reviewers for many helpful comments.

\bibliographystyle{abbrv} 
% better than abbrvnat that uses author names in the citation text
\bibliography{qmc}

\end{document}